\documentclass{LMCS}

\def\doi{8 (2:04) 2012}
\lmcsheading%
{\doi}
{1--39}
{}
{}
{Oct.~26, 2011}
{Apr.~27, 2012}
{}

\usepackage{hyperref}
\usepackage{proof}

\usepackage{graphicx}

\setlength{\unitlength}{1mm}


\usepackage{amsmath}
\usepackage{amsthm}
\usepackage{amssymb}

\usepackage{ifthen}
\usepackage{graphicx}
\usepackage{fancybox}
\usepackage{boxedminipage}
\usepackage{enumerate}
\usepackage{alltt}
\usepackage{proof}
\usepackage{rotating,graphics}

\theoremstyle{plain}
\newtheorem{theorem}{Theorem}[section]
\newtheorem{lemma}[theorem]{Lemma}

\newtheorem{proposition}[theorem]{Proposition}

\theoremstyle{definition}
\newtheorem{definition}[theorem]{Definition}

\newtheorem{remark}[theorem]{Remark}

\theoremstyle{remark}

\usepackage{amsmath}
\usepackage{amstext}
\usepackage{amssymb}
\usepackage{amsfonts}
\usepackage{xspace}
\usepackage{graphicx}
\usepackage{url}
\usepackage{mdwlist}
\usepackage{courier}
\usepackage{lscape}
\usepackage{comment}
\usepackage{wrapfig}
\usepackage{multirow}
\usepackage{bussproofs}
\usepackage{pdftricks}
\begin{psinputs}
\usepackage{pstricks}
\usepackage{color}
\usepackage{pstcol}
\usepackage{pst-plot}
\usepackage{pst-tree}
\usepackage{pst-eps}
\usepackage{multido}
\usepackage{pst-node}
\usepackage{pst-eps}
\end{psinputs}
\usepackage{cite}
\makeatletter
\newif\if@restonecol
\makeatother

\usepackage[ruled,vlined]{algorithm2e}
\usepackage{footmisc}
\usepackage{balance}
\usepackage{enumerate}

\newcommand{\diff}{\ensuremath{\mathtt{diff}}\xspace}
\newcommand{\ARRAY}{\ensuremath{\mathtt{ARRAY}}\xspace}
\newcommand{\INDEX}{\ensuremath{\mathtt{INDEX}}\xspace}
\newcommand{\ELEM}{\ensuremath{\mathtt{ELEM}}\xspace}
\newcommand{\formulae}{formul\ae\xspace}

\newcommand{\AX}{\ensuremath{\mathcal{AX}}\xspace}
\newcommand{\AXEXT}{\ensuremath{\mathcal{AX}_{{\rm ext}}}\xspace}
\newcommand{\AXDIFF}{\ensuremath{\AX_{\diff}}\xspace}

\newcommand{\imp}{\Rightarrow}

\newcommand{\dpll}[1]{{\sc DPLL}\xspace}

\newcommand{\sep}{\ensuremath{\ | \ }}

\newcommand{\COMMENT}[1]{}

\newcommand{\alocal}{\ensuremath{A}-local\xspace}
\newcommand{\blocal}{\ensuremath{B}-local\xspace}

\newcommand{\abcommon}{\ensuremath{AB}-common\xspace}
\newcommand{\abpure}{\ensuremath{AB}-pure\xspace}
\newcommand{\abmixed}{\ensuremath{AB}-mixed\xspace}


\newcommand{\ua}{\ensuremath{\underline a}}

\newcommand{\ug}{\ensuremath{\underline g}}

\newcommand{\ux}{\ensuremath{\underline x}}

\newcommand{\cM}{\ensuremath \mathcal M}
\newcommand{\cN}{\ensuremath \mathcal N}

\newcommand{\uguale}{\ensuremath{=}\xspace}
\newcommand{\coincide}{\ensuremath{\equiv}\xspace}
\newcommand{\nform}{modular\xspace }



\begin{document}

\title{%
Quantifier-free Interpolation of a 
Theory of Arrays
}

\author[R.\ Bruttomesso]{Roberto Bruttomesso\rsuper a}	
\address{{\lsuper a}Dipartimento di Scienze dell'Informazione, Universit\`a degli Studi di Milano (Italy)}	
\email{bruttomesso@dsi.unimi.it}  

\author[S.\ Ghilardi]{Silvio Ghilardi\rsuper b}	
\address{{\lsuper b}Dipartimento di Matematica, Universit\`a degli Studi di Milano (Italy)}	
\email{ghilardi@dsi.unimi.it}  

\author[S.\ Ranise]{Silvio Ranise\rsuper c}	
\address{{\lsuper c}FBK-Irst, Trento (Italy)}	
\email{ranise@fbk.eu}  

\keywords{Amalgamation, Modular constraints, Interpolating metarules}
\subjclass{F.4.1, F.3.1}
\titlecomment{This paper is a substantially extended version
  of~\cite{rta11}.}

\begin{abstract}
  The use of interpolants in model checking is becoming an enabling
  technology to allow fast and robust verification of hardware and
  software. The application of encodings based on the theory of
  arrays, however, is limited by the impossibility of deriving
  quantifier-free interpolants in general.  

  In this paper, we show that it is possible to obtain quantifier-free
  interpolants for a Skolemized version of the extensional theory of
  arrays.  We prove this in two ways:
  \begin{enumerate}[(1)]
    \item non-constructively, by using the model theoretic notion of
      amalgamation, which is known to be equivalent to admit
      quantifier-free interpolation for universal theories; and
    \item constructively, by designing an interpolating procedure,
      based on solving equations between array updates.
      (Interestingly, rewriting techniques are used in the key steps
      of the solver and its proof of correctness.)
  \end{enumerate}
  To the best of our knowledge, this is the first successful attempt
  of computing quantifier-free interpolants for a variant of the
  theory of arrays with extensionality.
\end{abstract}

\maketitle

\newpage
\tableofcontents
\newpage

\section{Introduction}


Craig's interpolation theorem~\cite{Cra57} applies to first order
logic \formulae and states that whenever the sequent $A \wedge B \imp
\bot$ is valid, then it is possible
to derive a formula $I$ such that $(i)$ $A \imp I$ is valid , $(ii)$
$I \wedge B \imp \bot$ is valid, and $(iii)$ $I$ is defined over the
common symbols of $A$ and $B$.\footnote{To be precise, the original
  formulation of~\cite{Cra57} is slightly different, and it states
  that whenever $A \imp B$ is valid, then it is possible to derive an
  $I$ such that $A \imp I \imp B$ are valid, and $I$ is over the
  common symbols of $A$ and $B$.  Clearly, the two formulations are
  equivalent.}  After the seminal work of McMillan (see,
e.g.,~\cite{McM04a}), Craig's interpolation has become an important
technique in verification.  Intuitively, the interpolant $I$ can be
seen as an over-approximation of $A$ with respect to $B$.  This
observation is crucial for several applications of interpolation in
verification.  For example, the importance of computing
\emph{quantifier-free} interpolants (as several symbolic verification
procedures represent sets of states and transitions as quantifier-free
formulae) to over-approximate the set of reachable states for model
checking has been observed.  Unfortunately, Craig's interpolation
theorem does not guarantee that it is always possible to compute
quantifier-free interpolants.  Even worse, for certain first-order
theories, it is known that quantifiers must occur in interpolants of
quantifier-free formulae~\cite{KMZ06}.  As a consequence, several
papers~\cite{jhala,Kra97,McM04b,Pud97,stokkermans,YM05,KMZ06,RS07,CGS08,lynch,CGS10,BKR+10}
focused on the efficient computation of quantifier-free interpolants
for first-order theories which are relevant for verification such as
uninterpreted functions, (fragments of) Presburger arithmetic,
theories of some data-structures, and their combination.  Despite the
ongoing efforts, so far, only the negative result in~\cite{KMZ06} is
available for the computation of interpolants in the theory of arrays
with extensionality, axiomatized by the following three sentences:
\begin{eqnarray*}
  & & \forall y, i, e. rd(wr(y,i,e),i)\uguale e \\
  & & \forall y, i, j, e.  i \not\uguale j \imp rd(wr(y,i,e),j)\uguale rd(y,j) \\
  & & \forall x, y.  x \not\uguale y \imp (\exists i.\ rd(x,i)\not\uguale rd(y,i))
\end{eqnarray*}
where $rd$ and $wr$ are the usual operations for reading or updating
arrays, respectively.  For instance, there is no quantifier-free
interpolant for the pair of quantifier-free formulae
\begin{eqnarray*}
  \begin{array}{l}
  A \coincide  x\uguale wr(y,i,e) \\
  B \coincide rd(x,j)\not\uguale rd(y,j)\wedge rd(x,k)\not\uguale rd(y,k)\wedge j\not\uguale k . 
  \end{array}
\end{eqnarray*}
This theory is important for both hardware and software verification,
and a procedure for computing quantifier-free interpolants
``\emph{would extend the utility of interpolant extraction as a tool
  in the verifier's toolkit}''~\cite{McM04a}.  Indeed, the endeavour
of designing such a procedure would be bound to fail (according
to~\cite{KMZ06}) if we restrict ourselves to the original theory.  To
circumvent the problem, we add the (binary) function $\diff$ to $rd$
and $wr$.  Intuitively, $\diff(a,b)$ is an index at which the elements
stored in the arrays $a$ and $b$ are different ($\diff(a,b)$ is
defined arbitrarily in case $a$ and $b$ coincide).  Formally, this is
characterized by Skolemizing the third axiom above (also called the
extensionality axiom) to obtain
\begin{eqnarray*}
  \forall x, y.  x \not\uguale y \imp rd(x,\diff(x,y))\not\uguale rd(y,\diff(x,y))) .
\end{eqnarray*}
This axiom is sufficient to ensure that the theory of arrays with
$\diff$ admits quantifier-free interpolants for quantifier-free
formulae or, equivalently, that the quantifier-free fragment of the
theory is closed under interpolation.  For example, a
quantifier-free interpolant for $A$ and $B$ above is
\begin{eqnarray*}
  I \coincide x\uguale wr(y,\diff(x,y),rd(x,\diff(x,y))).
\end{eqnarray*}
Notice how $\diff$ permits to represent indexes in the quantifier-free
interpolant $I$ by mentioning only the array constants $a$ and $b$
that are common to $A$ and $B$.  As we will see in the rest of the
paper, this is crucial to compute quantifier-free interpolants.  One
may wonder how useful it is to be able to compute quantifier-free
interpolants in the Skolemized variant of the theory of arrays with
extensionality considered here.  The answer lies in the observation
that this variant is sufficient whenever there is a need to check the
unsatisfiability of formulae as it is the case of many applications;
one of the most important is in model checking procedures for infinite
state systems (see, e.g.,~\cite{jhala}).

\subsection{Contributions}
The paper presents two main contributions,
that are strictly related but completely independent.

First, {we prove non-constructively that given two quan\-ti\-fier-free
  formulae in the the theory of arrays with $\diff$, it is possible to
  compute a quantifier-free interpolant}.  We do this by using the
notion of \emph{amalgamation}~\cite{Hodges,CK}.  Intuitively, a
first-order theory has the amalgamation property if any two structures
in its class of models sharing a common sub-model can be regarded as
sub-structures of a larger model.  A well-known result (see,
e.g.,~\cite{amalgam}) states that if the class of models of a
universal theory $T$ (namely, a theory axiomatized by sentences
obtained by prefixing a quantifier-free formula with a block of
universal quantifiers) have the amalgamation property, then $T$ admits
quantifier-free interpolants for quantifier-free formulae in the
theory and \emph{vice versa}.  Since the theory of arrays with $\diff$
is universal, we consider the problem of showing that its class of
models has the amalgamation property.  We provide a first,
non-constructive, proof of this result by using model-theoretic
notions only.

The {second contribution} of the paper is an {algorithm for the
  generation of quantifier-free interpolants from finite sets}
(intended conjunctively) {of literals in the theory of arrays with
  $\diff$}.  Our algorithm uses as a sub-module a satisfiability
procedure for sets of literals of the theory.  Such a module is based
on a sequence of syntactic manipulations organized in groups of
syntactic transformations.  The most important group of
transformations is a Knuth-Bendix completion procedure (see,
e.g.,~\cite{rewriting}) extended in such a way to solve an equation
$a\uguale wr(b,i,e)$ for $b$ when this is required by the ordering
defined on terms.  (We call Gaussian completion this extended
procedure because of its similarity with the techniques to handle
Gaussian theories~\cite{gaussBG}.)  The goal of these transformations
is to produce what we call a ``\nform'' constraint for which it is
trivial to establish satisfiability.  Given two sets $A$ and $B$ of
literals, the satisfiability procedure is invoked on $A$ and $B$.
While running, the two instances of the procedure exchange literals on
the common signature of $A$ and $B$ (similarly to the Nelson and Oppen
combination method, see, e.g.,~\cite{RRT04}) and perform some
additional actions.  At the end of the computation, the execution
trace is examined and the desired interpolant is built by simple rules
whose goal is to produce a set of literals on the common signature of
$A$ and $B$.  In fact, the problem during the execution of Gaussian
completion is 
to avoid the generation of equalities
containing terms built out of non-shared symbols.  Notice that our
approach seems to be quite different from the standard method of
extracting interpolants from an unsatisfiability proof of $A$ and $B$
in a given calculus (e.g.,~\cite{McM04b,BKR+10}).  
Theoretically, it is not
difficult to refine our proof of termination to show that the proposed
algorithm is in NP, which is optimal since the satisfiability problem
of quantifier-free formulae in the theory of arrays with
extensionality is NP-complete (see, e.g.,~\cite{bradley}).

\subsection{Plan of the paper}

In Section~\ref{sec:background}, we recall some background notions
about theories, model-theoretic notions, and rewriting.  In
Section~\ref{sec:tharr}, we define the theory of arrays with $\diff$,
characterize its models, and show non-constructively that it admits
quantifier-free interpolation.  The rest of the paper is devoted to
prove the same result constructively.  In
Section~\ref{sec:interpolation}, we introduce \nform\ constraints
(which will be manipulated by the interpolation procedure) and state
(and prove) their key properties. In Section~\ref{sec:solver}, we
describe the satisfiability solver for the theory of arrays with
\diff\ based on syntactic transformations of \nform\ constraints.
Then, in Section~\ref{sec:isolver}, we extend such as solver to
produce quantifier-free interpolants by using a carefully designed set
of meta-rules for interpolation.  Finally, in
Section~\ref{sec:related}, we extensively discuss the related work and
conclude.

The appendix contains a proof of the result in~\cite{amalgam} to make
the paper self-contained.

\section{Formal preliminaries}
\label{sec:background}

We assume the usual syntactic (e.g., signature, variable, term, atom,
literal, formula, and sentence) and semantic (e.g.,
structure, 
truth, satisfiability, and validity) notions of first-order logic.
The equality symbol ``\uguale'' is included in all signatures
considered below.  For clarity, we shall use ``\coincide'' in the
meta-theory to express the syntactic identity between two symbols or
two strings of symbols, or to introduce a new definition.
%
%

\subsection{Theories, constraints, interpolants}

A \emph{theory} $T$ is a pair $({\Sigma}, Ax_T)$, where $\Sigma$ is a
signature and $Ax_T$ is a set of $\Sigma$-sentences, called the axioms
of $T$ (we shall 
sometimes write directly $T$ for $Ax_T$).  The $\Sigma$-structures in
which all sentences from $Ax_T$ are true are the \emph{models} of $T$.
A \emph{universal} (resp. \emph{existential}) sentence is obtained by
prefixing a string of universal (resp. existential) quantifiers to a
quantifier-free formula. A theory $T$ is \emph{universal} iff $Ax_T$
consists of universal sentences.
A $\Sigma$-formula $\phi$ is \emph{$T$-satisfiable} if there exists a
model $\cM$ of $T$ such that $\phi$ is true in $\cM$ under a suitable
assignment $\mathtt a$ to the free variables of $\phi$ (in symbols,
$(\cM, \mathtt a) \models \phi$); it is \emph{$T$-valid} (in symbols,
$T\vdash \varphi$) if its negation is $T$-unsatisfiable or,
equivalently, iff $\varphi$ is provable from the axioms of $T$ in a
complete calculus for first-order logic.  A formula $\varphi_1$
\emph{$T$-entails} a formula $\varphi_2$ if $\varphi_1 \to \varphi_2$
is \emph{$T$-valid}; the notation used for such $T$-entailment is
$\varphi_1\vdash_T \varphi_2$ or simply $\varphi_1\vdash \varphi_2$, if $T$ is clear from the context.
The \emph{satisfiability modulo the theory $T$} ($SMT(T)$)
\emph{problem} amounts to establishing the $T$-satisfiability of
quantifier-free $\Sigma$-formulae.


Let $T$ be a theory in a signature $\Sigma$;
a $T$-\emph{constraint} (or, simply, a constraint) $A$ is a set of ground literals in a signature $\Sigma'$
obtained from $\Sigma$ by adding a set of free constants.
Taking conjunction,  we can 
consider a finite constraint $A$ as a single formula; thus, when we
say that a constraint $A$ is \emph{$T$-satisfiable} (or just
``satisfiable'' if $T$ is clear from the context), we mean that the
associated formula
(also called $A$) is satisfiable in a $\Sigma'$-structure which is a
model of $T$. 
Let $a_1, \dots, a_n$ be the tuple of free constants occurring in a
sentence $A$ and $x_1, \dots, x_n$ be a tuple of fresh distinct
individual variables, the formula $A^{\exists}$ is obtained from $A$
by replacing each $a_i$ with $x_i$ (for $i=1, ..., n$) and then
existentially quantifying $x_1, \dots, x_n$, i.e.\ $A^{\exists}$
denotes the formula $\exists x_1\cdots \exists x_n A(x_1/a_1, \dots,
x_n/a_n)$.
We have two notions of equivalence between constraints, which are summarized in the next definition.
\begin{definition}\label{def:constraints} Let $A$ and $B$ be finite constraints (or, more generally, first order sentences) 
in an expanded signature.
We say that 
$A$ and $B$ are \emph{logically equivalent} (modulo $T$) iff $T\vdash A\leftrightarrow B$; on the other hand, we say that they are \emph{$\exists$-equivalent}
(modulo $T$)  iff
$T\vdash A^\exists\leftrightarrow B^\exists$.
\end{definition}

Logical equivalence means that the constraints have the same semantic
content (modulo $T$); $\exists$-equivalence is also useful because we
are mainly interested in $T$-satisfiability of constraints and it is
trivial to see that $\exists$-equivalence implies equisatisfiability
(again, modulo $T$). As an example, if we take a constraint $A$, we
replace all occurrences of a certain term $t$ in it by a fresh
constant $a$ and add the equality $a\uguale t$, 
called the \emph{(explicit) definition (of $t$)}, the constraint $A'$
we obtain in this way is $\exists$-equivalent to $A$.
As another example, suppose that $A\vdash_T a\uguale t$, that $a$ does not occur in $t$, and that $A'$ is 
obtained from $A$ by replacing $a$ by $t$ everywhere; then the following four constraints are
$\exists$-equivalent 
$$
A, \quad A\cup\{ a\uguale t\}, \quad A'	\cup\{a \uguale t\}, \quad A'
$$ (the first three are also pairwise logically equivalent).  The
above examples show how explicit definitions can be introduced and
removed from constraints while preserving $\exists$-equivalence.

A theory $T$ is said to \emph{admit quantifier-free interpolation}
(or, equivalently, to \emph{have quantifier-free interpolants}) iff
for every pair of quantifier free formulae $\phi, \psi$ such that
$\psi\wedge \phi$ is not $T$ satisfiable, there exists a quantifier
free formula $\theta$, 
called an \emph{interpolant}, 
such that: (i) $\psi$ $T$-entails $\theta$; (ii) $\theta \wedge \phi$
is not $T$-satisfiable: (iii) only variables occurring both in $\psi$
and in $\phi$ occur in $\theta$.

\subsection{Some model theoretic concepts and results}
\label{subsec:prelim-model-theory}

We recall some basic model-theoretic notions that will be used in the
paper (for more details, the interested reader is pointed to standard
textbooks in model theory, such as~\cite{CK}).

If $\Sigma$ is a signature, we use the notation $\cM=(M, \mathcal I)$
for a $\Sigma$-structure, meaning that $M$ is the support of $\cM$ and
$\mathcal I$ is the related interpretation function for
$\Sigma$-symbols (in a many-sorted framework, the support is the
disjoint union of the interpretations of the sorts symbols of
$\Sigma$).

Roughly, an embedding is a homomorphism that preserves and reflects
relations and operations.  Formally, a {\it $\Sigma$-embedding} (or,
simply, an embedding) between two $\Sigma$-structu\-res $\cM=(M,
\mathcal I)$ and $\cN=(N,\mathcal J)$ is any mapping $\mu: M
\longrightarrow N$ among the corresponding support sets satisfying the
following three conditions: (a) $\mu$ is a
(sort-preserving) injective function; (b) $\mu$ is an algebraic
homomorphism, that is for every $n$-ary function symbol $f$ and for
every $a_1, \dots, a_n\in M$, we have $f^{\cN}(\mu(a_1), \dots,
\mu(a_n))= \mu(f^{\cM}(a_1, \dots, a_n))$; (c) $\mu$ preserve and
reflects interpreted predicates, i.e.\ for every $n$-ary predicate
symbol $P$, we have $(a_1, \dots, a_n)\in P^{\cM}$ iff $(\mu(a_1),
\dots, \mu(a_n))\in P^{\cN}$. 
By using simple set-theory, it possible to show that every embedding
can be factored in an isomorphism and an inclusion.
This means that if  $\mu$ is an embedding from $\cM$ to
$\cN$, it is possible to assume that 
---up to an isomorphism---$\cM$ is a substructure of $\cN$, in the
sense defined below.

If $M\subseteq N$ and the embedding $\mu: \cM \longrightarrow \cN$ is
just the identity inclusion $M\subseteq N$, we say that $\cM$ is a
{\it substructure} of $\cN$ or that $\cN$ is an {\it superstructure}
of $\cM$. Notice that a substructure of $\cN$ is nothing but a subset
of the carrier set of $\cN$ which is closed under the
$\Sigma$-operations and whose $\Sigma$-structure is inherited from
$\cN$ by restriction. In fact, given $\cN=(N, \mathcal J)$ and
$G\subseteq N$, there exists the smallest substructure of $\cN$
containing $G$ in its carrier set. This is called the substructure
\emph{generated by $G$} and its carrier set can be characterized as
the set of the elements $b\in N$ such that $t^{\mathcal N}(\ua)=b$ for
some $\Sigma$-term $t$ and some finite tuple $\ua$ from $G$ (when we
write $t^{\mathcal N}(\ua)=b$, we mean that $(\cN, \mathtt{a}) \models
t(\ux)\uguale y$ for an assignment $\mathtt{a}$ mapping the $\ua$ to
the $\ux$ and $b$ to $y$).
An easy---but fundamental---fact is that the truth of a universal
(resp. existential) sentence is preserved through substructures
(resp. through superstructures).  

Let $\cM=(M, \mathcal I)$ be a $\Sigma$-structure which is generated
by $G\subseteq M$.  Let us expand $\Sigma$ with a set of fresh free
constants in such a way that in the expanded signature $\Sigma_G$
there is a fresh free constant $c_g$ for every $g\in G$ 
(write $c_g$ directly with $g$ for simplicity).  Let $\cM^G$ be the
$\Sigma_G$-structure obtained from $\cM$ by interpreting each $c_g$ as
$g$.  The \emph{$\Sigma_G$-diagram $\delta_{\cM}(G)$ of $\cM$} is the
set of all ground $\Sigma_G$-literals $L$ such $\cM^G\models L$. When
we speak of the diagram of $\cM$ \emph{tout court}, we mean the
$\Sigma_M$-diagram $\delta_{\cM}(M)$.

The following celebrated result~\cite{CK} is simple, but nevertheless
very powerful and it will be used in the rest of the paper.
\begin{lemma}[Robinson Diagram Lemma]
  \label{lem:robinson}
  Let $\cM=(M, \mathcal I)$ be a $\Sigma$-structu\-re which is generated by
  $G\subseteq M$ and $\cN=(N, \mathcal{J})$ be another
  $\Sigma$-structure.  Then, there is a bijective correspondence between
  $\Sigma$-embeddings
  $\mu:\cM\longrightarrow \cN$ and $\Sigma_G$-expansions $\cN^{(G)}=(N, {\mathcal J}^{(G)})$ of $\cN$ such that
  $\cN^{(G)}\models \delta_{\cM}(G)$. The correspondence associates with $\mu$ the extension of $\mathcal J$
  to $\Sigma_G$ given by ${\mathcal J}^{(G)}(c_g) \coincide \mu(g)$.
\end{lemma}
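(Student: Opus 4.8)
The plan is to prove the Robinson Diagram Lemma by exhibiting the correspondence explicitly in both directions and checking that the two assignments are mutually inverse. First I would fix the notation: given an embedding $\mu:\cM\longrightarrow\cN$, define the $\Sigma_G$-expansion $\cN^{(\mu)}$ of $\cN$ by keeping the interpretation of all $\Sigma$-symbols as in $\mathcal J$ and setting ${\mathcal J}^{(\mu)}(c_g)\coincide\mu(g)$ for each $g\in G$. Conversely, given a $\Sigma_G$-expansion $\cN^{(G)}=(N,{\mathcal J}^{(G)})$ with $\cN^{(G)}\models\delta_{\cM}(G)$, I would define a map $\mu:M\longrightarrow N$ as follows: since $\cM$ is generated by $G$, every $b\in M$ is of the form $t^{\cM}(\ua)$ for some $\Sigma$-term $t$ and finite tuple $\ua$ from $G$; set $\mu(b)\coincide t^{\cN^{(G)}}(\underline{c})$, where $\underline c$ is the corresponding tuple of constants $c_{a_i}$. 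The two bullet points below organize the verification.

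\begin{enumerate}[(1)]
  \item \emph{The backward map is well-defined and is an embedding.} Well-definedness requires showing that if $t_1^{\cM}(\ua_1)=t_2^{\cM}(\ua_2)=b$ then $t_1^{\cN^{(G)}}(\underline{c}_1)=t_2^{\cN^{(G)}}(\underline{c}_2)$; but the equality $t_1(\underline c_1)\uguale t_2(\underline c_2)$ is a ground $\Sigma_G$-literal true in $\cM^G$, hence lies in $\delta_{\cM}(G)$, hence holds in $\cN^{(G)}$. Injectivity follows the same way using \emph{negated} equalities $t_1(\underline c_1)\not\uguale t_2(\underline c_2)$, which are also in the diagram when the values differ in $\cM$. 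Preservation and reflection of function symbols and of predicates is again read off from the (positive and negative) atomic literals in $\delta_{\cM}(G)$: for a function symbol $f$, the identity $f(t_1(\underline c_1),\dots)\uguale t_{f}(\underline c)$ is in the diagram; for a predicate $P$, exactly one of $P(t_1(\underline c_1),\dots)$ or its negation is, and that membership transfers to $\cN^{(G)}$. Finally one checks ${\mathcal J}^{(G)}(c_g)=g^{\cN^{(G)}}$ equals $\mu(g)$ by taking $t$ to be the variable term, so $\mu$ restricted to $G$ really is $g\mapsto {\mathcal J}^{(G)}(c_g)$.
  \item \emph{The forward map lands in the right place and the correspondences are inverse.} If $\mu$ is an embedding, then $\cN^{(\mu)}\models\delta_{\cM}(G)$: any ground $\Sigma_G$-literal $L$ true in $\cM^G$ becomes, after interpreting $c_g$ as $\mu(g)$, a statement about images under $\mu$, and since $\mu$ preserves and reflects all operations and predicates (and injectivity handles equality literals), $L$ is true in $\cN^{(\mu)}$. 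That the two assignments compose to the identity in both orders is then immediate from the definitions: starting from $\mu$, forming $\cN^{(\mu)}$ and then reading off the induced map gives back $b=t^{\cM}(\ua)\mapsto t^{\cN}(\mu(\ua))=\mu(t^{\cM}(\ua))=\mu(b)$ by the homomorphism property; starting from $\cN^{(G)}$, building $\mu$ and then re-expanding gives back the same interpretation of each $c_g$, namely $\mu(g)={\mathcal J}^{(G)}(c_g)$, and the $\Sigma$-reducts agree on the nose.
\end{enumerate}

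The only genuinely delicate point — the ``main obstacle'' — is the well-definedness of the backward map: one must be careful that the representation $b=t^{\cM}(\ua)$ is far from unique, and the whole argument hinges on the fact that $\delta_{\cM}(G)$ contains \emph{all} ground atomic literals, positive and negative, so that equalities and disequalities between term-values in $\cM$ are faithfully transported. Everything else is a routine induction on term structure using conditions (a)--(c) of the definition of embedding; I would present it briefly, noting that the preservation of arbitrary quantifier-free (indeed atomic) sentences through the diagram is exactly what makes the bijection work, and that this is the mechanism behind the many later uses of the lemma in the paper.
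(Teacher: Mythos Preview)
Your proof is correct and is the standard argument for the Robinson Diagram Lemma. Note, however, that the paper does not supply its own proof of this statement: it is quoted as a ``celebrated result'' from~\cite{CK} and used as a black box, so there is no in-paper proof to compare against. Your write-up is exactly the routine verification one finds in model-theory textbooks, with the well-definedness of the backward map correctly identified as the only point requiring care.
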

Notice that an embedding $\mu:\cM \longrightarrow \cN$ is uniquely
determined, in case it exists, by the image of the set of generators
$G$: this is because the fact that $G$ generates $\cM$ implies (and is
equivalent to) the fact that every $c\in M$ is of the kind
$t^\cM(\ug)$, for some term $t$ and some $\ug$ from $G$.

Intuitively, amalgamation is a property of collections of structures
that guarantees
that two structures in the collection can be glued
into substructures of a larger one.
Formally, a theory $T$ is said to have the \emph{amalgamation
  property} iff whenever we are given embeddings
\begin{eqnarray*}
\mu_1:\cN \longrightarrow \cM_1, \qquad \mu_2:\cN \longrightarrow \cM_2
\end{eqnarray*}
among the models $\cN, \cM_1, \cM_2$ of $T$, then there exists a
further model $\cM$ of $T$ endowed with embeddings
\begin{eqnarray*}
\nu_1:\cM_1 \longrightarrow \cM, \qquad \nu_2:\cM_2 \longrightarrow \cM
\end{eqnarray*}
such that $\nu_1\circ \mu_1= \nu_2\circ \mu_2$. Notice that, up to
isomorphism, we can limit ourselves in the above definition to the
case in which $\mu_1, \mu_2$ are inclusions, i.e. to the case in which
$\cN$ is just a substructure of both $\cM_1, \cM_2$; in this case,
$\cM$ is said to be a $T$-amalgam of $\cM_1$ and $\cM_2$ over
$\cN$. ({When the signature does not have ground terms of some sort,
  models $\cN$ having empty domain(s) must be included in the
  definition of amalgamation property.)
\begin{theorem}[\cite{amalgam}]
  \label{thm:interpolation-amalgamation} 
  Let $T$ be universal; then $T$ admits quantifier free interpolants
  iff $T$ has the amalgamation property.
\end{theorem}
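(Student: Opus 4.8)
The plan is to prove the two implications separately, using the Robinson Diagram Lemma (Lemma~\ref{lem:robinson}) as the main technical bridge between the syntactic notion of interpolation and the model-theoretic notion of amalgamation. Throughout I will exploit the fact that, since $T$ is universal, truth of quantifier-free (equivalently, universal) sentences is inherited by substructures — this is exactly what makes the correspondence work.

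\emph{From quantifier-free interpolation to amalgamation.} First I would assume $T$ admits quantifier-free interpolants and take models $\cM_1, \cM_2$ of $T$ sharing a common submodel $\cN$ (by the remark after the definition of amalgamation, it suffices to treat the $\mu_i$ as inclusions). Fix a set $G$ of generators for $\cN$, and let $G_1 \supseteq G$, $G_2 \supseteq G$ be sets of generators for $\cM_1, \cM_2$ chosen so that $G_1 \cap G_2 = G$ (rename the new constants apart). Consider the diagrams $\delta_{\cM_1}(G_1)$ and $\delta_{\cM_2}(G_2)$ in the respective expanded signatures; their common language is $\Sigma_G$. The goal is to produce a model $\cM$ of $T$ together with $\Sigma_{G_i}$-expansions satisfying both diagrams — by Lemma~\ref{lem:robinson} this yields the required embeddings $\nu_i$, and agreement on $G$ forces $\nu_1\circ\mu_1 = \nu_2\circ\mu_2$. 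To get such an $\cM$ it is enough, by compactness, to show that every finite subset $A \subseteq \delta_{\cM_1}(G_1)$ is consistent (modulo $T$) with $\delta_{\cM_2}(G_2)$; suppose not. Then $A \wedge B' \vdash_T \bot$ for some finite $B' \subseteq \delta_{\cM_2}(G_2)$. Existentially quantify the constants that are not in $\Sigma_G$: writing $\phi$ for the $\Sigma_G$-formula obtained from $A$ by quantifying away the $G_1\setminus G$ constants and $\psi$ similarly from $B'$, one checks $\psi \wedge \phi$ is $T$-unsatisfiable. Applying quantifier-free interpolation gives a $\Sigma_G$-quantifier-free $\theta$ with $\psi \vdash_T \theta$ and $\theta \wedge \phi$ $T$-unsatisfiable. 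Now interpret in $\cM_2$: since $\cM_2 \models B'$ we get $\cM_2 \models \theta$ (reading the $\Sigma_G$-constants via $G$), hence $\cN \models \theta$ because $\theta$ is quantifier-free and $\cN$ is a substructure — wait, more carefully, $\theta$ is over $\Sigma_G$ and all its constants name elements of $G \subseteq N$, and quantifier-free truth transfers both ways along the inclusion $\cN \hookrightarrow \cM_2$, so $\cN \models \theta$; dually $\cN \hookrightarrow \cM_1$ and $\cM_1 \models A$ gives $\cM_1 \models \phi$ hence (again via $\cN$) $\cN \models \phi$, so $\cN \models \theta \wedge \phi$, contradicting $T$-unsatisfiability of $\theta\wedge\phi$ since $\cN \models T$. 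This contradiction establishes consistency, and compactness finishes this direction.

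\emph{From amalgamation to quantifier-free interpolation.} Conversely, assume $T$ has the amalgamation property and let $\phi, \psi$ be quantifier-free with $\psi \wedge \phi$ $T$-unsatisfiable; let $\ux$ be the shared variables, treat all variables as fresh constants, and let $\Sigma_\phi, \Sigma_\psi$ be the signatures of $\phi, \psi$, with common part $\Sigma_0$ (containing the $\ux$). Consider the set $\Theta$ of all quantifier-free $\Sigma_0$-consequences of $\psi$ modulo $T$. The claim to prove is that $\Theta \wedge \phi$ is already $T$-unsatisfiable; compactness then yields a finite conjunction $\theta$ of members of $\Theta$ which is the desired interpolant (condition (iii) holds because everything is over $\Sigma_0$; (i) by construction; (ii) from $T$-unsatisfiability of $\theta \wedge \phi$). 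Suppose for contradiction that $\Theta \wedge \phi$ has a model $\cM_1 \models T$; let $\cN$ be the $\Sigma_0$-substructure of $\cM_1$ generated by the $\ux$. I would then argue that $\psi$ is consistent modulo $T$ with the diagram $\delta_{\cN}(N)$: if not, $\psi \wedge D \vdash_T \bot$ for a finite part $D$ of the diagram, and quantifying the non-$\Sigma_0$ elements away turns $\neg D$-ish content into a quantifier-free $\Sigma_0$-consequence of $\psi$ — more precisely $\psi \vdash_T \neg(\bigwedge D)^{\text{as }\Sigma_0\text{-formula}}$, which after suitably re-expressing $D$ over $\Sigma_0$ lands a member (or Boolean combination reducible to members) of $\Theta$ that $\cN$ must satisfy, contradicting $\cN \models D$. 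Hence there is a model $\cM_2 \models T$ of $\psi$ into which $\cN$ embeds. Now amalgamate $\cM_1$ and $\cM_2$ over $\cN$ to obtain $\cM \models T$ with embeddings of both; since $\phi$ is quantifier-free and true in $\cM_1$, and quantifier-free truth is preserved along embeddings in a universal theory's models (for literals, and $\phi$ is a Boolean combination of literals whose atoms' truth values transfer along embeddings), $\cM \models \phi$, and likewise $\cM \models \psi$; but the $\ux$ are named consistently throughout, so $\cM \models \psi \wedge \phi$, contradicting $T$-unsatisfiability.

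\emph{Main obstacle.} The routine parts are the compactness arguments and shuttling quantifier-free truth along inclusions/embeddings. The delicate point in both directions is the bookkeeping with the expanded signatures: one must be careful that the new free constants introduced for $\cM_1$ and $\cM_2$ (or for $\phi$ and $\psi$) are kept disjoint except over the common part, that the interpolant/consequence set genuinely lives in $\Sigma_0$ (resp. $\Sigma_G$), and — the subtlest case — that when a sort has no ground terms, $\cN$ may have empty domains, which is precisely why the parenthetical remark extends the amalgamation property to include structures with empty domains; the proof must invoke that convention so that the substructure generated by $\ux$ (which could be empty in some sort) is still a legitimate object to amalgamate over. I would flag that edge case explicitly rather than sweep it under the rug.
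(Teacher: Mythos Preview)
Your plan is essentially right, but there is one genuine slip in the interpolation-to-amalgamation direction, and in the other direction you take a route different from the paper's.

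\textbf{The slip.} In the step ``$\cM_1 \models A$ gives $\cM_1 \models \phi$ hence (again via $\cN$) $\cN \models \phi$'' you are pushing an \emph{existential} sentence $\phi$ from a superstructure down to a substructure; existential truth is preserved upwards along embeddings, not downwards, so this inference fails. The repair is immediate and is precisely what the paper does: drop the existential-quantification detour entirely and apply quantifier-free interpolation directly to the finite diagram conjunctions $A$ and $B'$ (they are already quantifier-free, and their common signature is $\Sigma_G$ since $G_1\cap G_2=G$). You obtain a quantifier-free $\Sigma_G$-sentence $\theta$; shuttle $\theta$ through $\cN$ (down from one big model, up into the other), and locate the contradiction in $\cM_1$ or $\cM_2$, not in $\cN$.

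\textbf{The different route.} For amalgamation $\Rightarrow$ interpolation the paper does not use your consequence-set argument. Instead it runs a Henkin-style construction: assuming no interpolant exists, it builds by enumeration a maximal $T$-consistent set $\Gamma$ of ground $\Sigma^A$-formulae containing $A$ and a maximal $T$-consistent set $\Delta$ of ground $\Sigma^B$-formulae containing $B$, maintaining the invariant that $\Gamma$ and $\Delta$ agree on all $\Sigma^C$-sentences; models of $\Gamma$ and $\Delta$ then have isomorphic $\Sigma^C$-generated substructures, and amalgamating them yields a model of $A\wedge B$. Your approach (take $\Theta$ to be all quantifier-free $\Sigma_0$-consequences of $\psi$, get $\cM_1\models\Theta\cup\{\phi\}$, extract the generated substructure $\cN$, and show $\psi$ is consistent with its diagram) is also correct and arguably more direct, avoiding the back-and-forth bookkeeping. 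The one step you should sharpen is ``re-expressing $D$ over $\Sigma_0$'': the reason this works is that $\cN$ is \emph{generated} by the interpretations of $\ux$, so each diagram constant can be replaced by a concrete $\Sigma_0$-term naming the same element; after that substitution $\neg\bigwedge D$ becomes a quantifier-free $\Sigma_0$-consequence of $\psi$, hence a member of $\Theta$, hence true in $\cM_1$ and therefore in $\cN$, contradicting $\cN\models D$. The paper's Henkin construction trades this term-substitution trick for a more symmetric treatment of the two sides.
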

We emphasize that the hypothesis for $T$ to be universal is necessary
for the above result to hold.  To make the paper self-contained, we
include the proof of this result in Appendix~\ref{app:appknown}.

\subsection{Some term rewriting concepts and results}

We shall need basic term rewriting system notions and results (see,
e.g.,~\cite{rewriting}).  In the following, we recall some of the most
important ones for this paper.  

The reflexive and transitive closure of a binary relation
$\rightarrow$ is denoted with $\rightarrow^*$ and its transitive
closure by $\rightarrow^+$.  A binary relation $\rightarrow$ over a
set $E$ is \emph{terminating} if there are no infinite sequence $e_0,
e_1, ...$ of elements of $E$ such that $(e_i,e_{i+1})\in \rightarrow$,
also written as $e_i\rightarrow e_{i+1}$, for every $i\geq 0$.  The
relation $\rightarrow\subseteq E\times E$ is \emph{confluent} if there
exists $v\in E$ such that $s\rightarrow^* v$ and $t\rightarrow^* v$
whenever $u\rightarrow^* s$ and $u\rightarrow^* t$, for $s,t,u\in E$.
The relation $\rightarrow$ is \emph{convergent} if it is both
terminating and confluent.


A \emph{rewrite rule} is an ordered pair 
of terms $l$ and
$r$, written as $l\rightarrow r$ (intuitively, the rule is used to
replace instances of $l$ with instances of $r$).\footnote{
To avoid pathological cases, it is assumed that all variables
occurring in $r$ occur also in $l$.  }  A \emph{(term-)rewriting
  system}
is a set $R$ of rewrite rules, 
which
induces a \emph{rewrite relation} $\rightarrow_R$ (or simply
$\rightarrow$ when $R$ is clear from the context) on terms as follows:
$\rightarrow_R$ is the 
relation 
that contains the pairs of terms $(t, t')$ such that (for some
$l\rightarrow r$  in $R$)
the term $t$ 
%
has a
sub-term 
of the form
$l\sigma$ for some substitution $\sigma$ (in symbols
$t\equiv t[l\sigma]$), 
and $t'$ is obtained by replacing
that subterm
$l\sigma$ 
by $r\sigma$ in $t$
(in symbols $t'\equiv t[r\sigma]$).
Let $s$ and $t$ terms; we say that
$s$ and $t$ are \emph{joinable} w.r.t.\ a rewrite relation
$\rightarrow$ (in symbols, $s\downarrow t$) when there exists a term
$u$ such that $s\rightarrow^* u$ and $t\rightarrow^* u$.  A term $t$
is \emph{reducible} w.r.t.\ a rewrite relation $\rightarrow$ if there
exists a term $u$ such that $t\rightarrow u$; otherwise, $t$ is
\emph{irreducible}.  A term $u$ is a \emph{normal form} of $t$
w.r.t.\ a rewrite relation $\rightarrow$ if $t\rightarrow^* u$ and $u$
is irreducible.  A rewrite relation is \emph{ground convergent} when
it is convergent
once restricted to
the set of ground terms.  Convergent rewrite
relations are interesting because they have unique normal forms. 
\emph{Knuth–Bendix completion} is a procedure,
based on superposition of critical pairs, for transforming a rewrite
system into a confluent one (see, e.g.,~\cite{rewriting} for details).
Termination of rewrite systems is undecidable.

A \emph{quasi-ordering} is a reflexive and transitive relation.  The
\emph{lexicographic path ordering} $\succ$ on a set of terms induced
by a quasi-ordering $>$, called \emph{precedence relation}, on the set
of constant and function symbols on which the terms are built is
defined as follows: $s = f(s_l, \ldots, s_m) \succ g(t_l, \ldots, t_n)
= t$ iff
\begin{enumerate}[(1)]
  \item $s_k \succ t$ 
   or $s_k\equiv t$
   for some $k\in \{1, \ldots, m\}$, or
  \item $f > g$ and $s \succ t_l$ for each $l\in \{1, \ldots, n\}$, or
  \item $f \equiv g$, $s_1 \equiv t_1$, ..., $s_{j-1} \equiv t_{j-1}$, $s_{j} \succ
    t_{j}$, $s\succ t_{j+1}$, ..., $s\succ t_{n}$ for some $j\in \{1,
    \ldots, n\}$.
\end{enumerate}
If the precedence relation $>$ is also total, then so is $\succ$
once restricted to ground terms.

\section{Theories of Arrays and Quantifier-free Interpolation}
\label{sec:tharr}

The McCarthy \emph{theory of arrays} \AX~\cite{mccarthy} has three
sorts $\ARRAY, \ELEM, \INDEX$ (called ``array'', ``element'', and
``index'' sort, respectively) and two function symbols $rd$ and $wr$
of appropriate arities; its axioms are:
\begin{eqnarray}
  \label{ax1}
  \forall y, i, e. & & rd(wr(y,i,e),i) \uguale e \\
  \label{ax2}
  \forall y, i, j, e. & & i \not\uguale j \imp rd(wr(y,i,e),j)\uguale rd(y,j) .
\end{eqnarray}
The theory of \emph{arrays with extensionality} \AXEXT has the further
axiom
\begin{eqnarray*}
 \forall x, y. 
 x \not\uguale y \imp (\exists i.\ rd(x,i)\not\uguale rd(y,i)),
\end{eqnarray*}
called the `extensionality' axiom.  In this paper, we consider a
variant of the McCarthy {theory of arrays} with extensionality,
obtained by Skolemizing the axioms of extensionality.  Formally, we
define the \emph{theory of arrays with $\diff$} \AXDIFF by adding the
additional (Skolem) function $\diff$ to the signature of \AXDIFF and
replace the extensionality axiom by its Skolemization, namely
\begin{eqnarray}
  \label{ax3a}
  \forall x, y. & & x \not\uguale y \imp rd(x,\diff(x,y))\not\uguale rd(y,\diff(x,y)) .
\end{eqnarray} 
The new symbol $\diff$ is binary and takes two arguments of sort
$\ARRAY$ and returns an element of sort $\INDEX$.  The new axiom
(\ref{ax3a}) constrains $\diff$ to return an index at which the two
arrays in input store different values, 
whereas it
returns an arbitrary
value when input arrays are equal.  

\subsection{A semantic argument for quantifier-free interpolation}
\label{subsec:semantic-arg}

Here, we show that \AXDIFF does admit quantifier-free interpolation,
contrary to \AXEXT~\cite{KMZ06}.  We do so by using a model-theoretic
argument based on the equivalence between amalgamation of the models
and admitting quantifier-free interpolation for universal theories
(recall Theorem~\ref{thm:interpolation-amalgamation} in
Section~\ref{subsec:prelim-model-theory}).  Notice that \AXDIFF is
universal whereas \AXEXT is not.

Since amalgamation is a property of the models of a theory, we
preliminarily discuss the class of models of \AXDIFF. 
A model of \AXEXT or \AXDIFF is \emph{standard} when
$\ARRAY$ is interpreted as the set of all functions from indexes to
elements.  In a standard model of \AXEXT or \AXDIFF, arrays are
interpreted as functions, $rd$ as function application, and $wr$ as
the point-wise update operation (i.e.\ the interpretation of
$wr(a,i,e)$ returns the same values of the interpretation of $a$,
except at the interpretation of index $i$ where it returns the
interpretation of $e$).  Indeed, the class of models of \AXEXT or
\AXDIFF contains also non-standard models.  This is because the
axioms of both \AXEXT and \AXDIFF, being first-order formulae, do not
constrain the interpretation of the sort $\ARRAY$ to contain all
mappings from indexes to elements.  (This is similar to the
interpretation of function variables according to the Henkin semantics
of second order logic; see, e.g.,~\cite{enderton}.)  Fortunately,
because of the extensionality axiom, it is easy to show 
(see below)
that every
model of such theories embeds into a standard one (recall the
definition of embedding in Section~\ref{subsec:prelim-model-theory}).
This means that any model is isomorphic to a sub-structure of a
standard model in which arrays are interpreted as functions, although
it might happen that not all functions are part of the interpretation
of $\ARRAY$ in the model. 
As a consequence, whenever
we want to test the validity of universal formulae or the
satisfiability of constraints, we can---w.l.o.g.---consider only
standard models.  (This fact will be used in the proofs of some
results in later sections, such as the proof of Lemma~\ref{lem:normal}
where a standard model is built to show the satisfiability of a
certain class of constraints of \AXDIFF.) 

We show that the universal theory \AXDIFF has the amalgamation
properties so that, by Theorem~\ref{thm:interpolation-amalgamation},
we are entitled to conclude that it admits quantifier-free
interpolation.  Recall from Section~\ref{subsec:prelim-model-theory}
that a universal theory has the amalgamation property if two of its
models can be
glued as substructures of a third model.  Thus, we
need to consider arbitrary models of \AXDIFF, not only the standard
ones.
This is why we need more insight into arbitrary models of our theories and their relationship to standard ones.

Let us choose an arbitrary model $\cM$ of  \AXEXT. 
We can build the standard model $std(\cM)$ such that $\INDEX^{std(\cM)}=\INDEX^\cM$ and $\ELEM^{std(\cM)}=\ELEM^{\cM}$. To embed $\cM$ into 
$std(\cM)$ is sufficient to associate with every $a\in\ARRAY^{\cM}$  the function
mapping $i$ to $rd^{\cM}(a,i)$ (this is an embedding because of the extensionality axiom). In this way, we can identify $\ARRAY^{\cM}$ with a subset of the set of all functions
$\ARRAY^{std(\cM)}$. If we call \emph{functional} a model $\cM$ in which $\ARRAY^{\cM}$ is a subset of the set of functions from $\INDEX^{\cM}$ to $\ELEM^{\cM}$
(and in which $rd^\cM, wr^\cM$ have the standard meaning), we have just shown that \emph{every model 
is isomorphic to a functional one}.
(The argument extends to models of \AXDIFF although---in a standard
model---the interpretation of \diff is not fixed as the
interpretations of $rd$ and $wr$.)
In this respect, the crucial  question is the following: which subsets of the set $\ARRAY^{\bar{\cM}}$ in a standard model $\bar{\cM}$ can be in the support $\ARRAY^{\cM}$ of a functional model $\cM$
(with $\INDEX^\cM=\INDEX^{\bar{\cM}}$, $\ELEM^\cM=\ELEM^{\bar{\cM}}$)
that is a 
substructure   
of $\bar{\cM}$? 
We shall answer the question by using the notion of ``closure under cardinality dependence,'' that we formally define next.

%
 

Let $a,b$ be
elements of $\ARRAY^\cM$ in a model $\cM$ of \AXDIFF.  We say that
\emph{$a$ and $b$ are cardinality dependent} (in symbols, $\cM\models
|a-b| < \omega$) iff $\{i\in \INDEX^\cM \mid \cM\models rd(a,i)\neq
rd(b,i)\}$ is finite.  
Cardinality dependency is obviously an equivalence relation.
\begin{lemma}
  \label{lem:dependency}
  Let $\cN$, $\cM$ be models of \AXDIFF such that $\cM$ is a
  substructure of $\cN$.  For every $a,b\in\ARRAY^\cM$, we have that
  \begin{eqnarray*}
    \cM\models |a-b| < \omega & \mbox{ {\rm iff} } &
    \cN\models |a-b| < \omega.
  \end{eqnarray*}
\end{lemma}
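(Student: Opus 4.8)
The plan is to prove the two implications separately, exploiting the asymmetry between universal and existential sentences with respect to sub- and super-structures.

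For the direction $\cM\models |a-b|<\omega \;\Rightarrow\; \cN\models |a-b|<\omega$, I would first observe that the statement ``$a$ and $b$ are cardinality dependent with difference set of size $\le n$'' is expressible, for each fixed $n$, by an \emph{existential} sentence with parameters $a,b$: namely $\exists i_1\cdots i_n.\ \forall j.\ \bigl(rd(a,j)\neq rd(b,j)\to \bigvee_{k=1}^n j\uguale i_k\bigr)$. Hmm — that has a universal quantifier inside, so it is not purely existential. A cleaner route: work inside a functional (indeed standard) superstructure. By the discussion preceding the lemma, embed $\cN$ into a standard model $\bar\cN$ with the same index and element sorts; then $a,b$ viewed in $\bar\cN$ are genuine functions, and $\diff^{\bar\cN}$ restricted to the pair $(a,b)$ still satisfies axiom~(\ref{ax3a}). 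Now the set $S=\{i\mid rd^{\cN}(a,i)\neq rd^{\cN}(b,i)\}$ equals $\{i\mid rd^{\cM}(a,i)\neq rd^{\cM}(b,i)\}$ because $\cM$ is a substructure of $\cN$: the index and element sorts of $\cM$ sit inside those of $\cN$, $rd$ is interpreted by restriction, and ``$\neq$'' is reflected by embeddings (condition (c) in the definition of embedding). Wait — that would make the two difference \emph{sets} literally equal only if $\INDEX^\cM=\INDEX^\cN$, which is not assumed. So I must be more careful: the difference set computed in $\cN$ can only be \emph{larger}, as it ranges over more indices.

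So the honest argument is: let $n=|S_\cM|$ where $S_\cM=\{i\in\INDEX^\cM\mid \cM\models rd(a,i)\neq rd(b,i)\}$ is finite. I want to show $S_\cN=\{i\in\INDEX^\cN\mid \cN\models rd(a,i)\neq rd(b,i)\}$ is also finite, and conversely if $S_\cN$ is finite then so is $S_\cM$ (this converse is immediate since $S_\cM\subseteq S_\cN$ under the embedding and finiteness is inherited by subsets — actually $S_\cM$ injects into $S_\cN$). The non-trivial direction is $S_\cM$ finite $\Rightarrow$ $S_\cN$ finite. The key idea: enumerate $S_\cM=\{i_1,\dots,i_n\}$ and build, inside $\cM$, an array $c$ that agrees with $a$ outside $S_\cM$ and with $b$ on $S_\cM$, by iterated $wr$: $c = wr(wr(\cdots wr(a,i_1,rd(b,i_1))\cdots),i_n,rd(b,i_n))$. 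Using axioms~(\ref{ax1}) and~(\ref{ax2}) one checks $\cM\models rd(c,i)\uguale rd(b,i)$ for all $i\in\INDEX^\cM$. By extensionality (axiom~(\ref{ax3a})): if $\cM\models c\neq b$ then $rd(c,\diff(c,b))\neq rd(b,\diff(c,b))$, contradiction; so $\cM\models c\uguale b$. But $c\uguale b$ is a \emph{ground equation} with parameters in (the substructure generated by) $\{a,b,i_1,\dots,i_n,rd(b,i_1),\dots\}$, all elements of $\cM$; since $\cM$ is a substructure of $\cN$ and embeddings preserve ground equalities (condition (b): homomorphisms commute with the term-forming operations $wr,rd$), we get $\cN\models c\uguale b$ where $c$ is the \emph{same} term evaluated in $\cN$. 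Then in $\cN$, for any index $i$, $\cN\models rd(c,i)\uguale rd(b,i)$, and by the $wr$-axioms $\cN\models rd(c,i)\uguale rd(a,i)$ whenever $i\notin\{i_1,\dots,i_n\}$. Hence $S_\cN\subseteq\{i_1,\dots,i_n\}$, so $S_\cN$ is finite. This is the heart of the proof.

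I expect the main obstacle to be precisely this construction of the ``patched'' array $c$ and the verification that it behaves correctly under $rd$ in \emph{both} structures via the array axioms — this is where one needs to be careful that $c$ is a single syntactic term (so that preservation under the substructure embedding applies verbatim) and that the reasoning $\cM\models c\uguale b$ uses extensionality only inside $\cM$. Once $c\uguale b$ transfers to $\cN$, the rest is a routine application of~(\ref{ax1})–(\ref{ax2}) to bound $S_\cN$. For the converse implication ($\cN\models|a-b|<\omega\Rightarrow\cM\models|a-b|<\omega$) I would simply note that the map $i\mapsto i$ embeds $S_\cM$ into $S_\cN$ injectively (it is the restriction of $\mu$, which is injective and reflects $\neq$), so a finite $S_\cN$ forces $S_\cM$ finite; no axioms beyond the definition of substructure are needed there.
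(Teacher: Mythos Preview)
Your proposal is correct and follows essentially the same approach as the paper: for the forward direction you express $|a-b|<\omega$ via a ground equation of the form $b=wr(a,I,E)$ (the paper writes it as $a=wr(b,I,E)$) and transfer it along the embedding, and for the converse you observe that $S_\cM$ injects into $S_\cN$. The only difference is that you spell out explicitly the use of extensionality (axiom~(\ref{ax3a})) to obtain the ground equation in $\cM$, whereas the paper asserts $\cM\models a=wr(b,I,E)$ directly.
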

\begin{proof}
The right-to-left side is trivial because if $\cM\models |a-b| <
\omega$ then $\cM\models a=wr(b, I, E)$,
where $I\coincide i_1, \ldots, i_n$ is a list of terms of sort
$\INDEX$, $E \coincide e_1, \ldots, e_n$ is a list of terms of sort
$\ELEM$, and $wr(b, I, E)$ abbreviates the term $wr(wr(\cdots wr (a,
i_1, e_1) \cdots), i_n, e_n)$ (this and similar notations will be
discussed in more details in Section~\ref{sec:interpolation}).  Thus,
also $\cN\models a=wr(b, I, E)$ because $\cM$ is a substructure of
$\cN$. Vice versa, suppose that $\cM\not\models |a-b| < \omega$. This
means that there are infinitely many $i\in\INDEX^\cM$ such that
$rd^\cM(a,i)\neq rd^\cM(b,i)$. Since $\cM$ is a substructure of $\cN$,
there are also infinitely many $i\in\INDEX^\cN$ such that
$rd^\cN(a,i)\neq rd^\cN(b,i)$, i.e.  $\cN\not\models |a-b| < \omega$.
\end{proof}


We are now in the position to show how any functional model $\cM$ of
\AXDIFF (i.e. up to isomorphism, any model whatsoever) can be obtained
from a standard one.
In order to produce any such $\cM$, it is sufficient to take a
standard model $\bar{\cM}$, to let $\INDEX^\cM \coincide \INDEX^{\bar{\cM}}$, $\ELEM^\cM \coincide
\ELEM^{\bar{\cM}}$, and to let $\ARRAY^\cM$ to be equal to any subset of
$\ARRAY^{\bar{\cM}}$ that is \emph{closed under cardinality dependence},
i.e.\ such that if $a\in \ARRAY^\cM$ and $\bar{\cM}\models |a-b| < \omega$,
then $b$ is also in $\ARRAY^\cM$.  In other words, functional
substructures $\cM$ of $\bar{\cM}$ with $\INDEX^\cM=\INDEX^{\bar{\cM}}$ and
$\ELEM^\cM= \ELEM^{\bar{\cM}}$ are in bijective correspondence with subsets of
$\ARRAY^{\bar{\cM}}$ closed under cardinality dependence.
%

 A similar remark holds for embeddings.
Suppose that $\mu: \cN\longrightarrow \cM$ is an embedding that
restricts to an inclusion $\INDEX^\cN\subseteq \INDEX^\cM$,
$\ELEM^\cN\subseteq \ELEM^\cM$ for $\cM$ and $\cN$ 
functional 
models of \AXDIFF.
The action of the embedding $\mu$ on $\ARRAY^\cN$ can be characterized
as follows: take an element $a$ for each cardinality dependence
equivalence class, extend arbitrarily $a$ to the set
$\INDEX^\cM\setminus\INDEX^\cN$ to produce $\mu(a)$ and then define
$\mu(b)$ for non representative $b$ in the only possible way for $wr$
to be preserved; i.e.\ if $\cN\models b=wr(a, I, E)$ for a
representative $a$, let $\mu(b)$ be $wr^{\cM}(\mu(a), I, E)$.

By using the observation above, we are ready to show that $\AXDIFF$
has the amalgamation property.
\begin{theorem}
  \label{thm:amalgamat}
  The theory $\AXDIFF$ has the amalgamation property.
\end{theorem}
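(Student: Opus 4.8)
The plan is to produce an explicit $\AXDIFF$-amalgam. By the reduction recalled just before the statement we may assume that $\mu_1,\mu_2$ are inclusions $\cN\subseteq\cM_1$ and $\cN\subseteq\cM_2$, and, replacing each structure by an isomorphic copy, that $\cM_1,\cM_2$ (hence $\cN$) are \emph{functional} models of $\AXDIFF$: their arrays are genuine functions $\INDEX^{\cM_i}\to\ELEM^{\cM_i}$, $rd$ and $wr$ are the standard read/update, and $\diff^{\cM_i}$ is a choice function picking, for $a\neq b$, an index where $a$ and $b$ differ. First I would amalgamate the auxiliary sorts, but with extra room: let $\INDEX^*$ be the set pushout $\INDEX^{\cM_1}\cup_{\INDEX^\cN}\INDEX^{\cM_2}$ together with a fresh set $J$ of indices, and let $\ELEM^*$ be $\ELEM^{\cM_1}\cup_{\ELEM^\cN}\ELEM^{\cM_2}$, enlarged by a fresh element if needed so that $|\ELEM^*|\geq 2$; here $|J|$ is chosen at least as large as the number of cardinality dependence classes occurring in $\cM_1$ and $\cM_2$. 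Inside these sets one has $\INDEX^{\cM_1}\cap\INDEX^{\cM_2}=\INDEX^\cN$ and $\ELEM^{\cM_1}\cap\ELEM^{\cM_2}=\ELEM^\cN$. Let $\cM^*$ be the \emph{standard} model over $\INDEX^*,\ELEM^*$ (so $\ARRAY^{\cM^*}=(\ELEM^*)^{\INDEX^*}$, with $rd,wr$ standard); its interpretation of $\diff$ is fixed last.

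Next I would build the embeddings $\nu_i\colon\cM_i\to\cM^*$, acting as inclusions on $\INDEX$ and $\ELEM$. Using Lemma~\ref{lem:dependency} --- so that every cardinality dependence class of $\cN$ sits inside a single class of $\cM_i$, and distinct $\cN$-classes stay distinct --- I would fix, for each cardinality dependence class of $\ARRAY^{\cM_i}$, a representative $a_0$, taken inside $\ARRAY^\cN$ whenever the class meets $\ARRAY^\cN$. To each representative $a_0$ I attach a total function $\widehat{a_0}\colon\INDEX^*\to\ELEM^*$ that equals $rd^{\cM_i}(a_0,\cdot)$ on $\INDEX^{\cM_i}$ (for an $\cN$-representative the prescriptions for $i=1$ and $i=2$ agree on $\INDEX^\cN$, so $\widehat{a_0}$ is unambiguous), and whose restriction to $J$ is chosen so that these $J$-fragments are \emph{pairwise distinct} across all representatives --- possible since $|(\ELEM^*)^J|$ exceeds the number of classes. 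Then set $\nu_i(a)$ to be the function agreeing with $rd^{\cM_i}(a,\cdot)$ on $\INDEX^{\cM_i}$ and with $\widehat{a_0}$ elsewhere, $a_0$ the representative of the class of $a$. A routine verification gives that $\nu_i$ is an injective homomorphism for $rd$ and $wr$; that $\nu_1$ and $\nu_2$ agree on $\cN$ (using that an $\cN$-array differs from its $\cN$-representative only at $\INDEX^\cN$, together with the agreement of the $\widehat{a_0}$), so the amalgamation square commutes on arrays; and, crucially, the \emph{separation property} $\nu_1(\ARRAY^{\cM_1})\cap\nu_2(\ARRAY^{\cM_2})=\nu_1(\ARRAY^\cN)=\nu_2(\ARRAY^\cN)$: an element of the intersection must carry a $J$-fragment shared by a $\cM_1$- and a $\cM_2$-representative, hence coming from an $\cN$-representative, and a short $wr$-computation then identifies it as the $\nu$-image of a genuine $\cN$-array.

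Finally I would fix $\diff^{\cM^*}$: on a pair $(f,g)$ with $f,g$ both in $\nu_i(\ARRAY^{\cM_i})$ for the same $i$, put $\diff^{\cM^*}(f,g):=\nu_i(\diff^{\cM_i}(\nu_i^{-1}(f),\nu_i^{-1}(g)))$; on any other pair of distinct functions pick some index at which they differ (and anything on the diagonal). The separation property makes the two clauses agree where they overlap --- on $\nu(\ARRAY^\cN)$ both return $\nu(\diff^\cN(\cdot,\cdot))$, since $\mu_1,\mu_2$ are embeddings and so preserve $\diff$ --- so $\diff^{\cM^*}$ is well-defined. Then $\cM^*$ models $\AXDIFF$: axioms~(\ref{ax1}) and~(\ref{ax2}) hold because $rd,wr$ are the standard operations on functions, and~(\ref{ax3a}) holds by the very definition of $\diff^{\cM^*}$ (inherited from $\cM_i$ on a $\nu_i$-image, forced otherwise). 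As $\nu_1,\nu_2$ are $\Sigma$-embeddings with $\nu_1\circ\mu_1=\nu_2\circ\mu_2$, the structure $\cM^*$ is the required amalgam. The degenerate situations in which $\INDEX^\cN$, $\ELEM^\cN$ or $\ARRAY^\cN$ is empty are covered by the same construction, allowing the corresponding domains to be empty.

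The step I expect to be the genuine obstacle is the construction of the $\nu_i$ together with the separation property: without it the ``naive'' pushout can identify an array coming from $\cM_1$ with one coming from $\cM_2$ although neither lies in $\cN$, and then $\diff^{\cM^*}$ on that array would be forced to equal two incompatible indices (this really occurs, for instance when $\cM_1$ and $\cM_2$ add no new indices or elements over $\cN$). Reserving enough fresh indices $J$ and spreading the class representatives out generically on them is precisely what restores consistency; the remaining verifications --- that $\nu_i$ respects $rd$ and $wr$, that the square commutes, and that $\cM^*$ satisfies the array axioms --- are routine bookkeeping with cardinality dependence classes.
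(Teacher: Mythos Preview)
Your proof is correct and follows the same strategy as the paper's: reduce to functional models, take a standard model as the amalgam, extend each array via the representative of its cardinality-dependence class, establish the separation property (what the paper states as its Claim, i.e.\ strong amalgamation), and then use separation to define $\diff$ consistently. The only difference is in how separation is engineered---the paper first enlarges $\cM_1,\cM_2$ so that each has a strict index $j_i$ and a strict element $e_i$ and extends arrays not dependent on $\cN$ by the ``wrong-side'' value $e_i$, whereas you adjoin a fresh block $J$ of indices directly to the amalgam and give each class representative a distinct $J$-fingerprint; this is a cosmetic variation and arguably a bit cleaner since it avoids modifying $\cM_1,\cM_2$.
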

\begin{proof}
  Take two embeddings $\mu_0:\cN\longrightarrow \cM_0$ and
  $\mu_1:\cN\longrightarrow \cM_1$.  As observed above, we can
  suppose---w.l.o.g.---that $\cN, \cM_0, \cM_1$ are 
  functional models,
  that $\mu_0, \mu_1$ restricts to inclusions for the sorts $\INDEX$
  and $\ELEM$, and that $(\ELEM^{\cM_0}\setminus \ELEM^\cN) \cap
  (\ELEM^{\cM_1}\setminus \ELEM^\cN)=\emptyset$,
  $(\INDEX^{\cM_0}\setminus \INDEX^\cN) \cap (\INDEX^{\cM_1}\setminus
  \INDEX^\cN)=\emptyset$.  To simplify our task, we can also
  suppose---again w.l.o.g.---that there exists some
  $e_i\in(\ELEM^{\cM_i}\setminus \ELEM^\cN)$ and some $j_i\in
  (\INDEX^{\cM_i}\setminus \INDEX^\cN)$ (i.e. that these sets are not
  empty), for $i=0,1$. ({If this additional condition is not
    satisfied, it is sufficient to enlarge $\cM_1, \cM_2$ so that they
    satisfy it.})  The amalgamated model $\cM$ will be the standard
  model over $\INDEX^{\cM_0}\cup \INDEX^{\cM_1}$ and
  $\ELEM^{\cM_0}\cup \ELEM^{\cM_1}$.  We need to define
  $\nu_i:\cM_i\longrightarrow \cM$ ($i=0,1$) in such a way that
  $\nu_0\circ \mu_0=\nu_1 \circ \mu_1$.  The only relevant point is
  the action of $\nu_i$ on $\ARRAY^{\cM_i}$: as observed above, in
  order to define it, it is sufficient to extend any $a\in
  \ARRAY^{\cM_i}$ to the indexes $k\in(\INDEX^{\cM_{1-i}}\setminus
  \INDEX^\cN)$:
  \begin{enumerate}[(I)]
  \item we let the value $\nu_i(a)(k)$ be $e_{i}$ in case
    there is no $c$ such that $\cM_i\models | a -\mu_i(c)| < \omega$;
  \item otherwise, we can do the following: take any such
    $c$ such that $\cM_i\models | a -\mu_i(c)| < \omega$
    and put $\nu_i(a)(k) \coincide \mu_{1-i}(c)(k)$.
  \end{enumerate} 
  Because of Lemma~\ref{lem:dependency} the choice of $c$ in (II)
  above is immaterial.  In fact, any other $c'$ differs from $c$ only
  w.r.t.\ a finite set of indices in $\cM_i$.  This also holds in
  $\cN$ (by Lemma~\ref{lem:dependency}) and thus we have $\cN\models
  c'=wr(c, I, E)$ for some $I\subseteq \INDEX^\cN$. The latter implies
  that $\mu_{1-i}(c)$ and $\mu_{1-i}(c')$ cannot differ at any
  $k\in(\ELEM^{\cM_{1-i}}\setminus \ELEM^\cN)$.  This guarantees that
  $\nu_1\circ \mu_1=\nu_2 \circ \mu_2$.

  In order to define $\diff^\cM$ we can simply extend
  $\diff^{\cM_1}\cup \diff^{\cM_2}$ in such a way that
  axiom~\ref{ax3a} holds. More precisely, we define $\diff^\cM(a,b)$
  as follows: (i) if for some $i=0,1$, we have that $a=\nu_i(a')$ and
  $b=\nu_i(b')$, then $\diff^\cM(a,b)$ is taken to be
  $\diff^{\cM_i}(a', b')$; (ii) otherwise it is defined to be any $i$
  such that $a(i)\neq b(i)$ 
 (it is arbitrary whenever
  $a=b$). For this definition of $\diff^\cM$ to be correct, 
  it is sufficient 
  to show that
  \begin{desCription}
  \item\noindent{\hskip-12 pt\bf Claim:}\ \emph{if $a=\nu_0(a_0)=\nu_1(a_1)$, then there exists
    $c$ such that $a_0=\mu_0(c)$ and $a_1=\mu_1(c)$}.
  \end{desCription}
  To prove the claim, suppose that $a=\nu_0(a_0)=\nu_1(a_1)$. Then
  $\nu_0(a_0)$ and $\nu_1(a_1)$ must have been defined as in (II)
  above (otherwise they cannot coincide with each other at indexes
  $j_0, j_1$),\footnote{
  The Claim might be false in case $\INDEX^{\cM_1}=\INDEX^{\cN}=\INDEX^{\cM_2}$, this is the reason why we enlarged 
  $\INDEX^{\cM_1},\INDEX^{\cM_2}$
  by adding the extra indexes $j_0,j_1$.
  } which means that there exists $c_i$ such that for
  $i=0,1$ we have $\cM_i\models | a_i -\mu_i(c_i)| < \omega$. Since
  $\nu_0(a_0)=a=\nu_1(a_1)$, this means that
  $\nu_0(\mu_0(c_0))=\nu_1(\mu_1(c_0))$ and $a$ differ only at
  finitely many indexes; the same is true for $\nu_1(\mu_1(c_1))$ and
  $a$, which in turns implies that $\nu_1(\mu_1(c_0))$ and
  $\nu_1(\mu_1(c_1))$ differ only at finitely many indexes too. The
  same consequently holds for $c_0, c_1$ in $\cN$ too, for
  $\mu_0(c_0)$ and $\mu_0(c_1)$ in $\cM_0$ and for $\mu_1(c_0)$ and
  $\mu_1(c_1)$ in $\cM_1$.  Thus, since the choice of $c$ in (II) is
  immaterial, we can suppose---w.l.o.g.---that 
  $c_0\uguale c_1$ (let us use just $c$ to name it).
  Then, by
  (II) applied to the definition of $\nu_1(a_1)$, we have that
  $\nu_0(\mu_0(c))=\nu_1(\mu_1(c))$ and $a=\nu_1(a_1)$ cannot differ
  at any $k\in (\ELEM^{\cM_{0}}\setminus \ELEM^\cN)$. Similarly,
  $\nu_0(\mu_0(c))=\nu_1(\mu_1(c))$ and $a$ cannot differ at any $k\in
  (\ELEM^{\cM_{1}}\setminus \ELEM^\cN)$. Thus $a$ and
  $\nu_0(\mu_0(c))=\nu_1(\mu_1(c))$ possibly differ only for $k\in
  \INDEX^\cN$ and actually only for finitely many such $k$. But
  $a=\nu_0(a_0)=\nu_1(a_1)$, so the values of $a$ at any $k\in
  \INDEX^\cN$ belongs $\ELEM^{\cM_0}\cap \ELEM^{\cM_1}=\ELEM^\cN$,
  which means that $a$ is equal to
  $wr^\cM(\nu_0(\mu_0(c)),I,E)=\nu_0(\mu_0(wr^\cN(c, I, E)))$ for
  $I\subseteq \INDEX^\cN$ and $E\subseteq \ELEM^\cN$. In conclusion,
  we have that $a$ is of the kind $\nu_0(\mu_0(\tilde
  c))=\nu_1(\mu_1(\tilde c))$ and from $a=\nu_0(a_0)=\nu_1(a_1)$, we
  get $a_0=\mu_0(\tilde c)$ and $a_1=\mu_1(\tilde c)$ because $\nu_0,
  \nu_1$ are injective.  
\end{proof}
Before stating the main result of the paper which immediately follows
from Theorems~\ref{thm:interpolation-amalgamation}
and~\ref{thm:amalgamat}, it is interesting to observe the following
about the Claim used in the proof of Theorem~\ref{thm:amalgamat}.
The property mentioned in the Claim is known as \emph{strong
  amalgamability property} in Universal Algebra and is key to derive
quantifier-free interpolation in combination of
theories~\cite{strong_amalgamability}.  The fact that \AXDIFF enjoys
strong amalgamability is crucial to transfer quantifier-free
interpolation to combinations of \AXDIFF with other important
theories, like equality with uninterpreted symbols, difference logic,
real arithmetic, appropriate variants of integer linear arithmetic,
etc.  We refer the reader to~\cite{strong_amalgamability} for details.
\begin{theorem}
  \label{thm:amalgamation}
  The theory $\AXDIFF$ admits quantifier-free interpolation.
\end{theorem}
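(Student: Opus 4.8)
The statement \ref{thm:amalgamation} asserts that $\AXDIFF$ admits quantifier-free interpolation, and the hard work has already been done. The plan is simply to invoke the two theorems that have just been proved. First, I would observe that $\AXDIFF$ is a \emph{universal} theory: its axioms are \eqref{ax1}, \eqref{ax2}, and \eqref{ax3a}, each of which is obtained by prefixing a block of universal quantifiers to a quantifier-free $\Sigma$-formula. (Indeed, the whole point of passing from $\AXEXT$ to $\AXDIFF$ via Skolemization of the extensionality axiom was to land in a universal theory.) This puts us in the scope of Theorem~\ref{thm:interpolation-amalgamation}, which states that a universal theory admits quantifier-free interpolants if and only if it has the amalgamation property.

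Next, I would cite Theorem~\ref{thm:amalgamat}, which establishes precisely that $\AXDIFF$ has the amalgamation property. Combining this with the ``if'' direction of Theorem~\ref{thm:interpolation-amalgamation} applied to the universal theory $\AXDIFF$ yields the desired conclusion: $\AXDIFF$ admits quantifier-free interpolation. That is the entire argument.

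Since everything genuinely difficult has been absorbed into Theorem~\ref{thm:amalgamat} (whose proof, in turn, rests on the analysis of functional models and the cardinality-dependence machinery of Lemma~\ref{lem:dependency}), there is no real obstacle remaining here; the only thing to verify carefully is that $\AXDIFF$ really is universal in the technical sense of Definition-level usage in Section~\ref{sec:background}, i.e.\ that $\diff$ being a function symbol (rather than a relation or a quantifier) does not spoil universality. It does not: adding $\diff$ merely enlarges the signature, and axiom \eqref{ax3a} remains a $\forall$-sentence over a quantifier-free matrix. Hence the proof is a one-line composition of the two cited results.

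\begin{proof}
  The theory $\AXDIFF$ is universal: its axioms \eqref{ax1}, \eqref{ax2}, and \eqref{ax3a} are all sentences obtained by prefixing a string of universal quantifiers to a quantifier-free formula (this is exactly why the extensionality axiom was Skolemized). By Theorem~\ref{thm:amalgamat}, $\AXDIFF$ has the amalgamation property. Applying Theorem~\ref{thm:interpolation-amalgamation} to the universal theory $\AXDIFF$, we conclude that $\AXDIFF$ admits quantifier-free interpolation.
\end{proof}
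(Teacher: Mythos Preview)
Your proof is correct and matches the paper's approach exactly: the paper states that the result ``immediately follows from Theorems~\ref{thm:interpolation-amalgamation} and~\ref{thm:amalgamat}'' and does not even supply a separate proof environment. Your additional remark that $\AXDIFF$ is universal (because axiom~\eqref{ax3a} is the Skolemized, hence purely universal, form of extensionality) makes explicit the one hypothesis the paper leaves implicit.
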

%
We conclude this section with some observations concerning the 
theories \AXEXT and \AX.
Lemma~\ref{lem:dependency} holds also for the theory
\AXEXT and the proof of Theorem~\ref{thm:amalgamat} goes through also for
\AXEXT.  However,
according to Theorem~\ref{thm:interpolation-amalgamation} in
Section~\ref{subsec:prelim-model-theory}, amalgamation alone is not
sufficient for establishing quantifier-free interpolation for theories
like \AXEXT which are not universal 
(for non universal theories one needs sub-amalgamability, not just amalgamability, see~\cite{strong_amalgamability}).
  Indeed, \AXEXT is amalgamable
but does not admit quantifier-free interpolation.

Despite being universal, \AX is not amalgamable and thus it does not
admit quantifier-free interpolation. Indeed, the left-to-right
implication of Lemma~\ref{lem:dependency} does not hold for \AX as 
the arguments in the proof of Theorem~\ref{thm:amalgamat}. 
To get a formal counterexample to the amalgamability of \AX,
consider the following situation.  
Let $\cN$ be the \AX-model in which $\ELEM^\cN$ and $\INDEX^\cN$ are
empty and $\ARRAY^\cM$ contains two distinct elements, say $a$ and
$b$.  As already observed, empty supports must be taken into account
when showing the amalgamation property and, for \AX, the axiom of
extensionality needs not be satisfied.  Extend $\cN$ to two standard
models $\cM_1$ and $\cM_2$, where $\ELEM^{\cM_1}=\{e, e'\},
\INDEX^{\cM_1}=\{ i\}$ and $\ELEM^{\cM_2}=\{d_1,d_2\},
\INDEX^{\cM_2}=\{ j_1,j_2\}$.  Then, embed $\cN$ into $\cM_1$ by
letting $a,b$ differ at $i$ (thus, e.g., $\cM_1\models a=wr(b, i, e)
\wedge rd(b,i)=e'$) and embed $\cN$ into $\cM_2$ by letting $a,b$
differ at both $j_1$ and $j_2$. Now, observe that amalgamation fails
because we should have
\begin{eqnarray*}
  \cM\models a=wr(b, i, e)\wedge rd(a,j_1)\not= rd(b, j_1)\wedge
  rd(a,j_2)\not= rd(b, j_2)\wedge j_1\not= j_2
\end{eqnarray*}
in any amalgamated model $\cM$ and this is in contradiction with the
two axioms of \AX.
%

\section{Modular constraints for Arrays with \diff\ and their combinations}
\label{sec:interpolation}

Theorem~\ref{thm:amalgamation} is proved by semantic arguments, hence
it does not give an interpolation algorithm; it only guarantees that,
by enumerating quantifier free formulae, one can find sooner or later
the desired interpolant.  In the rest of the paper, we develop
(\emph{independently} of the results of Section~\ref{sec:tharr})
techniques based on rewriting and constraint solving to construct an
algorithm computing quantifier-free interpolants for conjunctions of
ground literals in $\AXDIFF$.  Here, we introduce the notion of
``\nform constraint,'' which is the main data structure manipulated by
the quantifier-free interpolation procedure and we prove two key
properties.  First, we show that the satisfiability of \nform
constraints can be easily detected (Lemma~\ref{lem:normal}).  Second,
we prove that they can be combined in a modular way
(Proposition~\ref{prop:merging}).


Preliminarily, we introduce some notational conventions which are
specific for constraints in the theory \AXDIFF.  We use $a, b, \dots$
to denote free constants of sort \ARRAY, $i, j, \dots$ for free
constants of sort \INDEX, and $d, e, \dots$ for free constants of sort
\ELEM; $\alpha, \beta, \dots$ stand for free constants of any sort.
Below, we shall introduce non-ground rewriting rules involving
(universally quantified) variables of sort \ARRAY: for these
variables, we shall use the symbols $x, y, z, \dots$.  We make use of
the following abbreviations.
\begin{iteMize}{$-$}
 \item[-] [Nested write terms] By $wr( a, I, E )$ we indicate a nested
   write on the array variable $a$, where indexes are represented by
   the free constants list $I \coincide i_1, \ldots, i_n$ and elements
   by the free constants list $E \coincide e_1, \ldots, e_n$; more
   precisely, $wr( a, I, E )$ abbreviates the term $wr(wr(\cdots wr
   (a, i_1, e_1) \cdots), i_n, e_n)$.  Notice that, whenever the
   notation $wr( a, I, E )$ is used, the lists $I$ and $E$ must have
   the same length; for empty $I, E$, the term $wr(a,I,E)$
   conventionally stands for $a$.
 \item[-] [Multiple read literals] Let $a$ be a constant of sort \ARRAY,
   $I \coincide i_1, \ldots, i_n$ and $E \coincide e_1, \ldots, e_n$
   be lists of free constants of sort \INDEX\ and \ELEM, respectively;
   $rd(a,I)\uguale E$ abbreviates the formula $rd(a,i_1)\uguale
   e_1\wedge \cdots\wedge rd(a,i_n)\uguale e_n$.
 \item[-] [Multiple equalities] If $L\coincide\alpha_1, \dots, \alpha_n$
   and $L'\coincide\alpha'_1, \dots, \alpha'_n$ are lists of constants
   of the same sort, by $L=L'$ we indicate the formula
   $\bigwedge_{i=1}^n \alpha_i\uguale \alpha'_i$.
 \item[-] [Multiple distinctions] If $L\coincide\alpha_1, \dots,
   \alpha_n$ is a list of constants of the same sort, by $Distinct(L)$
   we abbreviate the formula $\bigwedge_{i\neq j} \alpha_i \not\uguale
   \alpha_j$.
 \item[-] [Juxtaposition and subtraction] If $L\coincide\alpha_1, \dots,
   \alpha_n$ and $L'\coincide\alpha'_1, \dots, \alpha'_m$ are lists of
   constants, by $L\cdot L'$ we indicate the list $\alpha_1, \dots,
   \alpha_n, \alpha'_1, \dots, \alpha'_m$; for $1\leq k\leq n$, the
   list $L-k$ is the list $\alpha_1, \dots,\alpha_{k-1}, \alpha_{k+1},
   \dots, \alpha_n$.
\end{iteMize}
\begin{figure}[t]
    \centering
  \begin{eqnarray*}
    \begin{array}{|r|l|}
      \hline 
    \mbox{\textbf{Refl} } &
    wr(a, I, E)\uguale a
     \leftrightarrow  
    rd(a,I)\uguale E 
    \\
    & 
      \mbox{\emph{Proviso}: $Distinct(I)$}
    \\ \hline
    \mbox{\textbf{Symm} } &
    (wr(a, I, E)\uguale b \wedge rd(a,I)\uguale D) 
     \leftrightarrow 
    (wr(b, I, D)\uguale a \wedge rd(b,I)\uguale E) 
    \\
    & 
    \mbox{\emph{Proviso}:  $Distinct(I)$}
     \\ \hline
    \mbox{\textbf{Trans} } &
    (a\uguale wr(b, I, E) \wedge b\uguale wr(c, J, D))
     \leftrightarrow  
    (a\uguale wr(c, J\cdot I, D\cdot E) \wedge b\uguale wr(c, J, D)) 
    \\  \hline
    \mbox{\textbf{Confl} } &
    b\uguale wr(a, I\cdot J, E\cdot D) \wedge b\uguale wr(a, I\cdot H, E'\cdot F) \leftrightarrow \\
    & \leftrightarrow (b\uguale wr(a, I, E)  \wedge
    E\uguale E' \wedge 
    rd(a,J)\uguale D \wedge 
    rd(a,H)\uguale F 
    ) \\
    & 
    \mbox{\emph{Proviso}:  $Distinct(I\cdot J\cdot H)$}
     \\ \hline
    \mbox{\textbf{Red} } &
    (a\uguale wr(b, I, E) \wedge  rd(b,i_k)\uguale e_k)
     \leftrightarrow  
    (a\uguale wr(b, I\!-\!k, E\!-\!k) \wedge  rd(b,i_k)\uguale e_k) \\
    & 
   \mbox{\emph{Proviso}:  $Distinct(I)$}
     \\ \hline
    \end{array}
  \end{eqnarray*}
  \begin{minipage}{.95\textwidth}
    \emph{Legenda}: $a$ and $b$ are constants of sort \ARRAY;
    $I\coincide i_1, \dots, i_n$, $J\coincide j_1, \dots, j_m$ and   $H\coincide h_1, \dots, h_l$ are lists of
    constants of sort \INDEX;  $E\coincide e_1, \dots, e_n$, $E'\coincide e_1',
    \dots, e_n'$, $D\coincide d_1, \dots, d_m $,  and $F\coincide f_1, \dots, f_l$
    are lists of constants of sort \ELEM.
  \end{minipage}
  \caption{\label{fig:key-lemmas}Key properties of write terms}
\end{figure}
Some key properties of equalities involving write terms are stated in
the following lemma (see also Figure~\ref{fig:key-lemmas}).
\begin{lemma}[Key properties of write terms]
  \label{lem:key-lemmas} 
  The formulae in Figure~\ref{fig:key-lemmas} are all \AXDIFF-valid
  under the assumption that their provisoes - if any - hold (when we
  say that a formula $\phi$ is \AXDIFF-valid under the proviso $\pi$,
  we just mean that $\pi\vdash_{\AXDIFF} \phi$).
\end{lemma}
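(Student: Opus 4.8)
The plan is to reduce the claim to \emph{standard} (functional) models and then to verify each of the five equivalences by an elementary, index-by-index computation. For each row of Figure~\ref{fig:key-lemmas}, the assertion $\pi\vdash_{\AXDIFF}\phi$ is the $\AXDIFF$-validity of the universal closure of $\pi\to\phi$ (with $\pi$ taken to be $\top$ when there is no proviso); since $\pi$ is a conjunction of disequalities and $\phi$ a biconditional between conjunctions of equalities and read-literals, this closure is a \emph{universal} sentence, and the function symbol $\diff$ does not occur in it. As recalled at the beginning of Section~\ref{subsec:semantic-arg}, every model of $\AXDIFF$ embeds into a standard model; since truth of universal sentences is inherited by substructures, it suffices to verify each equivalence in an arbitrary standard model $\bar{\cM}$, in which arrays are genuine functions $\INDEX^{\bar{\cM}}\to\ELEM^{\bar{\cM}}$, $rd$ is application, $wr$ is point-wise update, and hence array equality is extensional. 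Three routine observations about nested writes in $\bar{\cM}$ will be used throughout: (a) under $Distinct(I)$, $rd(wr(a,I,E),j)$ equals $e_k$ if $j\coincide i_k$ for some $k$ and equals $rd(a,j)$ otherwise; (b) writes at distinct indices commute, so under $Distinct(I)$ the value of $wr(a,I,E)$ does not depend on the order of the two parallel lists; (c) if $rd(a,i)\uguale e$ then $wr(a,i,e)\uguale a$, i.e.\ a write that already agrees with the current array can be deleted.

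With these in hand, four of the five rows are short. \textbf{Trans} requires no model theory at all: unfolding the abbreviation, $wr(c,J\cdot I,D\cdot E)$ \emph{is} the term $wr(wr(c,J,D),I,E)$, so under the common conjunct $b\uguale wr(c,J,D)$ each side of the biconditional rewrites, by substitution of equals for equals, to $a\uguale wr(b,I,E)\wedge b\uguale wr(c,J,D)$. For \textbf{Refl}, by (a) the arrays $wr(a,I,E)$ and $a$ agree off $I$ and at $i_k$ take the values $e_k$ and $rd(a,i_k)$, so they are equal iff $rd(a,I)\uguale E$. For \textbf{Symm}, assuming $wr(a,I,E)\uguale b$ and $rd(a,I)\uguale D$: by (a), $b$ agrees with $a$ off $I$ while $rd(b,i_k)\uguale e_k$ and $rd(a,i_k)\uguale d_k$, so $wr(b,I,D)$ agrees with $b$ off $I$ and equals $a$ at each $i_k$, yielding $wr(b,I,D)\uguale a$ and $rd(b,I)\uguale E$; the converse is symmetric. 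For \textbf{Red}, under $Distinct(I)$ and $rd(b,i_k)\uguale e_k$, observation (b) moves the $i_k$-write to the top of the nesting in $wr(b,I,E)$ and then (c) deletes it (giving the left-to-right direction) or reinstates it (the converse).

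The real work is \textbf{Confl}, which is precisely the confluence-of-critical-pairs property and where the proviso $Distinct(I\cdot J\cdot H)$ is essential. For the left-to-right direction, one compares $wr(a,I\cdot J,E\cdot D)$ and $wr(a,I\cdot H,E'\cdot F)$, both equal to $b$, using (a): at each $i_p$ they return $e_p$ and $e'_p$, forcing $E\uguale E'$; at each $j_q$ --- which by distinctness lies in neither $I$ nor $H$ --- the second term returns $rd(a,j_q)$ while the first returns $d_q$, forcing $rd(a,J)\uguale D$, and symmetrically $rd(a,H)\uguale F$; then, since $rd(a,J)\uguale D$, observation (c) makes the $J$-block of writes in $wr(a,I\cdot J,E\cdot D)$ redundant, so $b\uguale wr(a,I,E)$. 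For the converse one reassembles: $wr(a,I\cdot J,E\cdot D)\uguale wr(wr(a,I,E),J,D)\uguale wr(b,J,D)\uguale b$, the last step by (c) because $b$ agrees with $a$ off $I$ and $rd(a,J)\uguale D$, and likewise $wr(a,I\cdot H,E'\cdot F)\uguale wr(b,H,F)\uguale b$ using $E\uguale E'$ and $rd(a,H)\uguale F$. The main obstacle is keeping the three index blocks $I$, $J$, $H$ cleanly separated in the left-to-right direction of \textbf{Confl}, which is exactly what the proviso $Distinct(I\cdot J\cdot H)$ is there to guarantee.
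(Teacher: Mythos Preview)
Your proof is correct and considerably more detailed than the paper's, which dispatches the lemma in two sentences (``all straightforward to derive'') and only sketches \textbf{Trans} syntactically, using the two equalities as rewrite rules to normalize both sides of $a\uguale wr(b,I,E)$ to the same term.

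You instead take a semantic route: first reduce to standard models via the embedding observation of Section~\ref{subsec:semantic-arg} and preservation of universal sentences under substructures, then verify each row by an extensional index-by-index computation. This works, and it has the merit of making the role of the provisos---particularly $Distinct(I\cdot J\cdot H)$ in \textbf{Confl}---completely explicit, and of actually writing out the \textbf{Confl} case, which the paper leaves to the reader. One remark: the preliminary reduction to standard models, while valid, is heavier than necessary. Extensionality already holds in \emph{every} model of $\AXDIFF$ (it follows from axiom~\eqref{ax3a}), so your observations (a)--(c) and the index-by-index arguments go through in an arbitrary model without the detour; the embedding into a standard model is a visualization aid rather than a logical step.
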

\begin{proof}
  The properties in Figure~\ref{fig:key-lemmas} are all
  straightforward to derive.  Here, we just sketch the proof of
  \textbf{Trans}itivity, as an example: one side is by replacement of
  equals; for the-right-to-left side, notice that the equalities
  $a\uguale wr(c, J\cdot I, D\cdot E)$ and $b\uguale wr(c, J, D)$ can
  be used as rewrite rules to rewrite both members of $a\uguale wr(b,
  I, E)$ to the same term.
\end{proof}

\subsection{Modular constraints in $\AXDIFF$}
A (ground) \emph{flat} literal is a literal of the form $a\uguale wr(
b, I, E ), rd(a, i)\uguale e, \diff(a,b)\uguale i, \alpha\uguale\beta,
\alpha\not\uguale \beta$.  Notice that replacing a sub-term $t$ with a
fresh constant $\alpha$ in a constraint $A$ and adding the
corresponding defining equation $\alpha\uguale t$ to $A$ always
produces an $\exists$-equivalent constraint; by repeatedly applying
this method, one can show that every constraint is
$\exists$-equivalent to a \emph{flat} constraint, i.e., to one
containing only flat literals.  We split a flat constraint $A$ into
two parts, the \emph{index} part $A_I$ and the \emph{main} part $A_M$:
$A_I$ contains the literals of the form $ i\uguale j, i\not\uguale j,
\diff(a,b)\uguale i, $ whereas $A_M$ contains the remaining literals,
i.e., those of the form $ a\uguale wr(b, I, E), a\not\uguale b, rd (a,
i)\uguale e, e\uguale d, e\not\uguale d $ (atoms $a\uguale b$ are
identified with literals $a\uguale wr(b, \emptyset, \emptyset)$).  We
write $A = <A_I, A_M>$ to indicate the two parts of the constraint
$A$.  In the main part of a constraint, positive literals will be
treated as rewrite rules; to get a suitable orientation, we use a
\emph{lexicographic path ordering} with a total precedence $>$ such
that $ a>wr>rd>\diff>i>e, $ for all $a, i, e$ of the corresponding
sorts.  This choice orients equalities $a\uguale wr(b, I, E)$
\emph{from left to right} when $a>b$; equalities like $a\uguale wr(b,
I, E)$ for $a< b$ or $a\coincide b$ will be called \emph{badly
  orientable} equalities.  
\begin{definition}
  \label{def:normal}
  A constraint $A= <A_I, A_M>$ is said to be \emph{\nform} iff
  it is flat and the following conditions are satisfied (we let
  $\tilde I, \tilde E$ be the sets of free constants of sort $\INDEX$
  and $\ELEM$ occurring in $A$):
  \begin{desCription}
  \item\noindent{\hskip-12 pt\rm\phantom{i}(o)}\  no positive index literal $i\uguale j$ occurs in $A_I$;
   \item\noindent{\hskip-12 pt\rm\phantom{ii}(i)}\  no negative array literal $a\not\uguale b$ occurs
     in $A_M$;
  \item\noindent{\hskip-12 pt\rm\phantom{i}(ii)}\ $A_M$ does not contain badly orientable equalities;
  \item\noindent{\hskip-12 pt\rm (iii)}\ the rewriting system $A_R$ given by the oriented
    positive literals of $A_M$ joined with the rewriting rules
    \begin{eqnarray}
      rd(wr(x, i, e),j) \rightarrow rd(x,j)  &\hskip .6cm {\hbox{\rm for $i, j\in \tilde I$, $e\in \tilde E$, $i\not\coincide j$}}\label{eq:r1} \\
      rd(wr(x, i, e),i) \rightarrow e &\hskip .6cm {\hbox{\rm for $i\in \tilde I$, $e\in \tilde E$}}  \label{eq:r2}\\
      wr(wr(x, i, e), j, d) \rightarrow wr(wr(x, j, d), i, e) &\hskip .6cm {\hbox{\rm for $i, j\in \tilde I$, $e, d\in \tilde E$, $i> j$}} \label{eq:r3} \\
      wr(wr(x, i, e), i, d) \rightarrow wr(x, i, d). &\hskip .6cm {\hbox{\rm for $i\in \tilde I$, $e, d\in\tilde  E$}}\label{eq:r4}
    \end{eqnarray}
    is confluent
     and ground
    irreducible;\footnote{The latter means that no rule can be used to
      reduce the left-hand or the right-hand side of another ground rule.
      Notice that
      ground rules from $A_R$ are precisely the rules obtained by
      orienting an equality from $A_M$
      (rules~\eqref{eq:r1}-\eqref{eq:r4} are not ground as they
      contain one \emph{variable}, namely the array variable $x$).}
  \item\noindent{\hskip-12 pt\rm (iv)}\
    if $a\uguale wr(b, I, E)\in A_M$ and $i, e$ are in the same position in
    the lists $I, E$, respectively, then $rd(b,i)\not\downarrow_{A_R}
    e$; 
  \item\noindent{\hskip-12 pt\rm\phantom{i}(v)}\ $\lbrace \diff(a,b)\uguale i, \diff(a', b')\uguale i'\rbrace
    \subseteq A_I$ and $a\downarrow_{A_R} a'$ and $ b\downarrow_{A_R}
    b'$ imply $i\coincide i'$;
  \item\noindent{\hskip-12 pt\rm (vi)}\ $\diff(a,b)\uguale i\in A_I$ and $rd(a,i)\downarrow_{A_R}
    rd(b,i)$ imply $a\downarrow_{A_R} b$.
  \end{desCription}
\end{definition}
Condition (o) means that the index constants occurring in a \nform
constraint are implicitly assumed to denote distinct objects.  This is
supported also by the statement of Lemma~\ref{lem:normal} below, from
which, it is evident that the addition of all the negative literals
$i\not\uguale j$ 
(for $i,j\in \tilde I,$ with $i\not\coincide j$) 
does
not compromise the satisfiability of a \nform constraint, precisely
because such negative literals are implicitly (already) part of the constraint.
In Condition (i), negative array literals $a\neq b$ are not allowed
because they can be replaced by suitable literals involving fresh
constants and the \diff\ operation (see axiom~\eqref{ax3a}).
%
%
%
Rules \eqref{eq:r1} and \eqref{eq:r2} mentioned in condition (iii)
reduce read-over-writes and rules \eqref{eq:r3} and \eqref{eq:r4} sort
indexes in flat terms $wr(a, I, E)$ in ascending order.  In addition,
condition (iv) prevents further redundancies in our rules.  Finally,
conditions (v) and (vi) deal with \diff.  In particular, (v) says that
\diff \ is ``well defined'' and (vi) is a ``conditional'' translation
of the contraposition of axiom~\eqref{ax3a}.

The non-ground rules from Definition~\ref{def:normal}(iii) form a
convergent rewrite system (critical pairs are confluent): this can be
checked manually (and can be confirmed also by 
tools like SPASS or MAUDE). 
Ground rules from $A_R$ are of the form
\begin{eqnarray}
&
a\rightarrow wr(b, I, E),
\label{eq:rules1} \\
&
rd(a, i)\rightarrow e,~~~~~
\label{eq:rules2}\\
&
\label{eq:rules3}
 e\rightarrow d.~~~~~~~~~~~~~~
\end{eqnarray}
Only rules of the form~\eqref{eq:rules3} can overlap with the
non-ground rules~\eqref{eq:r1}-\eqref{eq:r4}, but the resulting
critical pairs are trivially confluent.  Thus, in order to check
confluence of $A_M$, \emph{only overlaps between ground
  rules~\eqref{eq:rules1}-\eqref{eq:rules3} need to be considered}
(this is the main advantage of our choice to orient equalities
$a\uguale wr(b, I, E)$ from left to right instead of right to left).

\begin{lemma}\label{lem:normal}
Suppose that $A$ is \nform.
Then $A$  is \AXDIFF-satisfiable iff 
 there is no element
  inequality $e\neq d$ in $A_M$ such that $ e \downarrow_{A_R}  d$. 
 Moreover, $A$ is \AXDIFF-satisfiable iff 
  $$
  A\cup \{i\neq j\,\vert\, i,j\in \tilde I, i\not \coincide j\}\cup \{\alpha\neq \beta\}_{\alpha,\beta}
 $$ (varying $\alpha,\beta$
  among the different pairs of element and array constants in normal form occurring in $A$) is $\AXDIFF$-satisfiable.
%
\end{lemma}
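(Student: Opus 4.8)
The plan is to establish the two "iff" statements by building an explicit \AXDIFF-model whenever the stated syntactic condition holds, and by observing the easy direction (unsatisfiability forced by a joinable element inequality) separately. The key observation is that the rewrite system $A_R$ of a \nform\ constraint is confluent (condition~(iii)) and that its ground part only contains rules of the shapes~\eqref{eq:rules1}--\eqref{eq:rules3}. Hence every ground \ARRAY-, \INDEX-, and \ELEM-term has a unique $A_R$-normal form, and we can try to read off a model directly from these normal forms. The ``only if'' direction of the first statement is immediate: if $e\neq d\in A_M$ with $e\downarrow_{A_R} d$, then $A$ \AXDIFF-entails $e\uguale d$ (since every rule of $A_R$ is either an instance of an \AXDIFF-axiom or an oriented literal of $A$), contradicting the literal $e\neq d$; so $A$ is unsatisfiable.

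For the ``if'' direction I would construct a standard (functional) model $\cM$, which is legitimate by the discussion in Section~\ref{subsec:semantic-arg}. Take $\INDEX^\cM$ to be the set of $A_R$-normal forms of the index constants $\tilde I$ (these are pairwise distinct as elements of $\cM$ precisely because condition~(o) removed positive index equalities and no index rule rewrites one index constant to another); add, if needed, one extra fresh index to make the set nonempty. Take $\ELEM^\cM$ to be the set of $A_R$-normal forms of the element constants occurring in $A$, again adding a fresh element if necessary, and interpret each element constant by its normal form. For the array sort, interpret each array constant $a$ by the function $i\mapsto \inter{rd(a,i)}{}$ where the value is computed via $A_R$-normalization (using rules~\eqref{eq:r1}--\eqref{eq:r4} together with the ground rules to reduce $rd(a,\cdot)$ to some element normal form, falling back to a default element on indices outside $\tilde I$); let $\ARRAY^\cM$ be the closure of this set of functions under cardinality dependence so that $\cM$ is really a functional model as in Section~\ref{subsec:semantic-arg}. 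Finally interpret $\diff$ by sending a pair of arrays to the index prescribed by a literal $\diff(a,b)\uguale i$ in $A_I$ when the arguments match (well-definedness here is exactly condition~(v)), and otherwise to any index where the two functions differ, or an arbitrary index if they are equal.

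The bulk of the work—and the main obstacle—is verifying that this $\cM$ is an \AXDIFF-model and satisfies $A$. For the axioms: \eqref{ax1} and \eqref{ax2} hold because $rd$ is interpreted as function application and $wr$ as point-wise update by construction, and because rules~\eqref{eq:r1}--\eqref{eq:r4} are exactly the read-over-write/index-sorting equations, so they are sound for this interpretation; \eqref{ax3a} holds by the choice of $\diff^\cM$ together with condition~(vi), which guarantees that whenever $\diff(a,b)\uguale i\in A_I$ and $a,b$ denote distinct functions, the index $i$ is indeed a point of difference (if $rd(a,i)$ and $rd(b,i)$ were joinable then condition~(vi) forces $a\downarrow_{A_R}b$, i.e. they denote the same function). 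For $A$ itself: negative element and array literals $e\neq d$ are satisfied by the hypothesis that no such $e,d$ are joinable, and by condition~(i) there are no negative array literals to worry about; positive literals $a\uguale wr(b,I,E)$ hold because, by condition~(iv) and confluence, evaluating the left- and right-hand sides as functions over $\INDEX^\cM$ gives the same result at every index; $rd(a,i)\uguale e$ and the $\diff$-literals hold by definition of $rd^\cM$ and $\diff^\cM$. The second statement of the lemma then follows because the enlarged constraint is satisfied by the very same $\cM$: distinct index constants were interpreted as distinct elements of $\INDEX^\cM$, and distinct element/array constants in normal form were interpreted by distinct normal forms (for arrays this uses condition~(iv) once more to see that two array constants with distinct normal forms really denote distinct functions), so adding the literals $i\neq j$ and $\alpha\neq\beta$ does not disturb satisfaction; conversely satisfiability of the larger constraint trivially implies satisfiability of $A$.
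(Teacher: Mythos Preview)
Your overall strategy---build a standard model whose supports are read off from $A_R$-normal forms---matches the paper's, and the easy direction is fine. The gap is in how you interpret array constants. You take $\INDEX^\cM$ to be (essentially) $\tilde I$ and interpret each array constant $a$ as the function $i\mapsto\text{nf}(rd(a,i))$. With only these indices, nothing prevents two \emph{distinct} normal-form array constants $a,b$ from being interpreted as the \emph{same} function: take a constraint with no $\diff$-literals, a single index $i$, and rules $rd(a,i)\to e$, $rd(b,i)\to e$. This is \nform, yet $a^\cM=b^\cM$. Condition~(iv), which you invoke, says nothing here---it constrains rules $a\to wr(b,I,E)$, not pairs of irreducible array constants. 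So your claim that ``two array constants with distinct normal forms really denote distinct functions'' is unjustified, and the second statement of the lemma fails for your model.

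The same collapse also threatens the first statement: if $a^\cM=a'^\cM$ while $a\not\downarrow_{A_R}a'$, and $A_I$ contains $\diff(a,b)\uguale i$ and $\diff(a',b)\uguale i'$ with $i\not\coincide i'$, condition~(v) does not apply (its hypothesis $a\downarrow a'$ fails), so your $\diff^\cM$ is not well-defined. The paper fixes this with a small but essential trick: it adjoins one fresh index $*$ to $\INDEX^\cM$ and enlarges $\ELEM^\cM$ by the set $X$ of normal-form array constants, then sets $a^\cM(*)\coincide a\in X$ for each such $a$. This tags every normal-form array constant with itself at $*$, forcing $a^\cM\neq b^\cM$ whenever $a\not\coincide b$ are both irreducible; from there the paper proves the equivalence $a^\cM=b^\cM\Leftrightarrow a\downarrow_{A_R}b$ (using condition~(iv) only for the case where $a,b$ reduce to $wr(c,I,E),wr(c',I',E')$), which is exactly what is needed both for $\diff^\cM$ and for the second ``iff''. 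Your construction is missing this separating device.
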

\begin{proof}
  Clearly, the satisfiability of $A$ implies that for no negative
  index literal $e\not\uguale d$ from $A_M$, we have that
  $e\downarrow_{A_R} d$. Assume
 conversely that this is the case:
  our
  aim is to build a model for $A\cup \{\alpha\neq \beta\}_{\alpha,\beta}\cup \{i\neq j\}_{i,j}$ (varying $\alpha,\beta$
  and $i,j$ as indicated in the statement of the Lemma).
   We can freely make the following
  \emph{further assumption}: if $a, i$ occur in $A$ and $a$ is in
  normal form, there is some $e$ such that $rd(a,i)\uguale e$ belongs
  to $ A$ (in fact, if this does not hold, it is sufficient to add a
  further equality $rd(a,i)\uguale e$ - with fresh $e$ - without
  destroying the \nform property of the constraint).

  Let $I^*$ be the set of constants of sort $\INDEX$ occurring in $A$
  and let $E^*$ be the set of constants of sort $\ELEM$ in normal form
  occurring in $A$ (we have $I^*=\tilde I$ and $E^*\subseteq \tilde
  E$).
  Finally, we let $X$ be the set of free constants of sort \ARRAY
  occurring in $A$ which are in normal form.

  We build a model $\cM$ as follows (the symbol $+$ denotes disjoint
  union):
  \begin{iteMize}{$\bullet$} 
  \item ${\INDEX}^{\cM} \coincide I^*+\{*\}$; 
  \item ${\ELEM}^{\cM} \coincide E^*+X$; 
  \item $\ARRAY^\cM$ is the set of total functions from ${\INDEX}^{\cM}$ to
      ${\ELEM}^{\cM}$, $rd^\cM$ and $wr^\cM$ are the
    standard read and write operations (i.e. $rd^\cM$
      is  function application and $wr^\cM$ is the
      operation of modifying the first argument function 
      by giving it the third argument as a value for the second argument input);\footnote{ 
    In the terminology used in Section~\ref{subsec:semantic-arg}, this means that $\cM$ is a standard model.
    }
  \item for a constant $i$ of sort $\INDEX$, $i^\cM \coincide i$ for all $i\in
    I^*$; 
  \item for a constant $e$ of sort $\ELEM$, $e^\cM$ is the normal form
    of $e$;
  \item for a constant $a$ of sort $\ARRAY$ in normal form and $i\in
    I^*$, we put $a^\cM(i)$ to be equal
    to the normal form of $rd(a,i)$ (this is some $e\in \ELEM^\cM$ by
    our further assumption above); we also put $a^\cM(*) \coincide a$  
    (notice that ${\ELEM}^{\cM} \coincide E^*+X$, hence $a\in \ELEM^{\cM}$).
  \item for a constant $a$ of sort $\ARRAY$ not in normal form, let
    $wr(c, I, E)$ be the normal form of $a$: we let $a^\cM$ to be
    equal to $wr^\cM(c^\cM, I^\cM, E^\cM)$ ({This definition is
      correct because $a$ and $c$ cannot coincide; in fact, since
      $a<wr(a, I, E)$, the term $wr(a, I, E)$ cannot be the normal
      form of $a$.})
  \item we shall define $\diff^\cM$ later on.
  \end{iteMize}
  It is clear that in this way we have that all constants $\alpha$ of
  sort \ELEM or \ARRAY are interpreted in such a way that, if $\hat
  \alpha$ is the normal form of $\alpha$, then
  \begin{equation}
    \alpha^\cM= \hat{\alpha}^\cM.\label{eq:nfi}
  \end{equation}
  Also notice that, by the definition of $a^\cM$, if $e$ is the normal
  form of $rd(a,i)$, then we have
  \begin{equation}
    rd(a,i)^\cM= e^\cM\label{eq:nfi1}
  \end{equation}
  in any case (whether $a$ is in normal form or not). Finally, if $wr(c, I, E)$ is the normal form of $a$, then
 \begin{equation}\label{eq:nfi2}
  a^\cM = c^\cM \quad \Rightarrow \quad (I=\emptyset ~{\rm and}~E=\emptyset);
 \end{equation}
 this is because the only rule that can reduce $a$ must have $a$ as
 left-hand side and $wr(c,I, E)$ as right-hand side (rules are ground
 irreducible), thus in the rule $a\to wr(c, I, E)\in A_M$ we must have
 $I=\emptyset, E=\emptyset$ in case $a^\cM = c^\cM$ (recall
 Definition~\ref{def:normal}(iv)). 
 In more details, suppose that $I$ and $E$ are not empty and take
 $i\in I$ and $e\in E$ in corresponding positions.  We have that
 $rd(c,i)^\cM=rd^\cM(c^\cM, i^\cM)=
 c^\cM(i^\cM)=a^\cM(i^\cM)=rd^\cM(a^\cM, i^\cM)=rd(a,i)^\cM$ (we used
 the definition of interpretation of a ground term, the fact that
 $rd^\cM$ is interpreted as functional application and that
 $a^\cM=c^\cM$). Now, since $rd(a,i)$ normalizes to $e$,
 applying~(\ref{eq:nfi1}), we get that $rd(c,i)^\cM=e^\cM$, which
 means, again by~(\ref{eq:nfi1}), that $rd(c,i)$ normalizes to $e$ too
 ($e$ is in normal form, thus if $\tilde e$ is the normal form of
 $rd(c,i)$, we have that $\tilde e^\cM=e^\cM$ implies $e\coincide
 \tilde e$). This is contrary to Definition~\ref{def:normal}(iv). 

 Since $A$ is \nform, literals in $A$ are flat.  It is clear that all
 negative literals from $A$ are true: in fact, a \nform constraint
 does not contain inequalities between array constants, inequalities
 between index constants are true by construction and inequalities
 between element constants are true by the hypothesis of the
 Lemma. 
 Also, if $\alpha, \beta$ are either element or array constants in normal form, 
 we have $\alpha^\cM\neq \beta^\cM$ by construction (in particular, the interpretation of different array constants both in normal form differ at index $*$).
 Let us now consider positive literals in $A$: those from $A_M$
 are equalities of terms of sort $\ELEM$ or $\ARRAY$ and consequently
 are of the kind
 $$
 e\uguale d, \qquad a\uguale wr(c, I, E), \quad rd(a,i)\uguale e.
 $$ 
 Since ground rules are irreducible, $d$ is the normal form of $e$ and
 $wr(c, I, E)$ is the normal form of $a$, hence we have $e^\cM= d^\cM$
 and $a^\cM=wr(c, I, E)^\cM$ by~\eqref{eq:nfi} above. For the same
 reason $a$ and $e$ are in normal form in $rd(a,i)\uguale e$, hence
 $rd(a,i)^\cM= e^\cM$ follows by construction.

 It remains to define $\diff^\cM$ in such a way that flat literals
 $\diff(a,b)\uguale i$ from $A_I$ are true and the axiom~\eqref{ax3a}
 is satisfied.  Before doing that, let us observe that for all free
 constants $a,b$ occurring in $A$, \emph{we have that $a^\cM= b^\cM$
   is equivalent to $a\downarrow_{A_R} b$}. In fact, one side is
 by~\eqref{eq:nfi}; for the other side, suppose that $a^\cM= b^\cM$
 and that $wr(c, I, E)$, $wr(c', I', E')$ are the normal forms of $a$
 and $b$, respectively.  Then $c$ must be equal to $c'$, otherwise
 $a^\cM$ and $b^\cM$ would differ at index $*$.  If either $a$ or $b$
 is equal to $c$, trivially $a\downarrow_{A_R} b$ follows
 from~\eqref{eq:nfi2}.
 Otherwise, $a$ and $b$ are both reducible in $A_R$ and since
 ground rules are irreducible and the only rules that can reduce an
 array constant have the left-hand side equal to that array constant,
 we have that $a\to wr(c, I, E)$ and $b\to wr(c, I', E')$ are both
 rules in $A_R$: as such, they are subject to Condition (iv) from
 Definition~\ref{def:normal}.  First observe that we must have that
 $I\coincide I'$: otherwise, if there is $i\in I\setminus I'$, we
 could infer the following: (i) by ~\eqref{eq:nfi},
 $b^\cM(i)=c^\cM(i)$; (ii) $c^\cM(i)$ is the normal form of $rd(c,i)$
 by construction; (iii) by $a^\cM= b^\cM$, $c^\cM(i)$ is also equal to
 the normal form of the $e$ having in the list $E$ the same position
 as $i$ in the list $I$, contrary to Condition (iv) from
 Definition~\ref{def:normal}. Since terms are normalized with respect
 to rule~\eqref{eq:r3}, $I$ and $I'$ coincide not only as sets, but
 also as lists; this means that the lists $E$ and $E'$ coincide too
 (the terms $wr(c, I, E)$, $wr(c, I, E')$ are in normal form and we
 have $wr(c, I, E)^\cM=wr(c, I, E')^\cM$). In more details, let $i, e,
 \tilde e$ be in the $k$-th positions in the lists $I, E, E'$,
 respectively. From $wr(c, I, E)^\cM=wr(c, I, E')^\cM$, applying
 $rd^\cM(-, i^\cM)$, we get $e^\cM=\tilde e^\cM$, i.e. $e
 \downarrow_{A_R} \tilde e$, which means $e\coincide\tilde e$ because
 $wr(c, I, E)$, $wr(c, I, E')$ are in normal form (in particular,
 their sub-terms $e, \tilde e$ are not reducible). In conclusion,
 $a\downarrow_{A_R} b$ holds.

 Among the elements of $\ARRAY^\cM$, some of them are of the kind
 $a^\cM$ for some free constant $a$ of sort $\ARRAY$ occurring in $A$
 and some are not of this kind: we call the former `definable'
 arrays. In principle, it could be that $a^\cM=b^\cM$ for different
 $a, b$, but we have shown that this is possible only when $a$ and $b$
 have the same normal form.

 We are ready to define $\diff^\cM$: we must assign a value
 $\diff^\cM({\rm a},{\rm b})$ to all pairs of arrays ${\rm a}, {\rm
   b}\in \ARRAY^\cM$. If ${\rm a}$ or ${\rm b}$ is not definable or if
 there are no $a, b$ defining them such that $\diff(a,b)$ occurs in
 $A_I$, we can easily find $\diff^\cM({\rm a},{\rm b})$ so that axiom
 \eqref{ax3a} is true for ${\rm a}, {\rm b}$: one picks an index where
 they differ if they are not identical, otherwise the definition can
 be arbitrary. So let us concentrate into the case in which ${\rm a,
   b}$ are defined by constants $a, b$ such that the literal $\diff(a,
 b)\uguale i$ occurs in $A_I$: in this case, we define
 $\diff^\cM(a^\cM, b^\cM)$ to be $i$: Condition (v) from
 Definition~\ref{def:normal} (together with the above observation that
 two constants defining the same array in $\cM$ must have an identical
 normal form) ensures that the definition is correct and that all
 literals $\diff(a, b)\uguale i\in A_I$ becomes true. Finally,
 axiom~\eqref{ax3a} is satisfied by Condition (vi) from
 Definition~\ref{def:normal} and the fact that
 $rd(a,i)^\cM=rd(b,i)^\cM$ is equivalent to $rd(a,i)\downarrow_{A_R}
 rd(b,i)$ (to see the latter, just recall~\eqref{eq:nfi1}).
\end{proof}

\begin{remark}
  As we said, the importance of Definition~\ref{def:normal} lies in
  Lemma~\ref{lem:normal} and in Proposition~\ref{prop:merging}
  below. On the other hand, it is not true that if $A$ is \nform, then
  $A$ entails (modulo \AXDIFF) a positive literal $t\uguale v$ iff
  $t\downarrow_{A_R} v$, even in case $t, v$ are ground flat
  terms. {As a counterexample, consider $A=\{rd(a,i)\to e\}$; we have
    $A\vdash_{\AXDIFF} a\uguale wr(a,i,e)$ but $a\not\downarrow_{A_R}
    wr(a,i, e)$.  However, the proof of Lemma~\ref{lem:normal} shows
    that the following weaker---but still important---property holds:
    if $A$ is \nform and $t, v$ are terms of the same sort
    \emph{occurring in $A$}, then $A\vdash_{\AXDIFF} t\uguale v$ iff
    $t\downarrow_{A_R} v$.
%
%
 }  This may look unusual, however recall
  that our aim is not to decide equality by normalization but to have
   algorithms for satisfiability and interpolation.
\end{remark}

\subsection{Combining modular constraints}
Let $A, B$ be two constraints in the signatures $\Sigma^A, \Sigma^B$
obtained from the signature $\Sigma$ by adding some free constants and
let $\Sigma^C \coincide \Sigma^A \cap \Sigma^B$.  Given a term, a literal or
a formula $\varphi$ we call it:
\begin{iteMize}{$\bullet$}
\item {\em \abcommon} iff it is defined over $\Sigma^C$;
\item {\em \alocal} (resp. {\em \blocal}) if it is defined over
 $\Sigma^A$ (resp. $\Sigma^B$);
\item {\em $A$-strict} (resp. {\em $B$-strict}) iff it is \alocal
  (resp. \blocal) but not \abcommon;
\item {\em \abmixed} if it contains symbols in both $(\Sigma^A
  \setminus \Sigma^C)$ and $(\Sigma^B \setminus \Sigma^C)$;
\item {\em \abpure} if it does not contain symbols in both $(\Sigma^A
  \setminus \Sigma^C)$ and $(\Sigma^B \setminus \Sigma^C)$.
\end{iteMize}
(Notice that, sometimes in the literature about interpolation,
``\alocal'' and ``\blocal'' are used to denote what we call here
``$A$-strict'' and ``$B$-strict'').  The following modularity result
is crucial to
justify our interpolation algorithm for
\AXDIFF.
\begin{proposition}
  \label{prop:merging}
  Let $ A=\langle A_I,A_M \rangle$ and $B=\langle B_I,B_M \rangle$ be
  constraints in expanded signatures $\Sigma^A, \Sigma^B$ as above
  (here $\Sigma$ is the signature of \AXDIFF); let $A,B$ be both
  consistent and \nform.  Then $A\cup B$ is consistent and \nform, in
  case all the following conditions hold:
  \begin{desCription}
   \item\noindent{\hskip-12 pt\rm\phantom{I}(O)}\  an \abcommon literal belongs to $A$ iff it belongs to $B$; 
  \item\noindent{\hskip-12 pt\rm\phantom{II}(I)}\ every rewrite rule in $A_M\cup B_M$ whose left-hand side is
    \abcommon has also an \abcommon right-hand side;
  \item\noindent{\hskip-12 pt\rm\phantom{I}(II)}\ if $a, b$ are both \abcommon and $\diff(a,b)\uguale i \in
    A_I\cup B_I$, then $i$ is \abcommon too;
  \item\noindent{\hskip-12 pt\rm (III)}\  if a rewrite rule of the kind $a\rightarrow wr(c, I, E)$
    is in $A_M\cup B_M$ and the term $wr(c, I, E)$ is \abcommon, so
    is the constant $a$.
  \end{desCription}
\end{proposition}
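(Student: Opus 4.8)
The plan is to verify that $A\cup B$ is flat, satisfies conditions (o)--(vi) of Definition~\ref{def:normal}, and is \AXDIFF-consistent. Flatness and conditions (o), (i), (ii) are immediate: $A\cup B$ is the union of two flat constraints, neither of which contains positive index literals, negative array literals, or badly orientable equalities, and positive literals are oriented by the same lexicographic path ordering $\succ$ on both sides; so $(A\cup B)_I=A_I\cup B_I$, $(A\cup B)_M=A_M\cup B_M$, and the rewrite system $(A\cup B)_R$ of Definition~\ref{def:normal}(iii) has as ground rules the positive literals of $A_M\cup B_M$ oriented by $\succ$, together with the schematic rules~\eqref{eq:r1}--\eqref{eq:r4} instantiated over $\tilde I=\tilde I^A\cup\tilde I^B$, $\tilde E=\tilde E^A\cup\tilde E^B$. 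The key auxiliary fact I would isolate first is a \emph{locality preservation} lemma: on \alocal\ terms the relation $\to_{(A\cup B)_R}$ coincides with $\to_{A_R}$, and $\to_{A_R}$ keeps \alocal\ terms \alocal\ (symmetrically for $B$). Indeed, any rule of $(A\cup B)_R$ that rewrites a subterm of an \alocal\ term has all the constants of its left-hand side occurring in that term, hence \alocal; for a ground rule from $B_M$ this makes its left-hand side \abcommon, so by (I) its right-hand side is \abcommon, so by (O) the rule already lies in $A_M$; for an instance of~\eqref{eq:r1}--\eqref{eq:r4} it makes all index/element parameters lie in $\tilde I^A,\tilde E^A$, so the instance already lies in $A_R$. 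In particular the $(A\cup B)_R$-normal form of an \alocal\ term is its (\alocal) $A_R$-normal form.

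The heart of the argument is condition (iii): convergence and ground irreducibility of $(A\cup B)_R$. Termination comes from the lexicographic path ordering. For confluence, exactly as for a single \nform\ constraint, the schematic rules~\eqref{eq:r1}--\eqref{eq:r4} are convergent among themselves and their only overlaps with ground rules are with rules $e\to d$, producing trivially joinable critical pairs; hence only overlaps between two ground rules matter. Overlaps internal to $A_M$, or internal to $B_M$, are joinable because $A$ and $B$ are \nform. For an overlap between a ground rule of $A_M$ and one of $B_M$, the two left-hand sides share a subterm, so the constants involved are \abcommon; then (I) (the overlapping, hence \abcommon, left-hand side has an \abcommon\ right-hand side) followed by (O) shows that both rules belong to \emph{both} $A_M$ and $B_M$, so the critical pair is already a critical pair of $A_R$ and is joinable. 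For ground irreducibility, the same reasoning---using that a \bstrict\ constant cannot occur in an \alocal\ term, together with ground irreducibility of $A_R$ and $B_R$---shows that no rule of $(A\cup B)_R$ reduces the left- or right-hand side of a ground rule of $A_M\cup B_M$.

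For conditions (iv), (vi), and then---via Lemma~\ref{lem:normal}---\AXDIFF-consistency, one reduces to the corresponding facts for $A$ and $B$ by locality preservation: in each of these the terms whose (non-)joinability is at stake---$rd(b,i)$ and $e$ in (iv); $rd(a,i),rd(b,i),a,b$ in (vi); the two sides $e,d$ of an element inequality for consistency---are \alocal\ or \blocal, so they are joinable in $(A\cup B)_R$ iff they are joinable in $A_R$ or $B_R$, where the needed property holds. The one genuinely combinatorial case is condition (v). Assume $\{\diff(a,b)\uguale i,\diff(a',b')\uguale i'\}\subseteq A_I\cup B_I$ with $a\downarrow_{(A\cup B)_R}a'$, $b\downarrow_{(A\cup B)_R}b'$; only the mixed case $\diff(a,b)\uguale i\in A_I$, $\diff(a',b')\uguale i'\in B_I$ needs work. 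The common normal form of $a$ and $a'$ is at once the \alocal\ normal form of $a$ and the \blocal\ normal form of $a'$, hence \abcommon; since the normal form of an array constant is either the constant itself or (by ground irreducibility) the right-hand side of its unique rewrite rule, (III) forces $a$ and $a'$---and likewise $b,b'$---to be \abcommon; then (II) forces $i,i'$ to be \abcommon; so $\diff(a,b)\uguale i$ and $\diff(a',b')\uguale i'$ are \abcommon\ and, by (O), both lie in $A_I$; since $a\downarrow_{A_R}a'$, $b\downarrow_{A_R}b'$ (locality preservation again), condition (v) for the \nform\ constraint $A$ yields $i\coincide i'$.

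The main obstacle I expect is condition (iii) for the combined system: the critical-pair bookkeeping must be done carefully, invoking (O)--(III) precisely at the overlaps where an \astrict\ fragment could meet a \bstrict\ one and showing each time that the hypotheses force the offending rules to be shared between $A$ and $B$, so the overlap collapses to an internal one. The second delicate point is the mixed case of (v), where one propagates the \abcommon\ property from the shared normal form back to the array constants through (III), then to the indices through (II), and finally to the literals through (O).
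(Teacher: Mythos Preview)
Your proposal is correct and follows essentially the same approach as the paper's proof. The paper is much terser---it compresses your locality preservation lemma into the single remark ``since we cannot rewrite \abcommon\ terms to terms which are not, it is easy to see that $A_M\cup B_M$ is still convergent and ground irreducible,'' declares conditions other than (v) ``trivial,'' and then handles the mixed case of (v) exactly as you do (single-step reduction to a shared normal form, then (III), (II), (O) in that order)---but the underlying argument is the same; you have simply made explicit what the paper leaves to the reader.
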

\begin{proof}
  Since we cannot rewrite \abcommon terms to terms which are not, it
  is easy to see that $A_M\cup B_M$ is still convergent and ground
  irreducible; the other conditions from Definition~\ref{def:normal}
  are trivial, except condition (v). The latter is guaranteed by the
  hypotheses (II)-(III) as follows: the relevant case is when, say $\diff(a,b)\uguale i\in A_I$ is
  \alocal and $\diff(a',b')\uguale i'\in B_I$ is \blocal. If $a\downarrow
  a'$, since $A_M$ and $B_M$ are ground irreducible, we have that a
  single rewrite step reduces both $a$ and $a'$ to their normal form,
  that is we have
  $$
  a\rightarrow wr(c, I, E) \leftarrow a'.
  $$
  Now $wr(c, I, E)$ is \abcommon, because the rules $a\rightarrow
  wr(c, I, E), a'\rightarrow wr(c, I, E)$ are in $A_M$ and in $B_M$,
  respectively.  
 By hypothesis (III), we have that $a$ and $a'$ are
  \abcommon too; the same applies to $b, b'$ and hence to $i, i'$ by
  (II). Thus $\diff(a',b')\uguale i'$ is \abcommon and belongs to $A_I$,
  hence $i\coincide i'$ because $A$ is \nform.

  Since all conditions from Definition~\ref{def:normal} are satisfied,
  $A\cup B$ is \nform.
  Lemma~\ref{lem:normal} applies, thus yielding consistency.
\end{proof}
The above proof is so easy mainly because ground rewrite rules cannot
superpose with the non ground rewrite
rules~\eqref{eq:r1}-\eqref{eq:r4} (with the exception of the rewrite
rules $e\to d$, that may superpose but with trivially confluent
critical pairs): this is the main benefit of our choice of orienting
equalities $a\uguale wr(b, I, E)$ from left-to-right (and not from
right-to-left).

We conclude this section with a remark about the combination of
modular constraints in $\AXDIFF$ with constraints in other theories.
The theory \AXDIFF is stably infinite (in all its sorts) but
non-convex: this means that it is suitable for Nelson-Oppen
combination, but that disjunctions of equalities (not just equalities)
need to be propagated from an \AXDIFF-constraint, in case it is
involved in a combined problem. Actually, this does not happen for
\nform constraints, 
as it is shown by the statement of 
Lemma~\ref{lem:normal}.
In other words, no
disjunction of equalities needs to be propagated 
from a \nform constraint $A$  
and only equalities
that can be syntactically extracted from $A$ need to be propagated.

\section{A Solver for Arrays with \diff}
\label{sec:solver}

The first step towards the quantifier-free interpolation procedure for
\AXDIFF is the design of a satisfiability 
solver.  Although a solver for this theory can be easily derived from
existing solvers for \AX or \AXEXT, we need a specific algorithm from
which interpolants can be extracted.
To do this, Lemma~\ref{lem:normal} will play an important role by
allowing for the design of $\exists$-equivalence preserving
transformations that, once successively applied to a given constraint
$A$, will bring it to a consistent \nform constraint (if possible).
Failure of applying these transformations implies that $A$ is
unsatisfiable. 
In other words, the $\exists$-equivalence preserving transformations
will determine whether a {finite} constraint $A$ is satisfiable or not
by transforming it into a \nform $\exists$-equivalent constraint.

One of the key design choice underlying our transformations is to
separate the ``index'' part, that will be handled by guessing, of a
constraint from the ``array'' and ``elem'' parts, that will be subject
to rewriting.  Another important design decision is to distinguish a
\emph{preprocessing} and a \emph{completion} phase.  In the
preprocessing phase, besides flattening (see, e.g.,~\cite{ARR}) and
similar operations, a complete guessing of equalities/inequalities
among index constants will be performed.  Indeed, this guessing will
be realized by backtracking: if the completion phase will terminate in
a failure, another guessing has to be tried and unsatisfiability can
only be declared when all guessing fail.  The completion phase will
guarantee the confluence of the current rewriting system $A_R$,
recall Definition~\ref{def:normal}.  The confluence of $A_R$ is the
main requirement for a constraint to be \nform.

\subsection{Preprocessing}
\label{subsec:preprocessing}

The preprocessing phase consists 
of the following sequential
steps applied to our initial constraint $A$:
\begin{desCription}
\item\noindent{\hskip-12 pt\framebox{Step 1}}\ Flatten $A$, by replacing sub-terms with
  fresh constants and by adding the related defining equalities.

\item\noindent{\hskip-12 pt\framebox{Step 2}}\ Replace array inequalities $a\not\uguale b$
  by the following literals ($i, e, d$ are fresh)
  $$
  \diff(a,b)\uguale i, \quad rd(b, i)\uguale e, \quad rd(a,i)\uguale d, \quad d\not\uguale e.
  $$

\item\noindent{\hskip-12 pt\framebox{Step 3}}\ Guess a partition of index constants, i.e.,
  for any pair of indexes $i, j$ add either $i\uguale j$ or
  $i\not\uguale j$ (but not both of them); then remove the positive
  literals $i\uguale j$ by replacing $i$ by $j$ everywhere (if $i>j$
  according to the symbol precedence, otherwise replace $j$ by $i$);
  if an inconsistent literal $i\not\uguale i$ is produced, try with
  another guess (and if all guesses fail, report \texttt{unsat}).

\item\noindent{\hskip-12 pt\framebox{Step 4}}\ For all $a, i$ such that $rd(a,i)\uguale e$
  does not occur in the constraint, add such a literal $rd(a,i)\uguale
  e$ with fresh $e$.
\end{desCription}
At the end of the preprocessing phase, we get a finite set of flat
constraints; \emph{the disjunction of these constraints is
  $\exists$-equivalent to the original constraint}. For each of these
constraints, go to the completion phase: \emph{if the transformations
  below can be exhaustively applied (without failure) to at least one
  of the constraints, report} \texttt{sat}, \emph{otherwise report}
\texttt{unsat}. 
Failure can be caused by instructions (V) below.

The reason for inserting Step 4 above is just to simplify Orientation
and Gaussian completion below.  Notice that, even if rules
$rd(a,i)\rightarrow e$ can be removed during completion, the following
\textbf{invariant} is maintained: \emph{terms $rd(a,i)$ always reduce
  to constants of sort \ELEM.}

\subsection{Completion}
\label{subsec:gaussian}

The completion phase consists in various
transformations that should be non-de\-ter\-mi\-nis\-ti\-cal\-ly executed 
until  no rule  or a failure instruction applies. 
For clarity, we divide the transformations
into five groups.

\noindent
\textbf{(I) {Orientation}.} This group contains a single instruction:
 get rid of badly orientable equalities, by
  using the equivalences
\textbf{Refl}exivity and \textbf{Sym\-m}e\-try  of
Figure~\ref{fig:key-lemmas}; a badly orientable
  equality $a\uguale wr(b, I, E) $  (with $a<b$), 
after normalization of the term $wr(b, I, E) $ with respect to the non-grund rules $\eqref{eq:r3}-\eqref{eq:r4}$,
is replaced by an equality of the form
  $b\uguale wr(a, I, D)$ and
 by the equalities $rd(a, I)\uguale E$ 
  (all
  ``read literals'' required by the left-hand side of \textbf{Symm} comes from the above 
  invariant). A badly orientable equality $a\uguale wr(a, I, E) $ is removed and replaced 
  by read literals only (or by nothing if $I,E$ are empty).

\noindent
\textbf{(II) {Gaussian completion}.}  We now take care of the
confluence of $A_R$ (i.e., point (iii) of
Definition~\ref{def:normal}).  To this end, we consider all the
critical pairs that may arise among our rewriting
rules~\eqref{eq:rules1}-\eqref{eq:rules3} (recall that there is no
need to examine overlaps involving the non ground
rules~\eqref{eq:r1}-\eqref{eq:r4}).
To treat the relevant critical pairs, we combine standard Knuth-Bendix
completion for congruence closure with a specific method (``Gaussian
completion'') based on equivalences \textbf{Symm}etry,
\textbf{Trans}itivity and \textbf{Confl}ict of
Figure~\ref{fig:key-lemmas}.
%
  %
The critical pairs are listed below.
Two preliminary observations are in order.  First, we normalize a
critical pair by using $\rightarrow_*$ before recovering convergence
by adding a suitably oriented equality and removing the parent
equalities (the symbol $\rightarrow_*$ denotes the reflexive and
transitive closure of the rewrite relation $\rightarrow$ induced by
the rewrite rules $A_R\cup\{\eqref{eq:r1}-\eqref{eq:r4} \}$).  Second,
the provisos of all the equivalences in Figure~\ref{fig:key-lemmas}
used below (i.e., \textbf{Symm}, \textbf{Trans}, and \textbf{Confl})
are satisfied because of the pre-processing Step 3 above.

\begin{desCription}
\item\noindent{\hskip-12 pt (C1):}\ \framebox{ $wr(b_1, I_1, E_1)_{~*}\!\!\leftarrow
    wr(b'_1, I'_1, E'_1) \leftarrow a \rightarrow wr(b'_2, I'_2, E'_2)
    \rightarrow_* wr(b_2, I_2, E_2)$} \smallskip \\
  with $b_1>b_2$. We proceed in two steps.  First, we use \textbf{Symm} (from right to
  left) to replace the parent rule $a \rightarrow wr(b'_1, I'_1,
  E'_1)$ with
  \begin{eqnarray*}
    wr(a, I_1, F) \uguale b_1 \wedge rd(a,I_1) \uguale E_1
  \end{eqnarray*}
  for a suitable list $F$ of constants of sort \ELEM (notice that the
  equalities $rd(b_1, I_1) \uguale F$, which are required to apply
  \textbf{Symm}, are already available because terms of the form
  $rd(b_1, j)$ for $j$ in $I_1$ always reduce to constants of sort
  \ELEM by the invariant 
  resulting from the application of Step 4 in the pre-processing
  phase). Then, we apply \textbf{Trans} to the previously derived
  equality $b_1\uguale wr(a, I_1, F)$ and to the normalized second
  equality of the critical pair (i.e., $a \uguale wr(b_2, I_2, E_2)$)
  and we derive
  \begin{eqnarray}\label{eq:from_C1}
    b_1 \uguale wr(b_2,I_2\cdot I_1, E_2\cdot F) \wedge a \uguale wr(b_2,I_2,E_2) .
  \end{eqnarray}
  Hence, we are entitled to replace $b_1\uguale wr(a, I_1, F)$ with
  the rule $b_1\rightarrow wr(b_2, J, D)$, where $J$ and $D$ are lists
  obtained by normalizing the right-hand-side of the first equality
  of~\eqref{eq:from_C1} with respect to the non-ground
  rules~\eqref{eq:r3} and \eqref{eq:r4}.  To summarize: the parent
  rules are removed and replaced by the rules
  $$
  b_1\to wr(b_2, J,D), \quad 
  a \to wr(b_2,I_2,E_2) 
  $$
  and a bunch of new
  equalities of the form $rd(a,i)\uguale e$, giving rise, in turn, to
  rules of the form $rd(b_2,i) \to e$ or to rewrite rules of the
  form~\eqref{eq:rules3} after normalization of their left members
  (normalization of terms $rd(a,i)$ is indeed needed for the termination argument of Theorem~\ref{thm:axdiff_sat} below to work).
\item\noindent{\hskip-12 pt (C2):}\ 
  \framebox{ $ wr(b, I_1, E_1)_{~*}\!\!\leftarrow wr(b'_1, I'_1, E'_1)
    \leftarrow a \rightarrow wr(b'_2, I'_2, E'_2) \rightarrow_* wr(b,
    I_2, E_2)$} \smallskip \\
  Since identities like $wr(c, H, G)\uguale wr(c, \pi(H), \pi(G))$
  are \AXDIFF-valid for every permutation $\pi$ (under the proviso  $Distinct(H)$), 
  it is harmless to suppose that the 
  set of index variables $I \coincide I_1\cap I_2$ coincides 
  with the common prefix of  the lists $I_1$
  and $I_2$; hence we have  
   $I_1\coincide I\cdot J$ and $I_2\coincide I\cdot H$ for
  suitable disjoint lists $J$ and $H$.  Then, let $E$ and $E'$ be the prefixes
  of $E_1$ and $E_2$, respectively, of length equal to that of $I$;
  and let $E_1\coincide E\cdot D$ and $E_2\coincide E'\cdot F$ for suitable lists $D$
  and $F$.  At this point, we can apply \textbf{Confl} to replace both
  parent rules forming the critical pair with 
  \begin{eqnarray*}
    a\uguale wr(b,I,E) \wedge E\uguale E'\wedge rd(b,J) \uguale D\wedge rd(b,H)\uguale F ,
  \end{eqnarray*}
  where the first equality is oriented from left to right (i.e.,
  $a\rightarrow wr(b, I, E)$).
\end{desCription}

\noindent
\textbf{(III)  {Knuth-Bendix completion}.}
 The remaining critical pairs are treated by standard completion methods
for congruence closure.

\begin{desCription}
\item\noindent{\hskip-12 pt (C3):}\ 
  \framebox{ $ d_{~*}\!\!\leftarrow rd( wr(b, I, E),i) \leftarrow
    rd(a,i)\rightarrow e' \rightarrow_* e$} \smallskip \\
  Remove the parent rule $rd(a,i)\rightarrow e'$ and, depending on
  whether $d>e, e>d$, or $d\coincide e$, add the rule $d\to e$, $e\to
  d$, or do nothing.  (Notice that terms of the form $rd(b, j)$ are
  always reducible because of the invariant of Step 4 in the
  pre-processing phase; hence, $rd( wr(b, I, E),i)$ always reduces to
  some constant of sort $\ELEM$.)
\item\noindent{\hskip-12 pt (C4):}\ 
  \framebox{ $e_{~*}\!\!\leftarrow e' \leftarrow rd(a,i)\rightarrow
    d'\rightarrow_* d$} \smallskip \\ 
  Orient the critical pair (if $e$ and $d$ are
  not identical), add it as a new rule and remove one parent rule.
\item\noindent{\hskip-12 pt (C5):}\ 
  \framebox{ $d_{~*}\!\!\leftarrow d' \leftarrow e\rightarrow
    d'_1\rightarrow_* d_1$} \smallskip \\ 
  Orient the critical pair (if $d$ and $d_1$
  are not identical), add it as a new rule and remove one parent rule.
\end{desCription}

\noindent
\textbf{(IV) {Reduction}.}
The instructions in this group simplify the current rewrite rules.

\begin{desCription}
\item\noindent{\hskip-12 pt (R1):}\ If the right-hand side of a current ground
  rewrite rule  can
  be reduced, reduce it as much as possible, remove the old rule, and
  replace it with the newly obtained reduced rule. 
  Redundant equalities like $t=t$ are also removed.
\item\noindent{\hskip-12 pt (R2):}\ For every rule $a\rightarrow wr(b, I, E)\in A_M$,
 after normalization of the term $wr(b, I, E) $ with respect to the non-grund rules $\eqref{eq:r3}-\eqref{eq:r4}$,
   exhaustively
  apply \textbf{Red}uction in Figure~\ref{fig:key-lemmas} from left to right
  (this amounts to do the following: if there are $i, e$ in the same
  position $k$ in the lists $I, E$ such that
  $rd(b,i)\downarrow_{A_R} e$, replace $a\uguale wr(b, I, E)$ with
  $a\uguale wr(b, I\!-\!k, E\!-\!k)$).
\item\noindent{\hskip-12 pt (R3):}\
 If $\diff(a,b)\uguale i\in A_I$, $rd(a,i)\downarrow_{A_R}
  rd(b,i)$ and $a>b$, add the rule $a\to b$; replace also  $\diff(a,b)\uguale i$ by $\diff(b,b)\uguale i$
(this is needed for termination, it prevents the rule for being indefinitely applied).
\end{desCription}

\noindent
\textbf{(V) {Failure}.}
The instructions in this group aim at detecting inconsistency.

\begin{desCription}
\item\noindent{\hskip-12 pt (U1):}\ If for some negative literal $e\not\uguale d\in A_M$ we
  have $e\downarrow_{A_R} d$, report \texttt{failure} and backtrack to
  Step 3 of the pre-processing phase.
\item\noindent{\hskip-12 pt (U2):}\ If $\lbrace \diff(a,b)\uguale i, \diff(a', b')\uguale i'\rbrace
  \subseteq A_I$ and $a\downarrow_{A_R} a'$ and $ b\downarrow_{A_R}
  b'$ for $i\not\coincide i'$, report \texttt{failure} and backtrack to Step 3
  of the pre-processing phase.\smallskip
\end{desCription}

\noindent Notice that the instructions in the last two groups may require a
confluence test $\alpha\downarrow_{A_R}\beta$ that can be effectively
performed in case the instructions from groups (II)-(III) have been
exhaustively applied, because then all critical pairs have been
examined and the rewrite system $A_R$ is confluent.  If this is not
the case, one may pragmatically compute and compare any normal form of
$\alpha$ and $\beta$, keeping in mind that the test has to be repeated
when all completion instructions (II)-(III) have been 
exhaustively applied.

\begin{theorem}\label{thm:axdiff_sat}
  The above procedure decides constraint satisfiability in \AXDIFF.
\end{theorem}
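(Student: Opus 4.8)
The plan is to prove three facts and combine them: (a) \emph{$\exists$-equivalence}: every preprocessing/completion step replaces the current constraint by a (disjunction of) constraint(s) that is $\exists$-equivalent to it modulo \AXDIFF, so that throughout the run the disjunction of the live constraints stays $\exists$-equivalent to the input $A$, hence equisatisfiable with it; (b) \emph{termination}: every completion branch halts, and only finitely many branches are spawned; (c) \emph{adequacy}: a branch that halts without a failure instruction has produced a \nform constraint whose satisfiability is decided by Lemma~\ref{lem:normal}, whereas a branch that halts in failure certifies the unsatisfiability of the constraint on which it was run. Given (a)--(c), the procedure answers \texttt{sat} exactly when $A$ is \AXDIFF-satisfiable; termination is the only non-routine ingredient, so I sketch (a) and (c) and dwell on (b).

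For (a): Step~1 (flattening) and Step~4 (adding read literals with fresh right-hand sides) merely introduce explicit definitions, which preserve $\exists$-equivalence. Step~2 is $\exists$-equivalent because, modulo \AXDIFF, $a\not\uguale b$ is equivalent to $\exists i,e,d.\,(\diff(a,b)\uguale i\wedge rd(b,i)\uguale e\wedge rd(a,i)\uguale d\wedge d\not\uguale e)$: left-to-right uses axiom~\eqref{ax3a}, the converse is immediate. Step~3 replaces a constraint by the disjunction of its refinements along a complete case split on index equalities, and eliminating a positive literal $i\uguale j$ inside a disjunct again only removes an explicit definition. In the completion phase, Orientation (group~(I)) and the Gaussian steps (C1), (C2) rewrite sets of literals into logically equivalent ones via the biconditionals \textbf{Refl}, \textbf{Symm}, \textbf{Trans}, \textbf{Confl} of Figure~\ref{fig:key-lemmas} (Lemma~\ref{lem:key-lemmas}); their $Distinct$ provisos hold since Step~3 decided all index (in)equalities, and the read literals these steps consume are available by the invariant established in Step~4. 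Steps (C3)--(C5) and (R1) are ordinary ground Knuth--Bendix moves (orient a critical pair, reduce a right-hand side, drop a redundant parent rule) and preserve the generated equational theory; (R2) instantiates \textbf{Red}; and (R3) adds $a\to b$, an \AXDIFF-consequence of the contrapositive of axiom~\eqref{ax3a}, after which $\diff(a,b)\uguale i$ and $\diff(b,b)\uguale i$ are interchangeable. In particular the completion steps introduce no new constants.

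For (b), the core is a well-founded measure on constraints that strictly decreases under every completion transformation; this is the main obstacle. After preprocessing the constants of all three sorts are fixed and finite, and rules~\eqref{eq:r3}--\eqref{eq:r4} force every flat write-term $wr(b,I,E)$ to use a sorted sublist of the finitely many index constants, so the rewrite system ranges over a finite set of possible ground rules. I expect the measure to be a lexicographic tuple controlling, in this order: the number of badly orientable equalities, strictly decreased by Orientation (which turns $a\uguale wr(b,I,E)$ with $a<b$ into the properly oriented $b\to wr(a,I,D)$, never flipped back, without creating new badly orientable equalities); the array-rewriting structure — Gaussian completion as in (C1) replaces $b_1\to wr(a,I_1,F)$ by $b_1\to wr(b_2,J,D)$ with $b_2<b_1$ in the total precedence, and (R1) only pushes the innermost array constant of a write-rule further down, so, the precedence being well-founded, only finitely many such steps occur; the $rd$-rules, whose left members are kept $rd$-normalised precisely so that this decrease propagates to them (this is the point flagged just before the statement of Theorem~\ref{thm:axdiff_sat}); and finally the element-level part, which is ordinary terminating ground completion over the finitely many element constants. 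The Failure instructions cannot make a branch loop, and Step~3 offers only finitely many index partitions, so only finitely many branches arise. Making this measure precise so that no transformation raises a component a later one relies on is the delicate part.

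For (c): suppose a branch halts without failure. Since no transformation applies, one checks clause by clause that the constraint is \nform: (o) because neither Step~3 nor completion ever leaves a positive index equality, (i) because Step~2 removed all array inequalities and completion never adds one, (ii) because Orientation would otherwise apply, (iii) because (C1)--(C5) are exhausted (so $A_R$ is confluent) and (R1) is exhausted (so $A_R$ is ground-irreducible), (iv) because (R2) would otherwise apply, (v) because (U2) did not fire, and (vi) because (R3) would otherwise apply; moreover (U1) did not fire, so no element inequality $e\not\uguale d$ has $e\downarrow_{A_R} d$, and Lemma~\ref{lem:normal} yields \AXDIFF-satisfiability, hence $A$ is satisfiable by (a). Conversely, a branch that halts in failure shows the constraint it ran on is \AXDIFF-unsatisfiable: a (U1)-failure exposes $e\downarrow_{A_R} d$ with $e\not\uguale d$ present, so the constraint \AXDIFF-entails $e\uguale d$; a (U2)-failure exposes $\{\diff(a,b)\uguale i,\diff(a',b')\uguale i'\}\subseteq A_I$ with $a\downarrow_{A_R} a'$, $b\downarrow_{A_R} b'$ and $i\not\coincide i'$, so the constraint entails $\diff(a,b)\uguale\diff(a',b')$, hence $i\uguale i'$, contradicting the literal $i\not\uguale i'$ added in Step~3. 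Thus if \emph{every} preprocessed constraint leads to failure they are all unsatisfiable, and since their disjunction is $\exists$-equivalent to $A$, so is $A$; and the non-determinism of completion is harmless, because a satisfiable preprocessed constraint admits no failing run (a failing run would certify unsatisfiability) while any terminating non-failing run produces a \nform witness. Hence the procedure is a correct decision procedure for \AXDIFF-constraint satisfiability.
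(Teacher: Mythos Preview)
Your decomposition into (a) $\exists$-equivalence, (b) termination, and (c) adequacy matches the paper's, and your treatment of (a) and (c) is essentially the paper's argument spelled out in more detail; the paper dispatches those two points in a couple of sentences by appealing to the $\exists$-equivalence of all transformations and to Lemma~\ref{lem:normal}.

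Where you diverge is in (b). The paper does \emph{not} build a bespoke lexicographic tuple tracking badly orientable equalities, array structure, $rd$-rules, and element rules separately. Instead it uses the standard multiset technique: to each positive literal $l\uguale r$ associate the multiset $\{l,r\}$, to each negative literal $l\not\uguale r$ associate $\{l,l,r,r\}$, and to a constraint $A$ the multiset $M(A)$ of these literal-multisets; one then checks that $M(A)$ strictly decreases, in the multiset extension of the lexicographic path ordering, under every instruction of groups (I)--(V). This single uniform measure avoids precisely the ``delicate part'' you flag, namely making sure no instruction raises a component that a later component relies on.

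Your sketch in fact has a concrete gap of this sort at instruction (R3): it \emph{adds} the array rule $a\to b$, which increases your ``array-rewriting structure'' component; the compensating change is that $\diff(a,b)\uguale i$ is replaced by $\diff(b,b)\uguale i$, but that lives in the index part $A_I$, which your tuple never mentions. In the paper's measure this case is immediate: the removed multiset $\{\diff(a,b),i\}$ dominates both of the added multisets $\{\diff(b,b),i\}$ and $\{a,b\}$, since $\diff(a,b)\succ\diff(b,b)$ and $a,b$ are proper subterms of $\diff(a,b)$. So while your plan could likely be salvaged by inserting a component for $\diff$-literals above the array component, the paper's route is both shorter and gap-free.
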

\begin{proof}
  Correctness and completeness of the solver are clear: since all
  steps and instructions from Section~\ref{sec:solver} manipulate the
  constraint up to $\exists$-equivalence, it follows that if all
  guessings originated by Step 3 fail, the input constraint is
  unsatisfiable and, if one of them succeed, the exhaustive
  application of the completion instructions leads to a \nform
  constraint which is satisfiable by Lemma~\ref{lem:normal}.

  We must only consider termination; to show that any sequence of our
  instructions terminates, we use a standard technique. With every
  positive literal $l\uguale r$ we associate the multi-set of terms
  $\{l, r\}$; with every negative literal $l\not\uguale r$, we
  associate the multi-set of terms $\{ l, l, r,r\}$. Finally, with a
  constraint $A$ we associate the multi-set $M(A)$ of the multi-sets
  associated with every literal from $A$.  Now it is easy to see that
  such multi-set decreases after the application of any instruction.
\end{proof}
The termination analysis in the proof of Theorem~\ref{thm:axdiff_sat}
can be refined so as to show that our algorithm is in NP, which is
optimal because satisfiability of quantifier free formulae in \AXEXT
is already NP-complete~\cite{bradley}.

\section{The Interpolation Algorithm for Arrays with \diff}
\label{sec:isolver}

In the literature one can roughly distinguish two approaches to the
problem of computing interpolants.  In the former (see
e.g.~\cite{McM05,BKR+10}), an interpolating
calculus is obtained from a
standard calculus by adding decorations so as to enable the recursive
construction of an interpolating formula from a proof; in the latter
(see, e.g.,~\cite{YM05,KFG+09,CGS10}), the focus is on how to extend
an available decision procedure
to return interpolants.  
Our methodology is similar to the second approach, since we add the
capability of computing interpolants to the satisfiability procedure
in Section~\ref{sec:solver}.  However, we do this by designing a
flexible and abstract framework, relying on the identification of
\emph{basic operations} that can be performed independently
from the method used by the underlying satisfiability procedure to
derive a refutation.

\subsection{Interpolating Metarules}
\label{subsec:metarules}

Let now $A, B$ be constraints in signatures $\Sigma^A, \Sigma^B$
expanded with free constants and $\Sigma^C \coincide \Sigma^A \cap \Sigma^B$;
we shall refer to the definitions of \abcommon, \alocal, \blocal,
$A$-strict, $B$-strict,
\abmixed, \abpure terms, literals and formulae given in
Section~\ref{sec:interpolation}. Our goal is to produce, in case
$A\wedge B$ is \AXDIFF-unsatisfiable, a ground \abcommon sentence
$\phi$ such that $A\vdash_{ \AXDIFF} \phi$ and $\phi\wedge B$ is
\AXDIFF-unsatisfiable.

Let us 
examine some 
of the transformations to be applied to $A,B$. 
Suppose for instance that the literal $\psi$ is
\abcommon and such that $A\vdash_{\AXDIFF} \psi$; then we can 
transform $B$ into $B' \coincide B\cup \{\psi\}$. Suppose now that we got an
interpolant $\phi$ for the pair $A, B'$: clearly, we can
derive an interpolant for the original pair $A, B$ by taking
$\phi\wedge \psi$.  The idea is to collect some useful transformations
of this kind.  Notice that these transformations can also modify the
signatures $\Sigma^A, \Sigma^B$, 
in the sense that the signature of the pair $A',B'$ obtained after
applying a single transformation to a pair $A,B$ might be different from the signature of $A,B$ (typically, the
signature of $A', B'$ may contain extra fresh constants).
  For instance, suppose that $t$ is an
\abcommon term and that $c$ is a fresh constant; then we can put
$A' \coincide A\cup\{c=t\},\, B' \coincide B\cup\{c=t\}$: in fact, if $\phi$ is an
interpolant for $A', B'$, then $\phi(t/c)$ is an interpolant for $A,
B$. ({Notice that the fresh constant $c$ is now a shared symbol,
  because $\Sigma^A$ is enlarged to $\Sigma^A\cup\{c\}$, $\Sigma^B$ is
  enlarged to $\Sigma^B\cup\{c\}$ and hence $(\Sigma^A\cup\{c\})\cap
  (\Sigma^B\cup\{c\})=\Sigma^C\cup\{c\}$.}) The transformations we
need are called \emph{metarules} and are listed in
Table~\ref{tab:metarules} below (in the Table and more generally in
this Subsection, we use the notation $\phi\vdash \psi$ for
$\phi\vdash_{\AXDIFF} \psi$).\footnote{
Rules Redplus1, Redplus2 can be seen as instances of Rules Disjunction1, Disjunction2 (for $n=1$), thus they are redundant.
In Rule Propagate1, one can change the proviso to the weaker requirement `$\psi\in A$ and $\psi$ is \abcommon' (the case $A\vdash \psi$ could be 
obtained by applying Redplus1); a similar observation applies to Propagate2. We thank an anonymous referee
for these remarks.
}

An \emph{interpolating metarules refutation} for $A,B$ is a labelled
tree having the following properties: (i) nodes are labelled by pairs
of finite sets of constraints; (ii) the root is labelled by $A, B$;
(iii) the leaves are labelled by a pair $A, B$ such that $\bot \in
A\cup B$; (iv) each non-leaf node is the conclusion of a rule from
Table~\ref{tab:metarules} and its successors are the premises of that
rule.
The crucial properties of the metarules are summarized in the
following two Propositions.
\begin{proposition}
  \label{prop:equisat}  The unary metarules  
$ A\sep B\over{A'\sep B'}$
  from Table~\ref{tab:metarules} have the property that
  $A\wedge B$ is $\exists$-equivalent to $A'\wedge B'$;
  similarly, the
  $n$-ary metarules
  $
  A_1\sep B_1 
  ~~\cdots~~
  A_n\sep B_n\over{A\sep B}
  $
  from Table~\ref{tab:metarules} have the property that 
  $A\wedge B$ is $\exists$-equivalent to 
  $\bigvee_{k=1}^n(A_k\wedge B_k)$.
\end{proposition}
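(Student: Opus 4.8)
The plan is to prove both statements by a direct case analysis on the metarules of Table~\ref{tab:metarules}, verifying the asserted $\exists$-equivalence one rule at a time. The whole argument rests on two elementary observations. First, if $T\vdash\Gamma\leftrightarrow\Gamma'$ for sentences $\Gamma,\Gamma'$ over the \emph{same} expanded signature, then trivially $T\vdash\Gamma^\exists\leftrightarrow\Gamma'^\exists$, so logical equivalence (modulo $\AXDIFF$) implies $\exists$-equivalence. Second, the observation already recorded in Section~\ref{sec:background}: adjoining to a constraint an explicit definition $c\uguale t$, with $c$ a fresh constant, produces an $\exists$-equivalent constraint, \emph{even though} the two are not logically equivalent (they live in different signatures). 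Since $\exists$-equivalence is plainly reflexive and transitive, a metarule that first adds some entailed literals and then an explicit definition still preserves it by composing the two steps.

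For the unary metarules I would distinguish two kinds. The rules that merely \emph{add} to one side a literal $\psi$ already entailed by that side (the Propagate rules, and the Reduction-style rules Redplus1/Redplus2, whose proviso guarantees $A\vdash\psi$ or $B\vdash\psi$): here $A'\wedge B'$ is $A\wedge B$ with the extra conjunct $\psi$, and since $A\wedge B\vdash_{\AXDIFF}\psi$ the two are logically equivalent modulo $\AXDIFF$, so the first observation gives $\exists$-equivalence. The definition-introducing rule, which puts $A' \coincide A\cup\{c\uguale t\}$ and $B'\coincide B\cup\{c\uguale t\}$ for a fresh $c$ and an \abcommon\ term $t$: here $A'\wedge B'=(A\wedge B)\cup\{c\uguale t\}$ with $c$ fresh, so it is $\exists$-equivalent to $A\wedge B$ by the second observation. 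A rule that does both (adds an entailed literal and a definition) is handled by composing the two cases.

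For the $n$-ary metarules (the Disjunction rules, of which Redplus is the degenerate case $n=1$), the proviso ensures that the side being split, say $A$, satisfies $A\vdash_{\AXDIFF}\bigvee_{k=1}^n\ell_k$, where the $k$-th premise is $A_k\coincide A\cup\{\ell_k\}$, $B_k\coincide B$, with the $\ell_k$ over the signature of the split side. Then modulo $\AXDIFF$ the constraint $A\wedge B$ is logically equivalent to $(A\wedge B)\wedge\bigvee_k\ell_k$, hence to $\bigvee_k(A\wedge\ell_k\wedge B)=\bigvee_k(A_k\wedge B_k)$; taking existential closures and invoking the first observation yields the claim. If some $\ell_k$ introduces fresh constants, one first absorbs those on each branch by the second observation and then uses that the per-branch existential quantifiers distribute over the disjunction.

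The only place where genuine care is needed — the main (mild) obstacle — is the bookkeeping of the changing signatures: a premise may carry its own fresh constants, so $(A'\wedge B')^\exists$ quantifies over a strictly larger tuple of variables than $(A\wedge B)^\exists$, and one must check that the extra $\exists$'s can be pushed past the $\wedge$'s and $\vee$'s without harm. This is exactly what the explicit-definitions discussion of Section~\ref{sec:background} licenses, so the verification is routine once one keeps track of which constants are fresh; the point to watch is that a constant introduced fresh on both $A$ and $B$ becomes precisely the newly shared constant of $\Sigma^A\cap\Sigma^B$, so that it is legitimately existentially quantified in the combined formula $(A'\wedge B')^\exists$. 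Running through the rows of Table~\ref{tab:metarules} with these two observations in hand completes the proof.
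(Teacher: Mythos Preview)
Your approach is correct and matches the paper's own treatment: the paper simply declares the proof ``straightforward'' and points to the same two ingredients you isolate --- that logically equivalent constraints are $\exists$-equivalent, and that adding or removing an explicit definition with a fresh constant preserves $\exists$-equivalence --- together with the signature-tracking remark you make at the end. One small slip: in Propagate1 the literal $\psi$ added to $B$ is entailed by $A$, not by ``that side''; your subsequent justification via $A\wedge B\vdash_{\AXDIFF}\psi$ is the right one, and the same pair of observations also disposes of the rules you did not name explicitly (Define1/2, Redminus1/2, ConstElim0/1/2), since each either adds/removes an entailed literal or adds/removes an explicit definition.
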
 
%
%
\begin{proposition}
  \label{prop:metarules}
  If there exists an interpolating metarules refutation for $A, B$
  then there is a quantifier-free interpolant for $A, B$ (namely there
  exists a quantifier-free \abcommon sentence $\phi$ such that
  $A\vdash \phi$ and $B \wedge \phi \vdash \bot$). The interpolant
  $\phi$ is recursively computed applying the relevant interpolating
  instructions from Table~\ref{tab:metarules}.
\end{proposition}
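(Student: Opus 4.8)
The plan is to prove the proposition by structural induction on the interpolating metarules refutation tree for $A,B$. The invariant I will maintain is: to every node of the tree, labelled by a pair $(A,B)$, one can attach a quantifier-free \abcommon sentence $\phi$ with $A\vdash\phi$ and $B\wedge\phi\vdash\bot$, where the attachment is defined bottom-up by the interpolating instruction written next to the corresponding metarule in Table~\ref{tab:metarules}. Since the root is labelled by the original pair, the sentence attached to the root is exactly the interpolant required by the statement, so it suffices to establish the invariant at the leaves and to show that each metarule propagates it from premises to conclusion.

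At a leaf we have $\bot\in A\cup B$. If $\bot\in A$, take $\phi\coincide\bot$: then $A\vdash\phi$ trivially, $B\wedge\phi$ is unsatisfiable, and $\bot$ is \abcommon; if instead $\bot\in B$, take $\phi\coincide\top$. For the inductive step one checks the rows of Table~\ref{tab:metarules} one at a time; they fall into three patterns. \textbf{(a) Moving an \abcommon literal.} If a unary metarule passes an \abcommon literal $\psi$ from $B$ into $A$ and $\phi$ interpolates the premise, then $\phi\vee\neg\psi$ interpolates the conclusion; if instead it adds $\psi$ to $B$ with $A\vdash\psi$, then $\phi\wedge\psi$ does. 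Both facts are immediate from $A\vdash\psi$ (resp.\ $B\vdash\psi$) together with the premise conditions, and \abcommon-ness is preserved because $\psi$ is \abcommon. \textbf{(b) Introducing a fresh definition.} If a unary metarule adds a fresh constant $c$ together with an \abcommon definition $c\uguale t$ to both sides, the instruction back-substitutes, $\phi\coincide\phi'(t/c)$; since $c$ occurs in neither $A$, $B$ nor $t$, substituting $t$ for $c$ turns the premise's entailments $A\cup\{c\uguale t\}\vdash\phi'$ and $B\cup\{c\uguale t,\phi'\}\vdash\bot$ into $A\vdash\phi'(t/c)$ and $B\wedge\phi'(t/c)\vdash\bot$, while $\phi'(t/c)$ is \abcommon because $t$ is. \textbf{(c) Branching.} For the branching metarules (the index-partition guesses coming from Step~3 and the propagation splits), Proposition~\ref{prop:equisat} gives that the conclusion $A\wedge B$ is $\exists$-equivalent to $\bigvee_{k}(A_k\wedge B_k)$; then $\bigvee_k\phi_k$ interpolates the conclusion when the disjunction driving the split is \alocal, and $\bigwedge_k\phi_k$ when it is \blocal, the three interpolant conditions again following from those of the $\phi_k$ plus the case split.

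The delicate point, and the one on which I expect the argument to turn, is pattern (c) together with the bookkeeping that keeps every $\phi$ strictly \abcommon: a metarule may need to split on a literal that is \abmixed (for instance a disequality $d\not\uguale e$ created while eliminating an array disequality in Step~2, or a $\diff$-literal produced by Gaussian completion), and then the instruction cannot simply form a Boolean combination of the $\phi_k$ — it must first have routed the offending literal to the correct side via an \abcommon witness, so that what is actually split on is \abpure. Verifying that every metarule of Table~\ref{tab:metarules} is indeed equipped with an instruction that never lets an \abmixed term enter $\phi$ is the real content of the proof; once the one-line correctness check sketched above is carried out for each row, the induction closes, and quantifier-freeness requires no separate argument, since all instructions act on the $\phi_k$ only by propositional connectives and by substitution of \abcommon terms.
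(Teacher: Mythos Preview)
Your structural-induction approach is correct and is exactly what the paper intends (the paper merely calls the proof ``straightforward'' and records the signature-bookkeeping observations you use). The leaf cases and the per-rule checks you sketch are the right ones, though your three patterns do not literally cover every row of Table~\ref{tab:metarules}: the remaining rules (Define1, Define2, Redplus1/2, Redminus1/2, ConstElim0/1/2) all carry the instruction $\phi'\coincide\phi$ and are immediate once you note that the fresh or removed constant cannot occur in the \abcommon $\phi$.

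Where your write-up goes astray is the paragraph about the ``delicate point.'' You are conflating Proposition~\ref{prop:metarules} with the later task (Section~\ref{subsec:isolver}) of showing that the \emph{algorithm's} steps can each be justified by metarules. Proposition~\ref{prop:metarules} is a statement about an \emph{arbitrary} interpolating metarules refutation, and by definition every non-leaf node in such a tree is the conclusion of a rule from Table~\ref{tab:metarules} \emph{whose proviso holds}. In particular, Disjunction1 requires $\bigvee_k\psi_k$ to be \alocal, Disjunction2 requires it to be \blocal, Propagate1/2 require $\psi$ to be \abcommon, and so on. Hence no \abmixed literal can ever be the pivot of a metarule application, and there is nothing to ``route to the correct side'': the provisos already guarantee that each $\phi_k$ is \abcommon and that the interpolating instruction keeps $\phi'$ \abcommon. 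The concern you raise---ensuring that the solver never needs to split on an \abmixed literal---is real, but it belongs to the correctness argument for the instructions $(\gamma)$ in Section~\ref{subsec:isolver}, not to the proof of this proposition. Also, you do not need Proposition~\ref{prop:equisat} for pattern (c): for Disjunction1, $A\vdash\bigvee_k\psi_k$ together with $A\cup\{\psi_k\}\vdash\phi_k$ gives $A\vdash\bigvee_k\phi_k$ directly, and $B\wedge\bigvee_k\phi_k\vdash\bot$ follows from each $B\wedge\phi_k\vdash\bot$.
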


%
\begin{table}[h!]
\scalebox{.95}{
\centering

\begin{tabular}{|cccccccccccc|}

\hline

\multicolumn{3}{|c|}{Close1} &
\multicolumn{3}{|c|}{Close2} &
\multicolumn{3}{|c|}{Propagate1} &
\multicolumn{3}{|c|}{Propagate2} \\

\hline


\multicolumn{3}{|c|}{
\begin{minipage}{.25\textwidth}
\begin{center}
\vspace{8pt}
\begin{prooftree}
\AxiomC{$~$}
\UnaryInfC{$A \sep B$}
\end{prooftree}
\vspace{8pt}
{\scriptsize
\begin{tabular}{rl}
\emph{Prv.}: & $A$ is unsat. \\
\emph{Int.}: & $\phi' \coincide \bot$.
\end{tabular}
}
\vspace{8pt}
\end{center}
\end{minipage}
}

&


\multicolumn{3}{|c|}{
\begin{minipage}{.25\textwidth}
\begin{center}
\vspace{8pt}
\begin{prooftree}
\AxiomC{$~$}
\UnaryInfC{$A \sep B$}
\end{prooftree}
\vspace{8pt}
{\scriptsize
\begin{tabular}{rl}
\emph{Prv.}: & $B$ is unsat. \\
\emph{Int.}: & $\phi' \coincide \top$.
\end{tabular}
}
\vspace{8pt}
\end{center}
\end{minipage}
}

&


\multicolumn{3}{|c|}{
\begin{minipage}{.25\textwidth}
\begin{center}
\vspace{8pt}
\begin{prooftree}
\AxiomC{$A \sep B\cup\{\psi\}$}
\UnaryInfC{$A \sep B$}
\end{prooftree}
\vspace{8pt}
{\scriptsize
\begin{tabular}{r@{ }l}
\emph{Prv.}: & $A\vdash \psi$ and \\ 
              & $\psi$ is $AB$-common. \\
\emph{Int.}: & $\phi' \coincide \phi \wedge \psi$.
\end{tabular}
}
\vspace{8pt}
\end{center}
\end{minipage}
}

&


\multicolumn{3}{|c|}{
\begin{minipage}{.25\textwidth}
\begin{center}

\begin{prooftree}
\AxiomC{$A\cup\{\psi\} \sep B$}
\UnaryInfC{$A \sep B$}
\end{prooftree}

\vspace{8pt}
{\scriptsize
\begin{tabular}{r@{ }l}
\emph{Prv.}: & $B\vdash \psi$ and \\
              & $\psi$ is $AB$-common. \\
\emph{Int.}: & $\phi' \coincide \psi \to\phi$.
\end{tabular}
}
\end{center}
\end{minipage}
}

\\
\hline
\hline

\multicolumn{4}{|c|}{Define0} &
\multicolumn{4}{|c|}{Define1} &
\multicolumn{4}{|c|}{Define2} \\

\hline


\multicolumn{4}{|c|}{
\begin{minipage}{.33\textwidth}
\begin{center}
\vspace{8pt}
\begin{prooftree}
\AxiomC{$A\cup\{a=t\} \sep B\cup\{a=t\}$}
\UnaryInfC{$A \sep B$}
\end{prooftree}
\vspace{8pt}
{\scriptsize
\begin{tabular}{r@{ }l}
\emph{Prv.}: & $t$ is $AB$-common, $a$ fresh. \\
\emph{Int.}: & $\phi' \coincide \phi(t/a)$.
\end{tabular}
}
\vspace{8pt}
\end{center}
\end{minipage}
}

&


\multicolumn{4}{|c|}{
\begin{minipage}{.33\textwidth}
\begin{center}
\begin{prooftree}
\AxiomC{$A\cup\{a=t\} \sep B$}
\UnaryInfC{$A \sep B$}
\end{prooftree}

\vspace{8pt}

{\scriptsize
\begin{tabular}{r@{ }l}
\emph{Prv.}: & $t$ is \alocal and $a$ is fresh. \\
\emph{Int.}: & $\phi' \coincide \phi$.
\end{tabular}
}
\end{center}
\end{minipage}
}

&


\multicolumn{4}{|c|}{
\begin{minipage}{.33\textwidth}
\begin{center}
\begin{prooftree}
\AxiomC{$A \sep B\cup\{a=t\}$}
\UnaryInfC{$A \sep B$}
\end{prooftree}

\vspace{8pt}

{\scriptsize
\begin{tabular}{rl}
\emph{Prv.}: & $t$ is \blocal and $a$ is fresh. \\
\emph{Int.}: & $\phi' \coincide \phi$.
\end{tabular}
}
\end{center}
\end{minipage}
}

\\
\hline
\hline

\multicolumn{6}{|c|}{Disjunction1} &
\multicolumn{6}{|c|}{Disjunction2} \\

\hline


\multicolumn{6}{|c|}{
\begin{minipage}{.5\textwidth}
\begin{center}
\vspace{8pt}
\begin{prooftree}
 \AxiomC{$\cdots ~~~A\cup\{\psi_k\} \sep B~~~\cdots~~~$}                                              
%
\UnaryInfC{$A \sep B$}                                                                                
\end{prooftree}
\vspace{8pt}
{\scriptsize
\begin{tabular}{rl}
\emph{Prv.}: & $\bigvee_{k=1}^n\psi_k$ is \alocal and $A \vdash \bigvee_{k=1}^n\psi_k$. \\
%
\emph{Int.}: & $\phi' \coincide \bigvee_{k=1}^n\phi_k$.
\end{tabular}
}
\vspace{8pt}
\end{center}
\end{minipage}
}

&


\multicolumn{6}{|c|}{
\begin{minipage}{.5\textwidth}
\begin{center}
\vspace{8pt}
\begin{prooftree}
\AxiomC{$\cdots ~~~A \sep B\cup\{\psi_k\}~~~\cdots~~~$}                     
%
\UnaryInfC{$A \sep B$}                                                       
\end{prooftree}
\vspace{8pt}
{\scriptsize
\begin{tabular}{rl}
\emph{Prv.}: & $\bigvee_{k=1}^n\psi_k$ is \blocal and $B \vdash \bigvee_{k=1}^n\psi_k$. \\
\emph{Int.}: & $\phi' \coincide \bigwedge_{k=1}^n\phi_k$.
\end{tabular}
}
\vspace{8pt}
\end{center}
\end{minipage}
}

\\
\hline
\hline

\multicolumn{3}{|c|}{Redplus1} &
\multicolumn{3}{|c|}{Redplus2} &
\multicolumn{3}{|c|}{Redminus1} &
\multicolumn{3}{|c|}{Redminus2} \\

\hline


\multicolumn{3}{|c|}{
\begin{minipage}{.25\textwidth}
\begin{center}
\vspace{8pt}
\begin{prooftree}
\AxiomC{$A \cup\{\psi\}\sep B$}
\UnaryInfC{$A \sep B$}
\end{prooftree}
\vspace{8pt}
{\scriptsize
\begin{tabular}{rl}
\emph{Prv.}: & $A\vdash \psi$ and \\ 
              & $\psi$ is \alocal. \\
\emph{Int.}: & $\phi' \coincide \phi$.
\end{tabular}
}
\vspace{8pt}
\end{center}
\end{minipage}
}

&


\multicolumn{3}{|c|}{
\begin{minipage}{.25\textwidth}
\begin{center}
\vspace{8pt}
\begin{prooftree}
\AxiomC{$A \sep B\cup\{\psi\}$}
\UnaryInfC{$A \sep B$}
\end{prooftree}
\vspace{8pt}
{\scriptsize
\begin{tabular}{rl}
\emph{Prv.}: & $B\vdash \psi$ and \\ 
              & $\psi$ is \blocal. \\
\emph{Int.}: & $\phi' \coincide \phi$.
\end{tabular}
}
\vspace{8pt}
\end{center}
\end{minipage}
}

&


\multicolumn{3}{|c|}{
\begin{minipage}{.25\textwidth}
\begin{center}
\vspace{8pt}
\begin{prooftree}
\AxiomC{$A \sep B$}
\UnaryInfC{$A \cup\{\psi\}\sep B$}
\end{prooftree}
\vspace{8pt}
{\scriptsize
\begin{tabular}{rl}
\emph{Prv.}: & $A\vdash \psi$ and \\
              & $\psi$ is \alocal. \\
\emph{Int.}: & $\phi' \coincide \phi$.
\end{tabular}
}
\vspace{8pt}
\end{center}
\end{minipage}
}

&


\multicolumn{3}{|c|}{
\begin{minipage}{.25\textwidth}
\begin{center}
\vspace{8pt}
\begin{prooftree}
\AxiomC{$A \sep B$}
\UnaryInfC{$A \sep B\cup\{\psi\}$}
\end{prooftree}
\vspace{8pt}
{\scriptsize
\begin{tabular}{rl}
\emph{Prv.}: & $B\vdash \psi$ and \\ 
              & $\psi$ is \blocal. \\
\emph{Int.}: & $\phi' \coincide \phi$.
\end{tabular}
}
\vspace{8pt}
\end{center}
\end{minipage}
}
\\

\hline
\hline

\multicolumn{4}{|c|}{ConstElim1} &
\multicolumn{4}{|c|}{ConstElim2} &
\multicolumn{4}{|c|}{ConstElim0} \\

\hline


\multicolumn{4}{|c|}{
\begin{minipage}{.33\textwidth}
\begin{center}
\vspace{8pt}
\begin{prooftree}
\AxiomC{$A \sep B$}
\UnaryInfC{$A\cup\{a=t\}\sep B$}
\end{prooftree}
\vspace{8pt}
{\scriptsize
\begin{tabular}{rl}
\emph{Prv.}: & $a$ is $A$-strict and \\ 
              & does not occur in $A, t$. \\
\emph{Int.}: & $\phi' \coincide \phi$.
\end{tabular}
}
\vspace{8pt}
\end{center}
\end{minipage}
}

&


\multicolumn{4}{|c|}{
\begin{minipage}{.33\textwidth}
\begin{center}
\vspace{8pt}
\begin{prooftree}
\AxiomC{$A \sep B$}
\UnaryInfC{$A \sep B\cup\{b=t\}$}
\end{prooftree}
\vspace{8pt}
{\scriptsize
\begin{tabular}{rl}
\emph{Prv.}: & $b$ is $B$-strict and \\ 
              & does not occur in $B, t$. \\
\emph{Int.}: & $\phi' \coincide \phi$.
\end{tabular}
}
\vspace{8pt}
\end{center}
\end{minipage}
}

&


\multicolumn{4}{|c|}{
\begin{minipage}{.33\textwidth}
\begin{center}
\vspace{8pt}
\begin{prooftree}
\AxiomC{$A \sep B$}
\UnaryInfC{$A\cup\{c=t\} \sep B\cup\{c=t\}$}
\end{prooftree}
\vspace{8pt}
{\scriptsize
\begin{tabular}{r@{ }l}
\emph{Prv.}: & $c$, $t$  are \abcommon, \\ 
              & $c$ does not occur in $A, B, t$. \\
\emph{Int.}: & $\phi' \coincide \phi$.
\end{tabular}
}
\vspace{8pt}
\end{center}
\end{minipage}
}

\\

\hline

\end{tabular}
}

\bigskip
\caption{\footnotesize{Interpolating Metarules:
each rule has a proviso $Prv.$ and an instruction $Int.$ for recursively computing the new interpolant $\phi'$
 from the old one(s) $\phi, \phi_1, \dots, \phi_k$.
  }
}
\label{tab:metarules}
\end{table}

The proofs of both Propositions~\ref{prop:equisat}
and~\ref{prop:metarules} are straightforward.  The following
observations are the basis of such proofs.  The metarules are applied
\textbf{bottom-up} whereas interpolants are computed (from an
interpolating refutation) in a \textbf{top-down} manner.  We should
have labelled nodes in an interpolating metarules refutation by
4-tuples $(\Sigma^A, A, \Sigma^B, B)$, where $\Sigma^A, \Sigma^B$ are
signatures expanded with free constants, $A$ is a
$\Sigma^A$-constraint and $B$ is a $\Sigma^B$-constraint. The
\emph{shared signature} of the node labelled $(\Sigma^A, A, \Sigma^B,
B)$ (i.e. the signature where interpolants are recursively computed)
is taken to be $\Sigma^C \coincide \Sigma^A\cap \Sigma^B$; the \emph{root
  signature pair} is the pair of signatures comprising all symbols
occurring in the original pair of constraints.  We did not make all
this explicit in order to avoid notation overhead. Notice that the
only metarules that modify the signatures are (Define0), (Define1),
(Define2) (which add $a$ to $\Sigma^A\cap \Sigma^B, \Sigma^A,
\Sigma^B$, respectively).  Some other rules like (ConstElim0),
(ConstElim1), (ConstElim2) could in principle restrict the signature,
but signature restriction is not relevant for the computation of
interpolants: there is no need that \emph{all} \abcommon symbols occur
in the interpolants, but we certainly do not want \emph{extra} symbols
to occur in them, so only bottom-up signature expansion must be
tracked.

\subsection{The Interpolating Solver
}
\label{subsec:isolver}

The metarules are complete, i.e.
if $A\wedge B$ is \AXDIFF-unsatisfiable, then
 (since we  know
that an interpolant exists) a
single application of (Propagate1) and (Close2) gives an interpolating
metarules refutation. This observation shows that
metarules are by no
means better than the
brute force enumeration of formulae to find interpolants. 
However, metarules are useful to design an algorithm manipulating
pairs of constraints based on transformation instructions.  In fact,
each of the transformation instructions can be \emph{justified} by a
metarule (or by a sequence of metarules): in this way, if our
instructions form a complete and terminating algorithm, we can use
Proposition~\ref{prop:metarules} to get the desired interpolants.  
The main advantage of using metarules as justifications is that we
just need to take care of the
completeness and
  termination
of the algorithm, 
and not
about interpolants anymore.  Here ``completeness'' means that our
transformations should be able to bring a pair $(A,B)$ 
of constraints into a pair $(A', B')$ that either matches the
requirements of Proposition~\ref{prop:merging} or is explicitly
inconsistent, in the sense that $\bot\in A'\cup B'$.  The latter is
obviously the case whenever the original pair $(A, B)$ is
\AXDIFF-unsatisfiable
and it is precisely the case leading to an interpolating metarules
refutation.

The basic idea is that of invoking the algorithm of
Section~\ref{sec:solver} on $A$ and $B$ separately and to propagate
equalities involving \abcommon terms.  We shall assume \emph{an
  ordering precedence making \abcommon constants smaller than
  $A$-strict or $B$-strict constants of the same sort}. However, this
is not sufficient to prevent the algorithm of Section~\ref{sec:solver}
from generating literals and rules violating one or more of the
hypotheses of Proposition~\ref{prop:merging}: this is why the extra
correcting instructions of group ($\gamma$) below are needed.  Our
interpolating algorithm has a pre-processing and a completion phase, like
the algorithm from
Section~\ref{sec:solver}.
%
%
%
 
\vskip 2mm
\noindent
{\textbf{Pre-processing.}}
In this phase the four Steps of Section~\ref{subsec:preprocessing} are
performed on both $A$ and  $B$; to justify these steps we need
metarules (Define0,1,2), (Redplus1,2), (Redminus1,2),
(Disjunction1,2), (ConstElim0,1,2), and (Propagate1,2)---the latter
because if $i,j$ are \abcommon, the guessing of $i\uguale j$ versus
$i\not\uguale j$ in Step 3 can be done, say, in the $A$-component and
then propagated to the $B$-component.
At the end of the
preprocessing phase, the following 
properties (to be maintained as invariants afterwards) hold:
\begin{enumerate}[({i}1):]
\item $A$ (resp. $B$) contains $i \not\uguale j$ for
  all \alocal (resp. \blocal) constants $i,j$ of sort \INDEX occurring
  in $A$ (resp. in $B$);
\item if $a, i$ occur in $A$ (resp. in $B$),
 then $rd(a,i)$ reduces to an \alocal (resp. \blocal) constant of sort
 \ELEM.
\end{enumerate}

\noindent
{\textbf{Completion.}} 
Some groups of instructions to be executed non-deterministically
constitute the completion phase.  There is however an important
difference here with respect to the completion phase of
Section~\ref{subsec:gaussian}: it may happen that we need some
\emph{guessing} also inside the completion phase (only
the instructions from group ($\gamma$) below may need such guessings).
Each instruction can be easily justified by suitable metarules 
(we omit the 
straightforward
details). 
The groups of instructions
are the following:
\begin{enumerate}[($\alpha$)]
 \item Apply to $A$ or to $B$ any instruction from the completion phase of Section~\ref{subsec:gaussian}.
\item[($\beta$)] If there is an \abcommon literal that belongs to $A$  but not to $B$ (or vice versa), copy it in $B$ (resp. in $A$).
\item[($\gamma$)] Replace \emph{undesired literals}, i.e., those
violating conditions (I)-(II)-(III) from Proposition~\ref{prop:merging}.
\end{enumerate}
To avoid trivial infinite loops with the $(\beta)$ instructions,
rules in $(\alpha)$ deleting an \abcommon literal should be performed \emph{simultaneously} 
in the $A$- and in the $B$-components (it can be easily checked - see the 
proof of Theorem~\ref{thm:main} below -
%
that this is always possible, 
\emph{if  rules in $(\beta)$ and $(\gamma)$ are given higher priority}).

Instructions ($\gamma$) need to be 
described in more details.
Preliminarily, we introduce 
a technique
that we call \emph{Term Sharing}.  Suppose that the $A$-component
contains a literal $\alpha =t$, where the term $t$ is \abcommon but
the free constant $\alpha$ is only \alocal. Then it is possible to
``make $\alpha$ \abcommon'' in the following way. First, introduce a
fresh \abcommon constant $\alpha'$ with the explicit definition
$\alpha'=t$ (to be inserted both in $A$ and in $B$, as justified by
metarule (Define0)); then replace the literal $\alpha=t$ by
$\alpha=\alpha'$ and replace $\alpha$ by $\alpha'$ everywhere else in
$A$; finally,
delete $\alpha\uguale\alpha'$ too.  The  
result is a pair
$(A, B)$ where basically nothing has changed but $\alpha$ has been
renamed to an \abcommon constant $\alpha'$. Notice that the above
transformations can be justified by metarules (Define0), (Redplus1),
(Redminus1), (ConstElim1).  We are now ready to explain instructions
($\gamma$) in details.  First, consider undesired literals
corresponding to the rewrite rules of the form
\begin{equation}
  \label{eq:undesired1}
  rd(c, i) \rightarrow d
\end{equation}
in which the left-hand side is \abcommon and the right-hand side is,
say, 
$A$-strict.
If we apply Term Sharing, we can solve
the problem by renaming $d$ to an \abcommon fresh constant $d'$.
We can apply a similar procedure to the rewrite rules
\begin{equation}
  \label{eq:undesired2}
  a \rightarrow wr(c, I, E) 
\end{equation}
in case the right-hand side is \abcommon and the left-hand side is not;
when we rename $a$ to some fresh \abcommon constant $c'$, we must
arrange the precedence so that $c'>c$ 
to orient the renamed
literal as $c'\rightarrow wr(c, I, E)$.
Then, consider the literals of the form
\begin{equation}
  \label{eq:undesired3}
  \diff(a,b)=k
\end{equation}
in which the left-hand side is \abcommon and the right-hand side is,
say, 
$A$-strict.
Again, we can rename $k$ to some
\abcommon constant 
$k'$
 by Term Sharing.  Notice that 
$k'$
is \abcommon, whereas $k$ was only \alocal: this implies that we might
need to perform some guessing to maintain the invariant
(i1). Basically, we need to repeat Step 3 from
Section~\ref{subsec:preprocessing} till invariant (i1) is restored
($k'$
must be compared for equality with the other \blocal
constants of 
sort \INDEX).
The last undesired literals to take care of are the rules of the form
\begin{equation}
  \label{eq:unsuitable}
  c\to wr(c', I, E)
\end{equation}
having an \abcommon left-hand side but, say, only an \alocal
right-hand side ({literals of the form $d=e$ are automatically
  oriented in the right way by our choice of the precedence}).
Notice that from the fact that $c$ is \abcommon, it follows (by our
choice of the precedence) that $c'$ is \abcommon too.  We can freely
suppose that $I$ and $E$ are
split into sub-lists $I_1, I_2$ and $E_1, E_2$, respectively, such that
$I\coincide I_1\cdot I_2$ and $E\coincide E_1\cdot E_2$,
where 
$I_1, E_1$ are \abcommon, $I_2\coincide i_1,\dots, i_n$,
$E_2\coincide e_1, \dots, e_n$ and for each $k=1, \dots, n$ at least
one from $i_k, e_k$ is not \abcommon. This $n$ (measuring essentially the number of non \abcommon symbols in~\eqref{eq:unsuitable}) is  called the
\emph{degree} of the undesired literal~\eqref{eq:unsuitable}:  
in the following, we shall see how 
to eliminate~\eqref{eq:unsuitable} or to
 replace it with a smaller degree literal. 
We first make
a guess (see metarule (Disjunction1)) about the truth value of the
literal $c\uguale wr(c', I_1, E_1)$.  In the first
case, we add the positive literal to the current constraint; as a
consequence, we get that the literal~\eqref{eq:unsuitable} is
equivalent to $c\uguale wr(c, I_2, E_2)$ and also to $rd(c,I_2)\uguale
E_2$ (see \textbf{Red} in Figure~\ref{fig:key-lemmas}).  In
conclusion, in this case, the literal~\eqref{eq:unsuitable} is
replaced by the \abcommon rewrite rule $c\to wr(c', I_1, E_1)$ and by
the literals $rd(c,I_2)\uguale E_2$.
In the second case, we guess that the negative literal $c\not\uguale
wr(c', I_1, E_1)$ holds; we introduce a fresh \abcommon constant $c''$
together with the
 defining \abcommon literal\footnote{We put $c>c''>c'$ in the
  precedence.  Notice that invariant (i2) is maintained, because all
  terms $rd(c'', h)$  normalize to an element constant. 
In case $I_1$ is empty, 
one can directly take $c'$ as $c''$.
 }
\begin{equation}
  \label{eq:newrew}
  c''\to wr(c', I_1, E_1)
\end{equation}
(see metarule (Define0)).
The literal~\eqref{eq:unsuitable} is replaced by the  literal
\begin{equation}
  \label{eq:diffnew1}
  c\to wr(c'', I_2, E_2).
\end{equation}
We show how to make the degree of~\eqref{eq:diffnew1} smaller than
$n$. In addition, we eliminate the negative literal $c\not\uguale c''$
coming from our guessing (notice that, according to~\eqref{eq:newrew}, $c''$ renames $wr(c', I_1, E_1)$).
This is done as follows:
we introduce
 fresh \abcommon constants 
 $i, d, d''$
together with the \abcommon defining literals
\begin{equation}
  \label{eq:diffnew}
  \diff(c, c'')\uguale i,\quad rd(c, i)\rightarrow d, \quad rd(c'', i) \rightarrow d''
\end{equation}
(see metarule (Define0)).
Now it is possible to replace $c\not \uguale c''$ by the literal $d\not\uguale d''$
(see axiom~\eqref{ax3a}). 
 Under the assumption  $Distinct(I_2)$,  the following statement is \AXDIFF valid:
$$
c\uguale wr(c'', I_2, E_2) 
\wedge  rd(c'', i)\uguale d'' \wedge rd(c, i)\uguale d \wedge d\not\uguale d''\to \bigvee_{k=1}^n (i\uguale i_k\wedge d\uguale e_k).
$$
Thus, we get $n$ alternatives (see metarule (Disjunction1)). In the $k$-th
alternative, we can remove 
the constants $i_k, e_k$ from the constraint, by replacing them with
the \abcommon terms $i, d$ respectively (see metarules (Redplus1), (Redplus2), (Redminus1), (Redminus2),(ConstElim1),(ConstElim0));
notice that it might be necessary to complete the index partition.
In this way, the degree of~\eqref{eq:diffnew1} is now smaller than $n$.


\emph{In conclusion,} if we apply exhaustively Pre-Processing and
Completion instructions above, starting from an initial pair of
constraints $(A, B)$, we can produce a tree, whose nodes are labelled
by pairs of constraints (the successor nodes of a node labelled
$(\tilde A, \tilde B)$ are labelled by pairs of constraints that are
obtained from $(\tilde A, \tilde B)$ by applying an
instruction). {Notice that the branching in the tree is due to
  instructions that need guessing and that Pre-Processing instructions
  are applied only in the initial segment of a branch.} We call such a
tree an \emph{in\-ter\-po\-la\-ting tree} for $(A, B)$. The following
result shows that we obtained an interpolation algorithm for \AXDIFF.
\begin{theorem}\label{thm:main}
  Any interpolation tree for $(A, B)$ is finite; moreover, it is an
  interpolating metarules refutation (from which an interpolant can be
  recursively computed according to Proposition~\ref{prop:metarules})
  precisely iff $A\wedge B$ is \AXDIFF-unsatisfiable.
\end{theorem}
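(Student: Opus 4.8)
The plan is to prove the three assertions of Theorem~\ref{thm:main} — that every interpolation tree is finite, that it is an interpolating metarules refutation exactly when $A\wedge B$ is \AXDIFF-unsatisfiable, and that in that case a quantifier-free interpolant can be read off — more or less separately, relying on the apparatus already developed. First I would record the bookkeeping facts. Each Pre-processing step and each Completion instruction of Section~\ref{subsec:isolver} has been presented as an instance, or a short composition, of metarules of Table~\ref{tab:metarules}; hence an interpolation tree is in particular a tree built from metarules, and whenever all its leaves carry $\bot$ it is, by definition, an interpolating metarules refutation, so Proposition~\ref{prop:metarules} produces a quantifier-free \abcommon interpolant, computed top-down along the tree. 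Soundness of this direction follows from Proposition~\ref{prop:equisat}: by induction on the tree, $A\wedge B$ is $\exists$-equivalent to the disjunction of the constraints labelling the leaves (unary metarules preserve $\exists$-equivalence, $n$-ary metarules replace the conclusion by the disjunction of the premises up to $\exists$-equivalence); if every leaf contains $\bot$ then $A\wedge B$ is $\exists$-equivalent to $\bot$, i.e.\ \AXDIFF-unsatisfiable.

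For finiteness, since each rule has finitely many premises, by K\"onig's Lemma it is enough to bound the length of every branch. Pre-processing affects only the initial segment of a branch and terminates trivially (Steps~1, 2, 4 are one-shot; Step~3 backtracks over the finitely many partitions of the index constants). For the Completion phase I would exhibit a lexicographically ordered termination measure for a pair $(A,B)$ with three components: (i)~a \emph{degree measure}, namely the number of undesired literals of the forms~\eqref{eq:undesired1}, \eqref{eq:undesired2}, \eqref{eq:undesired3}, each counted once, together with the sum of the degrees of the undesired literals of the form~\eqref{eq:unsuitable}; (ii)~the multiset complexity measure of the proof of Theorem~\ref{thm:axdiff_sat}, adapted so that copying an \abcommon literal does not change it; (iii)~the number of \abcommon literals occurring in exactly one of $A$ and $B$. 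Under the stated priorities — the instructions of groups ($\gamma$) and ($\beta$) take precedence over those of ($\alpha$), and an ($\alpha$)-instruction deleting an \abcommon literal is mirrored simultaneously in both components — one checks that a ($\gamma$)-instruction strictly decreases (i) (Term Sharing removes one undesired literal of the forms \eqref{eq:undesired1}–\eqref{eq:undesired3} while adding only \abcommon fresh material, and the case split around~\eqref{eq:unsuitable} either removes that literal or replaces it by~\eqref{eq:diffnew1}, of strictly smaller degree), that a ($\beta$)-instruction leaves (i) and (ii) unchanged and strictly decreases (iii) (copying an \abcommon literal creates no undesired literal), and that an ($\alpha$)-instruction strictly decreases (ii) without, after the prioritised ($\gamma$)-corrections, increasing (i) or (iii). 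Hence no branch is infinite. I expect this to be the main obstacle: the delicate point is exactly the verification just sketched — that the fresh \abcommon constants and literals introduced by Term Sharing and by the degree-lowering split of~\eqref{eq:unsuitable} into~\eqref{eq:diffnew1} do not spoil the components that bound the ($\alpha$)-steps, and that the finitely many undesired literals an ($\alpha$)-step can transiently create are absorbed by the ($\gamma$)-corrections before the next ($\alpha$)-step.

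Finally, the characterisation. A node of an interpolation tree has no successors precisely when either $\bot\in A\cup B$ (a refutation leaf) or no instruction applies (a saturated leaf). Suppose a leaf is saturated and $\bot\notin A'\cup B'$. Since no ($\alpha$)-instruction applies, the solver of Section~\ref{sec:solver} has run to completion on each of $A'$ and $B'$, so by Theorem~\ref{thm:axdiff_sat} both $A'$ and $B'$ are \nform; since moreover neither failure instruction (U1) nor (U2) applies, Lemma~\ref{lem:normal} gives that $A'$ and $B'$ are both \AXDIFF-consistent. As no ($\beta$)-instruction applies, condition~(O) of Proposition~\ref{prop:merging} holds; as no ($\gamma$)-instruction applies, there are no undesired literals, i.e.\ conditions~(I), (II), (III) of Proposition~\ref{prop:merging} hold. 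Hence, by Proposition~\ref{prop:merging}, $A'\cup B'$ is \AXDIFF-satisfiable; pulling back along the branch by the $\exists$-equivalence of $A\wedge B$ with the disjunction over the leaves (Proposition~\ref{prop:equisat}, as above), $A\wedge B$ is \AXDIFF-satisfiable. Consequently, if $A\wedge B$ is \AXDIFF-unsatisfiable then the tree has no saturated leaf, so every leaf carries $\bot$ and the tree is an interpolating metarules refutation, and Proposition~\ref{prop:metarules} yields the interpolant; combined with the soundness established above, this is exactly the ``precisely iff'' of the statement.
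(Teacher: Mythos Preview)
Your treatment of correctness and completeness is essentially the paper's argument, just spelled out in more detail: saturated leaves satisfy the hypotheses of Proposition~\ref{prop:merging} (via Theorem~\ref{thm:axdiff_sat} and Lemma~\ref{lem:normal}), refutation leaves propagate unsatisfiability by Proposition~\ref{prop:equisat}, and Proposition~\ref{prop:metarules} reads off the interpolant. That part is fine.

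The termination argument, however, is where you diverge from the paper, and your three-component lexicographic measure has a genuine gap. The problem is the interaction between ($\alpha$) and ($\gamma$). An ($\alpha$)-step \emph{can} create new undesired literals: for instance, the Gaussian step (C1) applied to two $A$-strict rules $a\to wr(b_1',\dots)$ and $a\to wr(b_2',\dots)$ may produce $b_1\to wr(b_2,J,D)$ with $b_1$ \abcommon\ and $J,D$ still containing $A$-strict constants, i.e.\ a fresh undesired literal of type~\eqref{eq:unsuitable}. So ($\alpha$) can strictly increase your component~(i), and lex$(i,ii,iii)$ does not decrease at that step. Your attempted repair, ``without, after the prioritised ($\gamma$)-corrections, increasing~(i)'', amounts to grouping ($\alpha$) with the subsequent ($\gamma$)-block; but then you must show this composite decreases~(ii), and it need not, because the ($\gamma$)-instructions introduce fresh \abcommon\ constants $c'', i, d, d''$ and new literals~\eqref{eq:newrew}--\eqref{eq:diffnew}, which \emph{increase} the multiset measure inherited from Theorem~\ref{thm:axdiff_sat}. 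You flag exactly this as ``the delicate point,'' but the proposal does not resolve it.

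The paper's termination argument avoids this difficulty by a different decomposition. The key observation is that no instruction ever creates a new \emph{non-\abcommon} constant (only fresh \abcommon\ ones are introduced), and that Term Sharing and the renaming sub-case of the \eqref{eq:unsuitable}-elimination each strictly decrease the number of non-\abcommon\ constants. Hence these two kinds of ($\gamma$)-step can fire at most $N$ times in total along any branch, where $N$ is the initial number of non-\abcommon\ constants, \emph{regardless} of how many undesired literals ($\alpha$) regenerates in between. Once they are exhausted, only ($\alpha$), ($\beta$), and the non-renaming sub-case of the \eqref{eq:unsuitable}-elimination remain; these do not enlarge the signature, and the paper then disposes of them with a single multiset-of-pairs measure $\langle m(L), N_L\rangle$ (with $N_L\in\{0,1,2\}$ recording whether $L$ lies in $A\cap B$, $A\setminus B$, or $B\setminus A$), which simultaneously handles ($\beta$) via the second component and the rest via the first. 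Replacing your component~(i) by the count of non-\abcommon\ constants would make your lexicographic scheme work.
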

\begin{proof}
 Since all instructions can be justified by metarules and since our instructions bring any pair of constraints into constraints which are either manifestly 
inconsistent (i.e. contain $\bot$) or satisfy the requirements of Proposition~\ref{prop:merging}, the second part of the claim is clear.
We only have to show that all branches are finite (then K\"onig lemma applies).

A complication that we may face here is due to the fact that during instructions ($\gamma$), the signature is enlarged. However,
  notice that our instructions may
introduce genuinely new \abcommon array constants, however \emph{they
  can only rename index constants, element constants and non \abcommon
  array constants}. Moreover: (1) Term Sharing decreases the number of
the constants which are not \abcommon;  (2) each call in the
recursive procedure for the elimination of literals~\eqref{eq:unsuitable},  \emph{either} (2.i) renames
to \abcommon constants some constants which were not \abcommon
before, \emph{or} (2.ii) just replaces a literal of the kind 
$c\uguale wr(c', I_1\cdot I_2, E_1\cdot E_2)$ by the literals
$$
c\uguale wr(c', I_1, E_1), \qquad rd(c', I_2)=E_2
$$
(see the first alternative following the guessing about truth of the literal $c\uguale wr(c', I_1, E_1)$).
Since there are only finitely many  non \abcommon constants at all, after finitely many steps neither Term Sharing nor (2.i)
apply anymore. 
 We finally show that instructions 
($\alpha$), ($\beta$) and (2.ii) (that do \emph{not} enlarge the signature) cannot be executed infinitely many times either. To this aim, it is sufficient to associate with each pair of constraints
$(\tilde A, \tilde B)$ the complexity measure given by the multi-set of pairs (ordered lexicographically) $\langle m(L), N_L\rangle$
(varying $L\in \tilde A\cup\tilde B$), where $m(L)$ is the multi-set of terms associated with the literal $L$ and $N_L$ is 1 if $L\in \tilde A\setminus\tilde B$, 
2 if
$L\in \tilde B\setminus\tilde  A$, and 0 if $L\in \tilde A\cap\tilde B$.
 In fact,  the second component in the above pairs takes care of instructions ($\beta$), whereas the first component covers all the remaining instructions.
Notice that it is important that, whenever an \abcommon literal is
deleted, the deletion happens simultaneously in both components
({otherwise, the ($\beta$) instruction could re-introduce it, causing
  an infinite loop; our complexity measure does not decrease if an
  \abcommon literal is replaced by smaller literals only in the $A$-
  or in the $B$-component}): in fact, it can be shown (by inspecting
the instructions from the completion phase of
Subsection~\ref{subsec:gaussian}) that whenever an \abcommon literal
is deleted, the instruction that removes it involves only \abcommon
literals, if undesired literals are removed first.\footnote{Let us see
  an example by considering instruction (C3). This instruction removes
  a literal $rd(a,i)\to e'$ using a literal $a\to wr(b,I,E)$ (and
  possibly rewrite rules $rd(b,i)\to d'$ as well as rewrite rules that
  might reduce some of the $e', d', E$). Now, if $rd(a,i)\to e'$ is
  \abcommon and all the other involved rules are not undesired
  literals, the instruction as a whole manipulates \abcommon
  literals. As such, if ($\beta$) has been conveniently applied, the
  instruction can be performed 
  simultaneously in the $A$- and in the
  $B$-component and our specification is precisely to do that.
} Thus, if instructions in ($\beta$) and ($\gamma$) have priority (as
required by our specifications 
above),
\abcommon literal deletions caused by ($\alpha$) can be performed both
in the $A$- and in the $B$-component (notice also that the
instructions from ($\beta$) and (2ii) do not remove \abcommon
literals).
\end{proof}

From the theorem above it immediately follows
Theorem~\ref{thm:amalgamation}, that we have already proved in
Section~\ref{subsec:semantic-arg} by using model-theoretic notions
(thus in a non-constructive way).

\subsection{An Example}

To illustrate our method, we describe the computation of an interpolant 
for the problem 
$$\Pi \coincide (A_0,\ B_0)$$  
where
\begin{eqnarray*}
A_0 & \coincide & \{\ a=wr(b,i,d)\ \} \\
B_0 & \coincide & \{\ rd(a,j)\not=rd(b,j),\ rd(a,k)\not=rd(b,k),\ j\not=k\ \}.
\end{eqnarray*}

Notice that $i, d$ are $A$-strict constants, $j, k$ are $B$-strict
constants, and $a, b$ are \abcommon constants with precedence
$a>b$. 
The computation of the interpolant in our framework can be represented with a tree, 
growing upward from $\Pi$, in which each step can be identified with a set of appropriate 
metarules application. 

To begin with we first apply Pre-Processing instructions to obtain 
\begin{eqnarray*}
A_1 & \coincide & \{\ a=wr(b,i,d),\ rd(a,i)=e_5,\ rd(b,i)=e_6\ \} \\
B_1 & \coincide & \{\, rd(a,j)=e_1
                  ,\, rd(b,j)=e_2
	          ,\, rd(a,k)=e_3
	          ,\, rd(b,k)=e_4
	          ,\, e_1\not=e_2
	          ,\, e_3\not=e_4
	          ,\, j\not=k\, \}.
\end{eqnarray*}
Since $a=wr(b,i,d)$ is an undesired literal of the
kind~\eqref{eq:unsuitable}, we 
generate the two sub-problems 
\begin{eqnarray*}
\Pi_1 & \coincide & (A_1 \cup \{\, rd(b,i)=d,\, a=b\, \},\ B_1), \mbox{ and } \\
\Pi_2 & \coincide & (A_1 \cup \{\, a \not= b\, \},\ B_1)
\end{eqnarray*}
{(this
is precisely the case in which there is no need of an extra \abcommon
constant $c''$).}  

Let us consider $\Pi_1$ first. 
Notice that $A \vdash a=b$, and $a=b$ is \abcommon. 
Therefore we send $a=b$ to $B_1$, and we may derive the new equality $e_1=e_2$ from 
the critical pair (C3) $e_1 \leftarrow rd(a,j) \rightarrow rd(b,j) \rightarrow e_2$,
thus obtaining 
\begin{eqnarray*}
A_2 & \coincide & \{\ rd(b,i)=d,\ a=b,\ rd(a,i)=e_5,\ rd(b,i)=e_6\ \} \\
B_2 & \coincide & \{\, rd(b,j)=e_2,\, rd(a,k)=e_3,\, rd(b,k)=e_4,\, e_1\not=e_2,\, e_3\not=e_4,\, j\not=k,\, a=b,\, e_1=e_2\, \}.
\end{eqnarray*}
Now $B$ is inconsistent (as it contains both $e_1\not=e_2$ and $e_1=e_2$). 
The interpolant for $\Pi_1$ can be computed with the 
{\em interpolating instructions} of the metarules 
(Close2, Redplus2, Redmius2, Propagate1) 
resulting in 
$$
\varphi_1 \coincide a=b
$$
as shown in Figure~\ref{fig:derivation_pi_1}.

\begin{figure}[h!]

\fbox{
\begin{minipage}{\textwidth}
\scriptsize
\medskip
\begin{prooftree}

\def\extraVskip{5pt}
\def\labelSpacing{10pt}
\def\fCenter{\sep}

\AxiomC{}
\LeftLabel{Close2}
\RightLabel{$\top$}
\UnaryInf$\quad\quad\quad\quad \ldots \quad\quad\quad\quad \fCenter B_1' \cup \{\, a=b,\, e_1=e_2\, \}$
\LeftLabel{Redminus2}
\RightLabel{$\top$}
\UnaryInf$\quad\quad\quad\quad \ldots \quad\quad\quad\quad \fCenter B_1 \cup \{\, a=b,\, e_1=e_2\, \}$
\LeftLabel{Redplus2}
\RightLabel{$\top$}
\UnaryInf$\quad\quad\quad\quad \ldots \quad\quad\quad\quad \fCenter B_1 \cup \{\, a=b\, \}$
\LeftLabel{Propagate1}
\RightLabel{$a=b$}
\UnaryInf$A_1 \cup \{\, rd(b,i)=d,\, a=b\, \} \fCenter B_1$
\end{prooftree}
\medskip

where \\ 
$B_1' \equiv B_1 \setminus \{ rd( b, j ) = e_2 \}$

\end{minipage}
}

\caption{Interpolant derivation for $\Pi_1$ using metarules. The
derivation is to be read bottom-up. The labels
for the rules are shown on the left, while the partial interpolants,
computed top-down,
are shown on the right.}
\label{fig:derivation_pi_1}

\end{figure}

Then, let us consider branch $\Pi_2$. Recall that this branch
originates from the attempt of removing the undesired rule 
$a \rightarrow wr(b,i,d)$.  
We introduce, in both $A$ and $B$, the \abcommon defining literals
$\diff(a,b)=l, rd(a,l)=f_1, rd(b,l)=f_2$. In order
to remove $a\not=b$, we introduce $f_1\not=f_2$ in $A$, which is propagated to $B$, 
thus obtaining:
\begin{eqnarray*}
A_3 & \coincide\ \{ &\! a = wr(b,i,d), \\ 
  &               &\! \diff(a,b)=l,\ rd(a,l)=f_1,\ rd(b,l)=f_2,\ f_1\not=f_2\ \} \\
B_3 & \coincide\ \{ &\! rd(a,j)=e_1,\, rd(b,j)=e_2,\, rd(a,k)=e_3,\, rd(b,k)=e_4,\, \\
  &               &\! e_1\not=e_2,\, e_3\not=e_4,\, j\not=k,\, \\
  &               &\! \diff(a,b)=l,\, rd(a,l)=f_1,\, rd(b,l)=f_2,\, f_1\not=f_2\ \}.
\end{eqnarray*}
Since $a=wr(b,i,d)$ contains only the index $i$, we do not have a real case 
split. Therefore we replace $i$ with $l$, and $d$ with $f_1$. 
At last, we propagate the \abcommon literal $a=wr(b,l,f_1)$ to $B$.
After all these steps we obtain:
\begin{eqnarray*}
A_4 & \coincide\ \{ &\! a=wr(b,l,f_1), \\
  &               &\! \diff(a,b)=l,\ rd(a,l)=f_1,\ rd(b,l)=f_2,\ f_1\not=f_2\ \} \\
B_4 & \coincide\ \{ &\! rd(a,j)=e_1,\ rd(b,j)=e_2,\ rd(a,k)=e_3,\ rd(b,k)=e_4, \\ 
  &               &\! e_1\not=e_2,\ e_3\not=e_4,\ j\not=k,\ \\
  &               &\! \diff(a,b)=l,\ rd(a,l)=f_1,\ rd(b,l)=f_2,\ f_1\not=f_2, \\ 
  &               &\! a=wr(b,l,f_1)\ \}.
\end{eqnarray*}
Since we have one more \abcommon index constant $l$, we
complete the current index constant partition, namely $\{ k \}$ 
and $\{ j \}$: we have three alternatives, to let $l$ stay alone 
in a new class, or to add $l$ to one of the two existing classes.
In the first alternative, because of the following critical pair (C3)
$e_1 \leftarrow rd(a,j) \rightarrow rd(wr(b,l,f_1),j) \rightarrow e_2$, we add
$e_1 = e_2$ to $B$, which becomes trivially unsatisfiable. The other
two alternatives yield similar outcomes.
\COMMENT{ 
For each sub-problem the interpolant, reconstructed by reverse
application of the interpolating instructions of (Define0) and (Propagate1),
is 
$$\varphi'_2 \coincide 
(a=wr(b,\diff(a,b),rd(a,\diff(a,b))) \wedge rd(a, \diff(a,b))\neq rd(b, \diff(a,b))).$$
The interpolant $\varphi_2$ for the branch $\Pi_2$
has to be computed by combining with (Disjunction2) 
three copies of $\varphi'_2$, and so 
$\varphi_2 \coincide \varphi'_2$. 
}
For each sub-problem the interpolant is $\top$. The partial interpolant for $\Pi_2$
has to be reconstructed by the reverse application of the interpolanting instructions
of (Define0) and (Propagate1), as shown in Figure~\ref{fig:derivation_pi_2}, which
yield 
$$
\varphi_2 \coincide 
(a=wr(b,\diff(a,b),rd(a,\diff(a,b))) \wedge rd(a, \diff(a,b))\neq rd(b, \diff(a,b))).
$$

\begin{figure}[h!]

\fbox{
\begin{minipage}{\textwidth}
\scriptsize
\medskip
\begin{prooftree}

\def\extraVskip{5pt}
\def\labelSpacing{10pt}
\def\fCenter{\sep}

\AxiomC{}
\RightLabel{$\top$}
\LeftLabel{Close2}
\UnaryInf$\ldots \fCenter B_1'' \cup C \cup \{ l\not=k,\, l\not=j,\, e_1=e_2 \}$
\RightLabel{$\top$}
\LeftLabel{Redminus2}
\UnaryInf$\ldots \fCenter B_1' \cup C \cup \{ l\not=k,\, l\not=j,\, e_1=e_2 \}$
\RightLabel{$\top$}
\LeftLabel{Redplus2}
\UnaryInf$\ldots \fCenter B_1' \cup C \cup \{ l\not=k,\, l\not=j \}$
\RightLabel{$\!\!\!\top$}
\LeftLabel{Disjunction2}
\AxiomC{\vdots}
\AxiomC{\vdots}
\TrinaryInf$\ldots \fCenter B_1' \cup C$
\LeftLabel{Propagate1}
\RightLabel{$a=wr(b,l,f_1)$}
\UnaryInf$A_1' \cup \{ f_1 \not= f_2,\, a=wr(b,l,f_1)\, \} \cup C \fCenter B_1 \cup C \cup \{ f_1 \not=f_2 \}$
\LeftLabel{Redminus1}
\RightLabel{$a=wr(b,l,f_1)$}
\UnaryInf$A_1 \cup \{\, f_1 \not= f_2,\, a=wr(b,l,f_1)\, \} \cup C \fCenter B_1 \cup C \cup \{ f_1 \not=f_2 \}$
\LeftLabel{Redplus1}
\RightLabel{$a=wr(b,l,f_1)$}
\UnaryInf$A_1 \cup \{\, f_1 \not= f_2\, \} \cup C \fCenter B_1 \cup C \cup \{ f_1 \not=f_2 \}$
\LeftLabel{Propagate1}
\RightLabel{$a=wr(b,l,f_1) \wedge f_1 \not= f_2$}
\UnaryInf$A_1 \cup \{\, f_1 \not= f_2\, \} \cup C \fCenter B_1 \cup C$
\LeftLabel{Redminus1}
\RightLabel{$a=wr(b,l,f_1) \wedge f_1 \not= f_2$}
\UnaryInf$A_1 \cup \{\, a\not=b,\, f_1 \not= f_2 \} \cup C \fCenter B_1 \cup C$
\LeftLabel{Redplus1}
\RightLabel{$a=wr(b,l,f_1) \wedge f_1 \not= f_2$}
\UnaryInf$A_1 \cup \{\, a\not=b\, \} \cup C \fCenter B_1 \cup C$
\LeftLabel{Define0*}
\RightLabel{$\varphi_2$}
\UnaryInf$A_1 \cup \{\, a\not=b\, \} \fCenter B_1$
\end{prooftree}
\medskip

where \\ 
$C \equiv \{ \diff(a,b)=l,\, rd(a,l)=f_1,\, rd(b,l)=f_2\, \}$ \\
$A_1' \equiv A_1 \setminus \{\, a=wr(b,i,d) \, \}$ \\
$B_1' \equiv B_1 \cup \{\, f_1\not=f_2,\, a=wr(b,l,f_1) \, \}$ \\
$B_1'' \equiv B_1' \setminus \{\, rd(a,j)=e_1\, \}$ \\
$\varphi_2 \equiv (a=wr(b,\diff(a,b),rd(a,\diff(a,b))) \wedge rd(a,\diff(a,b)) \not= rd(b,\diff(a,b)))$

\end{minipage}
}

\caption{Interpolant derivation for $\Pi_2$ using metarules. The
derivation is to be read bottom-up. The labels
for the rules are shown on the left, while the partial interpolants,
computed top-down, are shown on the right.}
\label{fig:derivation_pi_2}

\end{figure}

The final interpolant is computed by combining the interpolants for
$\Pi_1$ and $\Pi_2$ by means of (Disjunction1), yielding 
\begin{eqnarray*}
\varphi & \coincide & \varphi_1 \vee \varphi_2 \coincide \\
        & \coincide & (a=b \vee (a=wr(b,\diff(a,b),rd(a,\diff(a,b))) \wedge \\
	&           & \wedge\ rd(a, \diff(a,b))\neq rd(b,\diff(a,b)))
\end{eqnarray*}
which can be simplified to $\varphi \equiv (a=wr(b,\diff(a,b),rd(a,\diff(a,b))))$.

\section{Related work and Conclusions}
\label{sec:related}

There are two main lines of work in the literature which is relevant
for our paper: satisfiability procedures for variants and extensions
of the theory of arrays and interpolation methods related to the
theory of arrays.  Below, we discuss the works which are more closely
related to our approach in some details.

\subsection{Satisfiability}
\label{subsec:related-satisfiability}

Since its introduction by McCarthy in~\cite{mccarthy}, the theory of
arrays have received a lot of attention in automated theorem proving
and verification because of its importance in modelling fundamental
mechanisms of hardware and software systems such as memory read and
write operations.  For example, a lot of papers have been devoted to
design, prove correct, and build decision procedures for the
satisfiability problem of quantifier-free and selected classes of
quantified formulae in (various extensions of) the theory of arrays;
e.g.,~\cite{levitt,SBD+01,ARR,AMAI,kapurzarba,GD07,BNO+08b,GKF08,BB09b,dMB09}.
The interested reader is pointed to the `related work' sections
of~\cite{AMAI,dMB09} for a comprehensive overview.  Here, we notice
that many of them are based on instantiating the axioms of the theory
so that $rd$ and $wr$ can be considered as uninterpreted functions and
state-of-the-art procedures for the theory of equality can be used.
Notable exceptions are~\cite{ARR,levitt,SBD+01} where techniques based
on rewriting or constraint solving are used.

In~\cite{ARR}, the standard superposition
calculus~\cite{superposition} is proven to terminate on the union of
the theory of arrays and a set of ground literals; thereby, providing
a decision procedure for the quantifier-free satisfiability problem
because of the refutation completeness of the calculus.  (The
efficiency of the approach is explored in~\cite{Bonacina}.)  While the
saturation (roughly, the exhaustive application of the rules of the
superposition calculus) can be seen as a generalization of completion
where clauses, and not only equalities, are handled, our Gaussian
completion\footnote{The Gauss elimination procedure for systems of
  linear equalities has been lifted to elementary theories
  in~\cite{gaussBG} and, since the theory of arrays is close to being
  Gaussian~\cite{gaussRB}, we show that `Gaussian-like' steps can be
  exploited during completion phase.} has some distinctive features.
In fact, while the three critical pairs (C3), (C4), and (C5) in
Section~\ref{subsec:gaussian} can be regarded as instances of the
inference rules of a superposition calculus (see~\cite{ARR} for
details), the critical pairs (C1) and (C2), exploiting the
equivalences in Figure~\ref{fig:key-lemmas}, are impossible to recast
in any standard completion procedure (see, e.g.,~\cite{rewriting}).
In fact, the way in which the critical pairs (C1) and (C2) are
eliminated involves the addition of equalities containing $rd$'s (in
order to constrain the values stored at certain locations in the
arrays mentioned in the rules of the critical pair) besides the
replacement of one or both the parent rewrite rules by an equality.
Only in this way, we were able to eliminate badly orientable rules.
It seems difficult to adapt the approach in~\cite{ARR} to the problem
under consideration mainly because of the chosen order $>$ over terms.
In fact, we orient the equality $a\to wr(b,i,e)$ from left to right if
$a>b$, and use the equivalences in Figure~\ref{fig:key-lemmas} when
$b>a$ (or $a$ and $b$ are identical).  This allows us to eliminate all
critical pairs with rules \eqref{eq:r1}--\eqref{eq:r4} in
Definition~\ref{def:normal} since such rules contain just one
variable of sort \ARRAY and, trivially, no critical pairs involving
the variable should be considered.  If we choose the other way of
orienting the equalities of the form $a=wr(b,i,e)$, several critical
pairs would arise.  Although the completion of these pairs terminate
under suitable assumptions (as shown in~\cite{ARR}), this creates
serious problems when considering the computation of interpolants.

In~\cite{levitt}, a satisfiability procedure for the theory of arrays
with extensionality is designed so as to be easily combined with other
procedures by the Shostak combination method (see, e.g.,~\cite{RRT04}).
Two interface functionalities are required by the Shostak combination
method: (i) normalizing terms and (ii) solving equalities.  We
consider each activity in details.
\begin{enumerate}[(i)]
\item In Chapter 5 of~\cite{levitt}, a canonical form for
  terms built out by using a single $rd$ or several $wr$'s is defined
  by using a simplification ordering.  The canonical terms are similar
  to those occurring in a \nform\ constraint according to
  Definition~\ref{def:normal} above.
  A major difference is the use of if-then-else's to normalize
  read-terms in~\cite{levitt} while our procedure does not use them
  because item (i) of Definition~\ref{def:normal} implies that any two
  indexes in a constraint in normal form are known to be distinct.
  This choice makes the proof of the correctness of our procedure much
  easier with respect to the argument for the correctness proposed
  in~\cite{levitt} which ``\emph{has proved elusive to the authors}''
  of~\cite{SBD+01}.  So called `lazy' SMT solvers, based on the
  integration of a SAT solver and a satisfiability procedures for
  conjunction of literals, seem to be able to easily implement the
  case-splitting required to derive a complete partition by resorting
  to the available SAT solver as explained, e.g., in~\cite{BBC+05c}.
\item To compare with the activity of solving equalities
  in~\cite{levitt}, let us preliminarily observe that the logical
  equivalences in Figure~\ref{fig:key-lemmas} can be considered as
  rewrite rules (either from left to right or viceversa) that help us
  replace badly orientable equalities (recall the definition at the
  beginning of Section~\ref{sec:interpolation}) with equalities which
  are oriented from left to right.  This is precisely how the
  equivalences in Figure~\ref{fig:key-lemmas} are used in the Gaussian
  completion procedure (of Section~\ref{subsec:gaussian}) to eliminate
  critical pairs.  Similarly, in order to provide one of the basic
  functionalities required by the Shostak combination framework,
  \cite{levitt} designs a solver for equalities involving $wr$
  operations.  For example, the procedure in~\cite{levitt} allows one
  to solve the equality $a=wr(b,i,e)$ for $b$.  We can adapt our
  procedure (in particular, by using the equivalences \textbf{Symm}
  and \textbf{Refl} of Figure~\ref{fig:key-lemmas})
  to do the same.  The main difference is that
  our normalization is done off-line, i.e.\ the signature is fixed
  since all terms appearing in the constraint are given, while the
  procedure in~\cite{levitt} must be on-line since is to be integrated
  in a Shostak combination algorithm which requires that to process
  equalities one at a time, as soon as they become available.  Because
  of this, the completion algorithm can be simplified (since there is
  no need to compute intermediate normal forms) and standard
  techniques to show its termination can be used.  In contrast,
  ~\cite{levitt} gives only a brief sketch of the termination of his
  procedure.  For a more comprehensive comparison of on-line and
  off-line completion algorithms revisiting the Shostak congruence closure
  algorithm, the reader is pointed to~\cite{kapur,tiwari}.
\end{enumerate}
The procedure in~\cite{SBD+01} share with~\cite{levitt} and ours the
key activity of solving equalities.  The main difference is that no
canonical forms for terms or constraints are defined in~\cite{SBD+01};
rather a special form of equality over arrays is introduced, called
partial equality, which compares the content of two arrays only at a
(finite) set of indexes.  Formally, this is defined as follows: $a =_I
b$ iff for every index $i$ not in the set $I$, the content of $a$ at
$i$ is equal to that of $b$ at the same index $I$.  Thus, an equality
of the form $wr(a,i,e)=b$ can be rewritten as $a=_{\{i\}}b \wedge
rd(b,i)=e$.  The key insight of~\cite{SBD+01} is that it is possible
to eliminate all $wr$'s, so that arrays can be considered as
uninterpreted functions and $rd$ as function application, and a
slightly modified congruence closure (to cope with partial equality)
can be used to check satisfiability.  While no standard rewriting
techniques are used in~\cite{SBD+01}, it is interesting to notice that
two arrays $a$ and $b$ are cardinality dependent iff there exists a
finite set $I$ of indexes such that $a=_I b$.  
We do not introduce
a new predicate symbol and use it in designing a satisfiability
procedure, 
however we nevertheless exploit this notion and its preservation through embeddings 
(see Lemma~\ref{lem:dependency}) during our semantic interpolation proofs.

\subsection{Interpolation}

After McMillan's seminal work on interpolation for model
checking~\cite{McM03,McM04a},
 several
papers~\cite{jhala,McM04b,stokkermans,YM05,KMZ06,RS07,CGS08,voronkov,lynch,CGS10,BKR+10}
appeared whose aim was to design techniques for the efficient
computation of interpolants in first-order theories of interest for
verification, mainly uninterpreted function symbols, fragments of
Linear Arithmetic, or their combination.  
An interpolating
theorem prover is described in~\cite{McM05}, where a sequent-like
calculus is used to derive interpolants from proofs in propositional
logic, equality with uninterpreted functions, linear rational
arithmetic, and their combinations.  The method described
in~\cite{YM05} proposes a framework suitable for lazy SMT-solvers, in
which the theory solver is required to derive partial interpolants for
each theory lemmata it produces. The global interpolant can then be
computed at the propositional level. The paper also illustrates a
method to derive interpolants in a Nelson-Oppen combination procedure,
under certain restrictions on the theories to combine.  More recently,
in~\cite{CGS10} the ideas of~\cite{YM05} are adapted to cope with
state-of-the-art SMT-solving strategies for combinations of the
theories of uninterpreted functions and a fragment of Linear
Arithmetic (called difference logic).  In~\cite{KMZ06}, a method to
compute interpolants in data structures theories, such as sets and
arrays (with extensionality), by axiom instantiation and interpolant
computation in the theory of uninterpreted functions is described.  It
is also shown that the theory of arrays with extensionality does not
admit quantifier-free interpolation.
The ``split'' prover in~\cite{jhala} applies a sequent calculus for
the synthesis of interpolants along the lines of that in~\cite{McM05}
and is tuned for predicate abstraction~\cite{graf}.  In particular,
the method is shown to be complete in the sense that the computed
interpolants are guaranteed to provide the ``right'' level abstraction
to prove a certain property, if one exists.  The ``split'' prover can
handle a combination of theories among which also the theory of arrays
without extensionality is considered.  In~\cite{jhala}, it is pointed
out that the theory of arrays poses serious problems in deriving
quantifier-free interpolants because it entails an infinite set of
quantifier-free formulae, which is indeed problematic when
interpolants are to be used for predicate abstraction.  To overcome
the problem, \cite{jhala} suggests to constrain array valued terms to
occur in equalities of the form $a=wr(a,I,E)$ in the notation of this
paper.  It is observed that this corresponds to the way in which
arrays are used in imperative programs.  Further limitations are
imposed on the symbols in the equalities in order to obtain a complete
predicate abstraction procedure.  In~\cite{jhala-array}, the method
described in~\cite{jhala} is specialized to apply CEGAR
techniques~\cite{CGJ+00} for the verification of properties of
programs manipulating arrays.  The method of~\cite{jhala} is extended
to cope with range predicates which allow one to describe unbounded
array segments which permit to formalize typical programming idioms of
arrays, yielding property-sensitive abstractions.
In~\cite{stokkermans}, it is shown how to extend satisfiability
procedures based on axiom instantiation to compute interpolants.
However, the theory of arrays is not considered.  In~\cite{RS07}, the
approach of~\cite{stokkermans} is specialized to compute interpolants
in the combination of Linear Rational Arithmetic and the theory of
uninterpreted function symbols; again, the theory of arrays is not
considered.  A method for deriving interpolants in the theory of
equality with uninterpreted functions is also given in~\cite{KFG+09}
by extending a congruence closure algorithm.  In~\cite{voronkov}, a
method to derive quantified invariants for programs manipulating
arrays and integer variables is described.  A resolution-based prover
is used to handle an \emph{ad hoc} axiomatization of arrays by using
predicates.  Neither McCarthy's theory of arrays nor one of its
extensions are considered in~\cite{voronkov}.  The invariant synthesis
method is based on the computation of interpolants derived from the
proofs of the resolution-based prover and constraint solving
techniques to handle the arithmetic part of the problem.  The
resulting interpolants may contain even alternation of quantifiers.

Latest research on interpolating procedures has been focusing on
(extensions of) Linear Integer Arithmetic.  An interpolating procedure
for linear Diophantine equalities is outlined in~\cite{JCG09}. A
procedure for full Linear Integer Arithmetic based on a sequent
calculus can be found in~\cite{BKR+10}.  In~\cite{BKR+10b}, the
procedure in~\cite{BKR+10} is extended to cope with the theory of
arrays without extensionality by axiom instantiation and interpolation
in the combination of Presburger Arithmetic and uninterpreted
function.  Quantifiers can occur in the interpolants returned by the
procedure.  Recently~\cite{frocos11}, we have proposed a
quantifier-free interpolation solver for \AXDIFF when combined with
integer difference logic over indexes.

\subsection{Conclusions and Future Work} 


We believe that the procedure proposed in this paper is a significant
step forward to make model-checking more widely applicable to programs
whose properties depend crucially on the manipulations of arrays.  
To the best of our knowledge, in fact, our interpolation procedure is
the first to compute quantifier-free interpolants for a natural
variant of the theory of arrays with extensionality obtained by
replacing the extensionality axiom with its Skolemization.  This
variant is `natural' in the sense that it is sufficient to detect
unsatisfiability of formulae as it is usually the case in standard
model checking methods for infinite state systems.  

Despite the work reported in this paper is a significant step forward
in widening the scope of applicability of interpolation in model
checking of array manipulating programs, we discuss some interesting
directions for further work.

The implementation of the interpolating procedure proposed here is
crucial for showing the practical viability of our approach.  In this
respect, the first step is to implement the satisfiability solver in
Section~\ref{sec:solver}.  Recall that this requires guessing, a
pre-processing phase, and Gaussian completion phase.  Guessing, as
already observed in Section~\ref{subsec:related-satisfiability} item
\textbf{(i)} when discussing the relationship with the solver
of~\cite{levitt}, can be implemented by adapting the mechanism to
handle arrangements when combining satisfiability procedures in the
Delayed Theory Combination approach of~\cite{BBC+05c}.  The main
advantage of this approach is to use state-of-the-art SAT techniques
to efficiently enumerate all possible partitions of indexes.  The
pre-processing phase can be implemented by using the data structures
and basic expression manipulating procedures available in many
state-of-the-art SMT solvers.  The Gaussian completion phase requires
more effort but it can adapt and reuse well-known techniques developed
in rewriting for completion procedures (see, e.g.,~\cite{rewriting}).
The second step to build the interpolating procedure of
Section~\ref{sec:isolver} is to implement the interpolating metarules
of Table~\ref{tab:metarules}.  This is relatively simple and does not
require much ingenuity and can be done on top of the existing
infrastructure for proof generation that is available in many
state-of-the-art SMT solvers.

We are currently developing an implementation of the procedure
presented here in the SMT-solver OpenSMT~\cite{BPS+10}.  Preliminary
experiments are encouraging although a more extensive experimental
evaluation is needed.  
In fact, it is well-known that the convergence of interpolation based
model checking procedures crucially depends on the ``quality'' of the
computed interpolants.  There have been attempts (see,
e.g.,~\cite{jhala,quantified-mcmillan}) to build interpolating
procedures that return ``high quality'' interpolants that guarantee
the convergence of model checking for valid properties.  Recently, it
has been observed~\cite{interp-strength,mcmillan-z3-interpolation}
that a certain degree of flexibility for tuning the computation of
interpolants in interpolation procedures would be desirable to
facilate their integration in model checking.  In this respect, it
would be particularly interesting to investigate how the order in
which the interpolating metarules of Table~\ref{tab:metarules} are
applied, particularly those on \abcommon{} terms, may influence the
``quality'' of the interpolants.  An interesting alternative to
investigate the flexibility of generating interpolants (suggested
in~\cite{mcmillan-z3-interpolation}) would be to use the procedure
presented here in the framework for computing quantified interpolants
of~\cite{mcmillan-z3-interpolation}.

Finally, there are two more interesting points that deserve further
investigations.  First, it would be interesting to study the size of
the interpolating metarules refutations and compare them with
interpolating procedures based on a proof calculus.  The preliminary
experiments with our implementation of the procedure in Open SMT show
that our refutations are quite compact but a more systematic
comparison with available procedures based on a proof calculus,
e.g.,~\cite{McM05} is needed to clarify this issue.  Second, since in
model checking it is useful to compute interpolants for several
partitions of the same (unsatisfiable) formula, it would be
interesting to design a method that permit the partial reuse of the
interpolants returned for a partition to compute the interpolant for
the next one so as to permit reuse and avoid degradation of
performances due to partial recomputation of parts of interpolating
metarules refutation.  In this respect, it seems possible to adapt
techniques developed for computing chains of interpolants
in~\cite{iprincess2011}.

\ \\

\noindent \emph{Acknowledgements}. 
We wish to thank two anonymous referees for their comments on a draft
of this paper and an anonymous referee of RTA'11 for the criticisms
that helped improving the presentation.

The work of the third author was partially supported by the
``Automated Security Analysis of Identity and Access Management
Systems (SIAM)'' project funded by Provincia Autonoma di Trento in the
context of the ``team 2009 - Incoming'' COFUND action of the European
Commission (FP7) and the FP7-ICT-2007-1 Project no.~216471.

\bibliographystyle{plain}
\bibliography{brutt,ghila,ranis}

\newpage

\appendix

\newpage

\section{Proof of Theorem~\ref{thm:interpolation-amalgamation}}
\label{app:appknown}

\ \\

\noindent\textbf{Theorem~\ref{thm:interpolation-amalgamation}\cite{amalgam}}
\emph{ 
  Let $T$ be universal. Then, $T$ admits quantifier-free
  interpolation iff $T$ has the amalgamation property.  
}
\vskip 2mm

\begin{proof}
\emph{Suppose first that $T$ has amalgamation}; let $A, B$ be
quantifier-free formulae such that $A\wedge B$ is not
$T$-satisfiable. Let us replace variables with free constants in $A,
B$; let us call $\Sigma^A$ the signature $\Sigma$ of $T$ expanded with the
free constants from $A$ and $\Sigma^B$ the signature $\Sigma$ expanded
with the free constants from $B$ (we put $\Sigma^C \coincide \Sigma^A\cap
\Sigma^B$). For reductio, suppose that there is no ground formula $C$
such that: (a) $A$ $T$-entails $C$; (b) $C\wedge B$ is
$T$-unsatisfiable; (c) only free constants from $\Sigma^C$ occur in
$C$.

As a first step, we build a maximal $T$-consistent set $\Gamma$ of
ground $\Sigma^A$-formulae and a maximal $T$-consistent set
$\Delta$ of ground $\Sigma^B$-formulae such that $A\in
\Gamma$, $B\in \Delta$, and $\Gamma\cap \Sigma^C = \Delta \cap
\Sigma^C$.\footnote{ By abuse, we use $\Sigma^C$ to indicate not only
  the signature $\Sigma^C$ but also the set of formulae in the
  signature $\Sigma^C$.  } For simplicity\footnote{ This is just to
  avoid a (straightforward indeed) transfinite induction argument.}
let us assume that $\Sigma$ is at most countable, so that we can fix
two enumerations
$$
A_1, A_2, \dots \qquad B_1, B_2, \dots
$$ 
of ground $\Sigma^A$- and $\Sigma^B$-formulae,
respectively. We build inductively $\Gamma_n, \Delta_n$ such that for
every $n$ (i) $\Gamma_n$ contains either $A_n$ or $\neg A_n$; (ii)
$\Delta_n$ contains either $B_n$ or $\neg B_n$; (iii) there is no
ground $\Sigma^C$-formula $C$ such that $\Gamma_n \cup
\{\neg C\}$ and $\Delta_n\cup \{C\}$ are not $T$-consistent.  Once
this is done, we can get our $\Gamma, \Delta$ as $\Gamma:=\bigcup
\Gamma_n$ and $\Delta := \bigcup \Delta_n$.

We let $\Gamma_0$ be $\{ A\}$ and $\Delta_0$ be $\{B\}$ (notice that
(iii) holds by (a)-(b)-(c) above). To build $\Gamma_{n+1}$ we have two
possibilities, namely $\Gamma_n \cup \{ A_n\}$ and $\Gamma_n \cup \{
\neg A_n\}$. Suppose they are both unsuitable because there are $C_1,
C_2\in  \Sigma^C$ such that the sets
$$
\Gamma_n \cup \{ A_n,\neg C_1\}, \quad \Delta_n\cup \{C_1\}, \quad \Gamma_n \cup \{\neg A_n, \neg C_2\}, \quad \Delta_n\cup \{C_2\}
$$
are all $T$-inconsistent. If we put $C:= C_1\vee C_2$, we get that $\Gamma_n \cup \{\neg C\}$ and $\Delta_n\cup \{C\}$ are not $T$-consistent, contrary to induction hypothesis. A similar argument
shows that we can also build $\Delta_n$.

Let now $\cM_1$ be a model of $\Gamma$ and $\cM_2$ be a model of
$\Delta$. Consider the substructures $\cN_1, \cN_2$ of $\cM_1, \cM_2$
generated by the interpretations of the constants from $\Sigma^C$:
since the related diagrams are the same
(because  $\Gamma\cap \Sigma^C = \Delta \cap \Sigma^C$), we have that $\cN_1$ and $\cN_2$ are $\Sigma_C$-isomorphic.
 Up to renaming, we can suppose that $\cN_1$ and $\cN_2$ are just the same substructure (let us  call it $\cN$ for short). Since the theory $T$ is universal and truth of universal sentences is preserved by substructures,
we have that  $\cN$ is a model of $T$. By the amalgamation property, there is a $T$-amalgam $\cM$ of $\cM_1$ and $\cM_2$ over $\cN$. Now $A, B$ are ground formulae true in $\cM_1$ and $\cM_2$, respectively,
hence they are both true in $\cM$, which is impossible because $A\wedge B$ was assumed to be $T$-inconsistent.

\vskip 2mm
\emph{Suppose now that $T$ has quantifier free interpolants}. Take two models $\cM_1=(M_1, {\mathcal I}_1)$ and $ \cM_2=(M_2, {\mathcal I}_2)$ of $T$ sharing a substructure $\cN=(N, {\mathcal J})$. In order
to show that a $T$-amalgam of $\cM_1, \cM_2$ over $\cN$ exists, it is sufficient
(by Robinson Diagram Lemma~\ref{lem:robinson}) to show that $\delta_{\cM_1}(M_1)\cup \delta_{\cM_2}(M_2)$ is $T$-consistent. If it is not, by the compactness theorem of first order logic, there exist a
$\Sigma\cup {M_1}$-ground sentence $A$ and a $\Sigma\cup {M_2}$-ground sentence $B$ such that (i) $A\wedge B$ is $T$-inconsistent; (ii) $A$ is a conjunction of literals from  $\delta_{\cM_1}(M_1)$; (iii)
$B$ is a conjunction of literals from  $\delta_{\cM_2}(M_2)$.
By the existence of quantifier-free interpolants, taking free constants instead of variables, we get that there exists a ground $\Sigma\cup N$-sentence $C$ such that $A$ $T$-entails $C$ and $B\wedge C$ is $T$-inconsistent. The former fact yields that $C$ is true in $\cM_1$ and hence also
in $\cN$ and in $\cM_2$,
because $C$
is ground.
 However, the fact that $C$ is true in $\cM_2$ contradicts the fact that $B\wedge C$ is $T$-inconsistent.
\end{proof}

\end{document}